\keywords{verification  \and decidability \and coverability \and termination \and well-quasi-ordering}
\theoremstyle{plain} 
\newenvironment{procedure}[1][htb]{%
	\renewcommand{\ALG@name}{Procedure}
	\begin{algorithm}[#1]%
	}{\end{algorithm}}
\begin{document}
		
		\title[Branch-WSTS]{Branch-Well-Structured Transition\texorpdfstring{\\}{} Systems and Extensions}
		
		\author[B.~Bollig]{Benedikt Bollig\lmcsorcid{0000-0003-0985-6115}}[a]
		\author[A.~Finkel]{Alain Finkel\lmcsorcid{0000-0003-0702-3232}}[a,b]
		\author[A.~Suresh]{Amrita Suresh\lmcsorcid{0000-0001-6819-9093}}[a,c]
		
		\address{Universit\'e Paris-Saclay, ENS Paris-Saclay, CNRS, LMF, 91190, Gif-sur-Yvette, France}	
		\email{\{benedikt.bollig,~alain.finkel\}@ens-paris-saclay.fr}  
		
		\address{Institut Universitaire de France}	
		\address{University of Oxford, United Kingdom}
		\email{amrita.suresh@cs.ox.ac.uk}
		
		
		
		
		
		\begin{abstract}
			\noindent We propose a relaxation to the definition of well-structured transition systems (\WSTS) while retaining the decidability of boundedness and non-termination. In this class, the well-quasi-ordered (\wqo) condition is relaxed such that it is applicable only between states that are reachable one from another. Furthermore, the monotony condition is relaxed in the same way. While this retains the decidability of non-termination and boundedness, it appears that the coverability problem is undecidable. To this end, we define a new notion of monotony, called cover-monotony, which is strictly more general than the usual monotony and still allows us to decide a restricted form of the coverability problem. 
		\end{abstract}
		
	\maketitle

		
		\section{Introduction}
		A well-structured transition system (\WSTS), initially called structured transition system in \cite{finkel_generalization_1987, finkel_reduction_1990}, is a mathematical framework used to model and analyse the behaviour of concurrent and reactive systems. It provides a formal representation of the system's states, and the transitions between them. The key characteristic of a \WSTS is that it exhibits certain properties that enable algorithmic verification of important system characteristics, such as non-termination, boundedness, and coverability.

		In a \WSTS, the system is represented as a (potentially infinite) set of \emph{states}, denoted by $\stateset$ equipped with a \emph{transition relation} ${\tranrel} \subseteq \stateset \times \stateset$. Here, $\stateset$ may represent different configurations, states of a computation, or states of a concurrent program, depending on the specific context of the system being modelled. Furthermore, the states in $\stateset$ are partially ordered by a quasi-ordering relation, denoted by $\leq$, and the transition relation $\tranrel$ fulfils one of various possible monotonies with respect to $\leq$. To be considered a well-structured transition system, the partial order $\leq$ on the set of states $\stateset$ must be \emph{well}, i.e. well-founded with no infinite antichains (Section~\ref{sec:prelim} formally defines these notions).
		
		The theory of WSTS, although initially formulated in the late '80s \cite{finkel_generalization_1987}, is still being widely explored (refer to \cite{finkel_well-structured_2001,abdulla_algorithmic_2000, ABDULLA2011248} for surveys). Some notable recent developments in the theory include handling infinite branching \cite{blondin_handlin_2018}, ideal abstraction \cite{zufferey_ideal_2012}, extensions over tree behaviours \cite{schmitz_branching_2021,lazic_nonelementary_2015}, etc. Because they provide a powerful framework for reasoning about infinite-state systems and enable the development of effective algorithms for automatic verification, WSTS have found applications in various areas, including classically, Petri Nets and Lossy Channel Systems, but also recent explorations in graph transformation systems \cite{ozkan_decidability_2022}, program verification 
		\cite{finkel_verification_2020}, $\mu$-calculus \cite{shilov_well-structured_2007, kouzmin_model_2004, bertrand_computable_2013}, cryptographic protocols \cite{dosualdo_decidable_2020}, Presburger counter machines \cite{finkel_well_2019}, among others.
		
		Since its development, a number of relaxations and extensions of the two core assumptions of WSTS have been studied. Notably, with regards to the monotony, the notion of transitive, stuttering monotonies has been studied \cite{finkel_well-structured_2001}. With regards to the well-quasi-order assumption, \cite{blondin_well_2017} has shown that coverability is still decidable without the assumption of wellness, with only the absence of infinite antichains. They refer to this class as \emph{well-behaved transition systems} (\WBTS). However, in the same paper, they also show that this relaxation is too broad for the decidability of boundedness and non-termination, and show classes of \WBTS with undecidable boundedness and non-termination properties. 
		
		\smallskip
		In this work, we look at relaxing both the well-quasi-order (\wqo) and the monotony assumptions, while still ensuring decidable non-termination and boundedness. More precisely, we introduce the notion of \emph{branch-well-quasi-order} (branch-\wqo), which is only applicable to states that are reachable one from another. Interestingly, branch-\wqo share many of the closure properties that \wqo enjoy. Similarly, we also relax the notion of monotony to \emph{branch-monotony}, which requires only that the same sequence of actions can be repeated from two states ordered by $\leq$. We call the systems which satisfy these two conditions \emph{branch-\WSTS}. Branch-WSTS, though a strict superclass of WSTS, still has decidable boundedness and non-termination, under the same effectivity conditions.
		
		We show examples of classes of systems studied in the literature, which are not WSTS, but are branch-WSTS. Notably, there are non-trivial subclasses in both counter machines and FIFO automata (two classes that are well-known to be Turing-hard) which are not WSTS but are branch-WSTS. Moreover, we show that coverability is, in fact, undecidable for general branch-WSTS. This shows that WBTS and branch-WSTS are incomparable. Furthermore, we show that the properties of branch-monotony can be verified along a single run in order to show non-termination and unboundedness. This provides an algorithm to verify these properties for a large subclass of counter machines and branch-\wqo FIFO machines. We also see that for FIFO machines and the prefix-ordering, the conditions that relate to the validity of branch-monotony along a single run imply infinite iterability, as defined in \cite{finkel_verification_2020}.
		
		The other major contribution of this work deals with coverability. Contrary to \cite{blondin_well_2017}, we relax the notion of monotony, which we call \emph{cover-monotony}, while maintaining the well-quasi-order assumption, such that the resulting class of systems have decidable coverability. For this class, which we refer to as \emph{cover-WSTS}, we show that a restricted version of the coverability problem is decidable, even in the absence of strong (or strict or transitive or reflexive) monotony. 
	
\smallskip	
		A preliminary version of this work has been presented at the 42nd International Conference on Formal Techniques for Distributed Objects, Components, and Systems (FORTE 2022). The contributions in this work extend the conference version as follows:
		This work further investigates the classes of branch-WSTS. We explicitly prove that counter machines with restricted zero tests are branch-monotone. Furthermore, in this work, we prove the conjecture that was stated in the conference version claiming normalized input-bounded FIFO machines are branch-WSTS. We also provide a sufficient condition for boundedness and non-termination for a large subclass of counter machines and FIFO machines, which strictly include branch-\WSTS. Apart from this, we also provide explicit proofs and examples of systems which were not included in the conference version, along with some additional properties, notably for branch-\wqos. 
		
		\smallskip
		\noindent\textbf{Outline.}~ Section~\ref{sec:prelim} introduces terminology and well-known results concerning well-quasi-orderings and well-structured transition systems. Section~\ref{sec:termbound} defines branch-WSTS, and shows that both the boundedness and the non-termination problems are decidable for such systems.
		Section~\ref{sec:examples} provides some examples of branch-\WSTS as well as provides a sufficient condition for non-termination for a large class of counter and FIFO machines (strictly including branch-\WSTS).
		Section~\ref{sec: coverability} investigates the coverability problem for WSTS with relaxed conditions. We conclude in Section~\ref{sec:concl}.	
		
		\section{Preliminaries} \label{sec:prelim}
		
		We denote the set of natural numbers by $\nat$. We denote the standard ordering on $\nat$ by $\leq$. Given a set $A$, we denote by $|A|$ the number of distinct elements in $A$.
		
		Let $\alphset$ be a finite alphabet, and $\alphset^*$ be the set of finite words over $\alphset$.
		We denote the empty word by $\emptyword$. The concatenation of two words $u, v \in \alphset$ is denoted by $u \cdot v$, or $u.v$, or simply $uv$. Given $a \in \alphset$ and $w \in \alphset^*$, we let $|w|_a$ denote the number of occurrences of $a$ in $w$. With this, we let $\Alphs{w} = \{a \in \alphset \mid |w|_a \geq 1\}$. A word $u \in \alphset^*$ is a prefix (resp. suffix) of $w \in \alphset^*$ if $w=u\cdot v$ (resp. $w=v\cdot u$) for some $v\in \alphset^*$. The sets of prefixes and suffixes of $w \in \alphset^*$ are denoted by $\Pref{w}$ and $\Suf{w}$, respectively. We denote the prefix ordering on words over an alphabet $\alphset$ by $\pref$. More specifically, for two words $u, w \in \alphset^*$, we say $u \pref w$ if $u$ is a prefix of $w$.
		
		\subsection{Well-structured transition systems}
		We define some preliminary notions associated with well-structured transition systems.
		
		\smallskip
		\noindent\textbf{Quasi-orderings.}~
		Let $\stateset$ be a set and ${\leq} \subseteq \stateset \times \stateset$ be a binary relation over $\stateset$, which we also write as $(\stateset,\leq)$. We call $\leq$ a \emph{quasi ordering} (qo) if it is reflexive and transitive. As usual, we call $\leq$ a \emph{partial ordering} if it is a quasi-ordering and anti-symmetric (if $x \leq y$ and $y \leq x$, then $x = y$). We write $x<y$ if $x \leq y$ and $y \not\leq x$. If $\leq$ is a partial ordering, $x<y$ is then equivalent to $x \leq y$ and $x \neq y$.
		
		To any $x \in \stateset$, we associate the sets $\upclose x \defeq \{y \mid x \leq y\}$ and $\downclose x \defeq \{y \mid y \leq x\}$. Moreover, for $A \subseteq \stateset$, we let $\upclose A \defeq \bigcup_{x \in A} \upclose x$ and $\downclose A \defeq \bigcup_{x \in A} \downclose x$. We say that a set $A$ is \emph{upward closed} if $A = \upclose A$. Similarly, $A$ is \emph{downward closed} if $A = \downclose A$. A \emph{basis} of an upward-closed set $A$ is a set $B \subseteq \stateset$ such that $A = \upclose B$. 
		
		We say $(\stateset,\leq)$ is \emph{well-founded} if there is no infinite strictly decreasing sequence 
		of elements of $\stateset$. An \emph{antichain} is
		a subset $A \subseteq \stateset$ such that any two distinct elements in the subset are incomparable, \ie, for
		every distinct $x, y \in A$, we have $x \not\leq y$ and $y \not\leq x$. 
		For example, consider the alphabet $\alphset = \{a,b\}$. There exists an infinite antichain $\{b, ab, aab, ...\}$ with respect to the prefix ordering over $\alphset^\ast$.

		An \emph{ideal} is a downward-closed set $I \subseteq X$
		that is also \emph{directed}, \ie, it is non-empty and, for every $x, y
		\in I$, there exists $z \in I$ such that $x \leq z$ and $y \leq
		z$. The set of ideals is denoted by $\ideals(X)$.

		\smallskip
		\noindent\textbf{Well-quasi-orderings.}~ For the following definitions, we fix a \qo $(\stateset, \leq)$. When a \qo satisfies some additional property, we call it a well-quasi-ordering:
		
		\begin{defi}
			A \emph{well-quasi-ordering} (\wqo) is a \qo $(\stateset,\leq)$ such that every infinite sequence $\initstate, x_1, x_2, \ldots$ over $\stateset$ contains an \emph{increasing pair}, i.e.,
			there are $i < j$ such that $x_i \leq x_j$.
		\end{defi}
		
		For example, the set of natural numbers $\nat$, along with the standard ordering $\leq$, is a \wqo. Moreover, $(\nat^k, \leq)$, \ie, the set of vectors of $k \geq 1$ natural numbers with component-wise ordering, is a \wqo \cite{dickson_finiteness_1913}. On the other hand, the prefix ordering on words over an alphabet $\alphset$ is not a \wqo (if $|\alphset| > 1$) since it contains infinite antichains: in the infinite sequence $b, ab, a^2b, a^3b,...a^nb,...$, we have  $a^ib \not\pref a^jb$ for all $i<j$ and $i,j \in \nat$.
		
		In general, for a \qo, upward-closed sets do not necessarily have a \emph{finite} basis. However, from \cite{higman_ordering_1952}, we know that every upward-closed set in a {\wqo} has a finite basis.
		
		We have the following equivalent characterisation of \wqos.

		\begin{propC}[\cite{erdos_partition_1956,schmitz_algorithmic_2012}]\label{erdos2} The following are equivalent, given a \qo $(\stateset,\leq)$:
			\begin{enumerate}
				\item $(\stateset,\leq)$ is a \wqo.
				\item Every infinite sequence in $\stateset$ has an infinite increasing subsequence with respect to $\leq$.
				\item Every upward-closed non-empty subset in $\stateset$ has a finite basis.
				\item $(\stateset,\leq)$ is well-founded and contains no infinite antichain.
			\end{enumerate}
		
		\end{propC}

		On the other hand, downward-closed subsets of \qos enjoy the following property:
		
		\begin{propC}[\cite{erdos_partition_1956}]\label{erdos}
			A \qo $(\stateset,\leq)$ contains no infinite antichain \ifff every down\-ward-closed set decomposes into a finite union of ideals.
		
		\end{propC}
		
		 The above proposition is useful for designing the forward coverability algorithm described in Section~\ref{sec:dec-probs}.
		 
		 \smallskip
		 Next, we introduce transition systems:
		
		\noindent\textbf{Transition systems.}~
		A \emph{transition system} is a pair $\system = (\stateset,\srun)$ where $\stateset$ is the set of states and ${\srun} \subseteq \stateset \times \stateset$ is the transition relation. We write $x \srun{} y$ for $(x, y) \in {\srun}$. Moreover, we let $\srunp{*}$ be the transitive and reflexive closure of the relation~$\srun$, and $\srunp{+}$ be the transitive closure of $\srun$.
		
		Given a state $x \in \stateset$, we write $\Post_\system(x) = \{y\in \stateset \mid x \srun{} y\}$ for the set of immediate successors of $x$. Similarly, $\Pre_\system(x) = \{y\in \stateset \mid y \srun{} x\}$ denotes the set of its immediate predecessors.
		
		We call $\system$ \emph{finitely branching} if, for all $x \in \stateset$, the set $\Post_\system(x)$ is finite. The \emph{reachability set} of $\system$ from $x \in \stateset$ is defined as $\Post_\system^*(x) = \{y \in \stateset \mid x \srunp{*} y\}$. Note that when $\system$ is clear from the context, we may drop the subscript and write, e.g., $\Post^*(x)$. We say that a state $y$ is reachable from $x$ if $y \in \Post^*(x)$.  
		
		We recall that the \emph{reachability tree} from an initial state $\initstate$ in $\system$ is a tree with a root node labelled by $\initstate$. Then, for all $y$ such that $\initstate \srun y$, we add a vertex labelled with $y$ and add an edge from the root node to the node labelled with $y$. We then compute $\Post(y)$ and once again add vertices labelled with each state in $\Post(y)$. We repeat this for every vertex along the tree. Note that we can have multiple vertices labelled with the same state, and moreover, the reachability tree can be infinite even if the reachability set from the initial state is not.
		
		A \emph{(well-)ordered transition system} is a triple $\system = (\stateset, \srun, \leq)$ consisting of a transition system $( \stateset,\srun)$ equipped with a \qo (resp., \wqo) $(\stateset, \leq)$.
		An ordered transition system $\system = (\stateset, \srun, \leq)$ is \emph{effective} if $\leq$ and $ \srun$ are decidable. We say that a state $y$ is \emph{coverable} from $x$ if $y \in \downclose \Post^*(x)$.

		\begin{defiC}[\cite{finkel_reduction_1990}]
			A \emph{well-structured transition system (\WSTS)} is a well-ordered transition system $\system = (\stateset, \srun, \leq)$ that satisfies (general) \emph{monotony}: for all $x, y, x' \in \stateset$, we have:
			$
			x \leq y \land x \srun x' \implies \exists y' \in \stateset\textup{: } x' \leq y' \land y \srunp{*} y'. 
			$
		\end{defiC}
		
		We define other types of monotony. We say that a well-ordered transition system $\system = (\stateset, \srun, \leq)$ satisfies  \emph{strong monotony} (resp., \emph{transitive monotony}) if, for all $x,y,x' \in \stateset$ such that $x \leq y$ and $x \srun x'$, there is $y' \in \stateset$ such that $x' \leq y'$ and $y \srun y'$ (resp., $y \srunp{+} y'$). The transition system $\system$ satisfies \emph{strict monotony} if, for all $x,y,x' \in \stateset$ such that $x < y$ and $x \srun x'$, there is $y' \in \stateset$ such that $x' < y'$ and $y \srun y'$.\\
		
		\subsection{Decision problems for transition systems.} \label{subsec:decprobs} \label{sec:dec-probs} We recall the following well-known decision problems.

		\begin{defi}\label{def:verification-problems}
			Given an ordered transition system $\system = (\stateset, \srun, \leq)$ and an initial state $\initstate \in \stateset$:
			\begin{itemize}
				\item \emph{The non-termination problem}: 
				Is there an infinite sequence of states $x_1, x_2,\ldots$ such that $\initstate \srun x_1  \srun x_2 \srun \ldots$ ?
				
				\item \emph{The boundedness problem}: Is $\Post^*_\system(\initstate)$ finite?
				
				\item  \emph{The coverability problem}: Given a state $y \in \stateset$, is $y$ coverable from $\initstate$? 
				
			\end{itemize}
			
		\end{defi}
		
		It is folklore \cite{finkel_reduction_1990,finkel_well-structured_2001} that non-termination is decidable for finitely branching \WSTS with transitive monotony, and that boundedness is decidable for finitely 
		branching \WSTS $\system = (\stateset, \srun, \leq)$ where  $\leq$ is a partial ordering and $ \srun$ is strictly monotone. In both cases, we suppose that the \WSTS are effective and that $\Post(x)$ is computable for all $x \in \stateset$. 
		
		\smallskip
		\noindent\textbf{Coverability problem.}~ We recall that coverability is decidable for a large class of \WSTS:

		\begin{thmC}[\cite{finkel_well-structured_2001,abdulla_algorithmic_2000}]\label{thm:finitebasiscov}
			The coverability problem is decidable for effective \WSTS $\system = (\stateset, \srun, \leq)$ equipped with an algorithm that, for all finite subsets $I \subseteq \stateset$, computes a finite basis $\pb(I)$ of $\upclose \Pre(\upclose I)$.
		\end{thmC}
		
		Assume $\system = (\stateset, \srun, \leq)$ is a \WSTS and $x \in \stateset$ is a state. The \emph{backward coverability algorithm} involves computing  (a finite basis of) $\Pre^*(\upclose x)$ as the limit of the infinite increasing sequence $\upclose I_0 \subseteq \upclose I_1 \subseteq \ldots$ where $I_0= \{x\}$ and $I_{n+1} \defeq I_n \cup \pb(I_n)$. Since there exists an integer $k$ such that $\upclose I_{k+1} = \upclose I_k$, the finite set $I_k$ is computable (one may test, for all $n$, whether $\upclose I_{n+1} = \upclose I_n$) and $I_k$ is then a finite basis of  $\Pre^*(\upclose x)$ so one deduces that coverability is decidable.
		
		Coverability can also be decided using the \emph{forward coverability algorithm} that relies on two semi-decision procedures (as described below). It relies on Proposition~\ref{erdos} and enumerates finite unions of ideals composing inductive invariants. It shows that the ``no infinite antichain" property is sufficient to decide coverability; hence, the \wqo hypothesis is not necessary. It applies to the class of \emph{well-behaved transition systems}, which are more general than \WSTSs.  A well-behaved transition system (\WBTS) is an ordered transition system $\system = (\stateset, \srun, \leq)$ with monotony such that $(\stateset, \leq)$
		contains no infinite antichain. We describe effectiveness hypotheses that allow
		manipulating downward-closed sets in \WBTSs.

		\begin{defiC}[\cite{blondin_well_2017}]
			A class $C$ of \WBTSs is \emph{ideally effective} if, given
			${\system = (\stateset, \srun{}, \leq) \in C}$,
			\begin{itemize}
				\item the set of encodings of $\ideals(\stateset)$ is recursive,
				\item the function mapping the encoding of a state $x \in \stateset$ to the
				encoding of the ideal $\downclose{x} \in \ideals(\stateset)$ is computable;
				\item inclusion of ideals of $\stateset$ is decidable;
				\item the downward closure $\downclose{\Post(I)}$ expressed as a finite
				union of ideals is computable from the ideal $I \in \ideals(\stateset)$.
			\end{itemize}
		\end{defiC}

		\begin{thmC}[\cite{blondin_well_2017}]
			The coverability problem is decidable for ideally effective \WBTS.
		\end{thmC}
		
		The result is derived from the design of two semi-decision procedures  where downward-closed sets are represented by their finite decomposition in ideals, and this is effective. Procedure \ref{proc:cov} checks for coverability of $y$ from $\initstate$, by recursively computing $\downclose \initstate$, $\downclose(\downclose \initstate \cup \Post(\downclose \initstate))$ and so on. This procedure terminates only if $y$ belongs to one of these sets. Hence, it terminates if $y$ is coverable.

		\begin{procedure}[!ht]
			\caption{: Checks for coverability}
			\textbf{input:}  $\system = (\stateset, \srun, \leq)$ and $\initstate,y$
			\begin{algorithmic} 
				\STATE $D  := \downclose \initstate$
				\WHILE{ $y \notin D$ }
				\STATE	 $D := \downclose (D \cup \Post_\system(D))$
				\ENDWHILE
				\RETURN  ``$y$ is coverable from $\initstate$"
			\end{algorithmic} \label{proc:cov}
		\end{procedure}
		
		Hence, we deduce:
		
		\begin{propC}[\cite{blondin_well_2017}]
			For an ideally effective \WBTS $\system = (\stateset,\srun, \leq)$, an initial state $\initstate$, and a state $y$, Procedure~1 terminates \ifff $y$ is coverable from $\initstate$.
		\end{propC}
		
		Procedure \ref{proc:noncov} enumerates all downward-closed subsets (by means of their finite decomposition in ideals) in some fixed order $D_1, D_2, \ldots$ We remark that this enumeration is effective since $\system$ is ideally effective. Furthermore, it checks for all $i$, $D_i \subseteq \stateset$ and $\downclose \Post(D_i) \subseteq D_i$, and if such a set $D_i$ contains $\initstate$ but not $y$. If it does contain $\initstate$, it is an over-approximation of $\Post^*( \initstate)$. Hence, if there is such a set $D_i$ with $\initstate \in D_i$ but $y \notin D_i$, it is a certificate of non-coverability. Moreover, this procedure terminates if $y$ is non-coverable because $\downclose \Post^*(\initstate)$ is such a set, and hence, will eventually be found.
		
		\begin{procedure}[!ht]
			\caption{: Checks for non-coverability}
			\textbf{input:}  $\system = (\stateset, \srun, \leq)$ and $\initstate,y$
			\begin{algorithmic} 
				\STATE$\textbf{enumerate} \text{ downward-closed sets } D_1, D_2, \ldots$ 
				\STATE $i : = 1$
				\WHILE{ $\neg (\downclose \Post(D_i) \subseteq D_i \text{ } \AND \text{ } \initstate \in D_i \text{ } \AND  \text{ } y \notin D_i)$}
				\STATE	 $i := i+1$
				\ENDWHILE
				\RETURN\FALSE
			\end{algorithmic} \label{proc:noncov}
		\end{procedure}  
		
		Therefore, we have:
		
		\begin{propC}[\cite{blondin_well_2017}]
			For a \WBTS $\system = (\stateset,\srun, \leq)$, an initial state $\initstate$ and a state $y$, Procedure 2 terminates \ifff $y$ is not coverable from $\initstate$.
		\end{propC}
		
		\subsection{Labelled transition systems} 
		Next, we define labelled transition systems, which are transition systems where the transitions are equipped with labels.
		
			A \emph{labelled transition system (\LTS)} is a tuple $\system = {(\stateset, \labelset, \srun, \initstate)}$ where $\stateset$ is the set of states, $\labelset$ is the finite action alphabet, ${\srun} \subseteq {\stateset \times \labelset \times \stateset}$ is the transition relation, and $\initstate \in \stateset$ is the initial state.
		
		\begin{defi}
			A \emph{quasi-ordered labelled transition system (\OLTS)} is defined as a tuple $\system = (\stateset, \labelset, \srun, \leq, \initstate)$ where $(\stateset, \labelset, \srun, \initstate)$ is an \LTS and $(\stateset, \leq)$ is a \qo.
		\end{defi}
		
		In the case of an \LTS or \OLTS, we write $x \srunp{a} x'$ instead of $(x,a,x') \in {\to}$.
		For $\sigma \in \labelset^\ast$,
		$x \srunp{\sigma} x'$ is defined as expected.
		We also let
		$x \srun x'$ if $(x,a,x') \in {\to}$ for some $a \in \labelset$,
		with closures $\srunp{*}$ and $\srunp{+}$. We let $\Traces{\system}=\{ w \in \labelset \mid \initstate \srunp{w} x \text{ for some } x \in \stateset \}$ be the set of traces.
		
		We call an \OLTS $\system$ \emph{effective} if $\leq$ and, for all $a \in \labelset$, $\srunp{a}$ are decidable.

		\begin{rem} We can similarly define a labelled WSTS as an \OLTS such that the ordering is well and it satisfies the general monotony condition (canonically adapted to take care of the transition labels).

			Moreover, we lift the decision problems from Definition~\ref{def:verification-problems} to \OLTS in the obvious way. 
		\end{rem}
		
		\subsection{Classes of labelled transition systems}
		
		In this paper, we mainly study subclasses of two well-known classes of transition systems, namely counter machines and FIFO machines. Both these classes are known to simulate Turing machines and, hence, have undecidable boundedness and non-termination.
		
		\smallskip
		\noindent\textbf{Counter machines.}~ Counter machines, also known as Minsky machines, are finite-state machines that manipulate counters, which are variables that store non-negative integers. Transitions of a counter machine, besides changing control-states, may also perform a specified operation on a counter: increment by one or decrement by one, along with a set of counters to be tested for zero. We formally define a counter machine below.
		
		\begin{defi}
			A \emph{counter machine} (with zero tests) is a tuple
			$\countermachine =(Q,\counterset , T, q_0)$.
			Here, $Q$ is the finite set of \emph{control-states}
			and $q_0 \in Q$ is the \emph{initial control-state}.
			Moreover, $\counterset$ is a finite set of \emph{counters}
			and $T \subseteq Q \times \Action_\countermachine \times Q$
			is the transition relation where $\Action_\countermachine =
			\{\inc{\counter},\dec{\counter}, \noop \mid \counter \in \counterset\} \times 2^\counterset$
			(an element of $2^\counterset$ will indicate the set of counters to be tested to $0$).
		\end{defi}
		
		The counter machine $\countermachine$ induces an LTS
		$\system_\countermachine = (X_\countermachine,\Action_\countermachine,\srun_\countermachine, x_0)$
		with set of states $X_\countermachine = Q \times \mathbb{N}^\counterset$.
		In $(q,\counterval) \in X_\countermachine$, the first component $q$ is the current control-state and
		$\counterval = (\counterval_{\counter})_{\counter \in \counterset}$
		represents the counter values.
		The initial state is then $x_0 = (q_0, \counterval_0)$ with $\counterval_0 = (0, 0, \ldots, 0)$. 
		For $op \in \{\mathsf{inc},\mathsf{dec}\}$, $\counter \in \counterset$, and $Z \subseteq \counterset$ ($Z$ is the set of counters tested for zero), there is a transition
		$
		(q,\counterval) \srunp{op(\counter), Z}_\countermachine (q',\countervalp)
		$
		if $(q, (op(\counter), Z), q') \in T$,
		$\counterval_{\counterb} = 0$ for all $\counterb \in Z$,
		$\countervalp_{\counter} = \counterval_{\counter} + 1$ if $op = \mathsf{inc}$ and
		$\countervalp_{\counter} = \counterval_{\counter} - 1$ if $op = \mathsf{dec}$, and
		$\countervalp_{\counterb} = \counterval_{\counterb}$ for all $\counterb \in \counterset \setminus 
		\{\counter\}$. 
		
		For $op = \noop$, and $Z \subseteq \counterset$, there is a transition $
		(q,\counterval) \srunp{op, Z}_\countermachine (q',\countervalp)
		$
		if $(q, (op, Z), q') \in T$,
		$\counterval_{\counterb} = 0$ for all $\counterb \in Z$ (applies the zero tests), and
		$\countervalp_{\counter} = \counterval_{\counter}$ for all $\counter \in \counterset$. We sometimes omit writing $\noop$ and label the transition with only the set of counters to be tested to zero, or we write $\textsf{zero}(Z)$. Similarly, we omit $Z$ if $Z = \emptyset$.
		
		\smallskip
		\noindent\textbf{FIFO machines.}~ Next, we study FIFO machines. FIFO machines can be viewed as finite state machines equipped with one or more (potentially unbounded) channels where messages can be sent to and received from. \cite{brand_communicating_1983} showed that general FIFO machines can simulate Turing machines, and hence, it is undecidable to verify boundedness. However, it is still a widely used model to represent asynchronously communicating systems.
		
		\begin{defi}\label{def:fifo}
			A FIFO machine $\machine$ over the set of channels $\chanset$ is a tuple $\machine = (Q, \alphabetset, T, q_0)$ where $Q$ is a finite set of control-states, $\alphabetset$ is the finite message alphabet, and $q_0 \in Q$ is an initial control-state. Moreover, $T \subseteq Q \times \chanset \times \{!,?\}\times \alphabetset \times Q$ is the transition relation, where $\chanset \times \{!\} \times \alphabetset$ and $\chanset \times \{?\} \times \alphabetset$ are the set of send and receive actions, respectively.
		\end{defi}
		
		The FIFO machine $\machine$ induces the LTS $\system_\machine = (X_\machine, \labelset_\machine, \srun_\machine, x_0)$. Its set of states is $X_\machine = Q \times (\alphabetset^*)^{\chanset}$. In $(q,\chcontents) \in X_\machine$, the first component $q$ denotes the current control-state, and $\chcontents = (\chcontents_\ch)_{\ch \in \chanset}$ denotes the contents $\chcontents_c \in \alphabetset^*$ for every channel $\ch \in \chanset$. The initial state is $x_0 = (q_0, \chempty)$, where $\chempty$ denotes that every channel is empty. Moreover, $\labelset_\machine = \chanset \times \{!, ?\} \times \alphabetset$. The transitions are given as follows:\begin{itemize}
			\item $(q, \chcontents) \srunp{\ch!a}_\machine (q', \chcontentsp)$ if $(q, \ch!a, q') \in T$, $\chcontentsp_\ch = \chcontents_\ch \cdot a$, and $\chcontentsp_\chp = \chcontents_\chp$ for all $\chp \in \chanset \setminus \{\ch\}$.
			\item $(q, w) \srunp{\ch?a}_\machine (q', w')$ if $(q, \ch?a, q') \in T$, $\chcontents_\ch = a \cdot \chcontentsp_\ch$, and $\chcontentsp_\chp = \chcontents_\chp$ for all $\chp \in \chanset \setminus \{\ch\}$.
		\end{itemize}
		The index $\machine$ may be omitted whenever $\machine$ is clear from the context.
		When there is no ambiguity, we refer to machines and their associated LTS interchangeably.
		
		Note that, in general, FIFO machines with a single channel are as powerful as FIFO machines with multiple channels. When we denote FIFO machines with a single channel, we omit specifying the set $\chanset$, and hence, we denote the set of labels as $\labelset_\machine = \{!, ?\} \times \alphabetset$.
		
		For FIFO machines, we define the extended prefix ordering, denoted by $\extpref$ as follows:  $(q,\chcontents) \extpref (q',\chcontentsp)$ if $q = q'$ and for all $\ch \in \chanset$, $\chcontents_\ch$ is a prefix of $\chcontentsp_\ch$, i.e., $\chcontents_\ch \pref \chcontentsp_\ch$.

		\section{Branch-well-structured transition systems}\label{sec:termbound}

		In this section, we generalise \wqo and monotony such that these properties only need to hold for states along a branch in the reachability tree. To define these notions, we use labels on the transitions; hence, we consider labelled transition systems.
	
		\subsection{Branch-\wqo}\label{subsec:branch-wqo}
		Consider an \OLTS $\system = (X, \labelset, \srun, \leq, \initstate)$.
		A \emph{run} (or \emph{branch}) of $\system$ is a finite or infinite sequence $\rho=(\initstate \srun x_1)(x_1 \srun x_2)...$ simply written  $\rho=\initstate \srun x_1 \srun x_2 \ldots$. We denote the set of states visited along $\rho$ as $\Rho{\rho} = \{\initstate,x_1,x_2,\ldots\}$. We say that $\rho$ is \emph{branch-\wqo} if $\Rho{\rho}$ is \wqo w.r.t.~$\leq$. 
		
		\begin{defi}
			An \OLTS $\system = (X, \labelset, \srun, \leq, \initstate)$ is \emph{branch-\wqo} if every run of $\system$ is branch-\wqo.
		\end{defi}

		\begin{exa}\label{ex:fifo-machine-1}
			Consider the FIFO machine $\machine_1$ in Figure~\ref{fig:fifo-machine-1} with one FIFO channel. In control-state $q_0$, it loops by sending letter $a$ to the channel. Then, we may go, non-deterministically, to control-state $q_1$ by sending letter $b$ once, and then we stop.
			Let us consider the set of states $X_1 = \{q_0,q_1\} \times \{a,b\}^*$ together with the extended prefix ordering $\extpref$, i.e. $(q,u) \extpref (q',u')$ if $q = q'$ and $u \pref u'$.  The reachability set of $\machine_1$ from $(q_0,\varepsilon)$ is equal to $\Post^*(q_0,\varepsilon) = \{(q_0, w) \mid w \in a^*\} \cup \{ (q_1, w') \mid  w' \in a^*b\}$. 
			Note that $\extpref$ is not a \wqo since elements of the set $\{(q_1, w') \mid w' \in a^*b\}$ form 
			an infinite antichain for $\extpref$. However, the reachability tree of $\machine_1$ is branch-\wqo  for the initial state $(q_0, \varepsilon)$ (every branch is either finite or \wqo). Hence, there exist branch-\wqo \OLTS $\system = (\stateset, \labelset, \srun, \leq, \initstate)$ such that $(\stateset,\leq)$ is not a \wqo.
			
		\end{exa}
		
		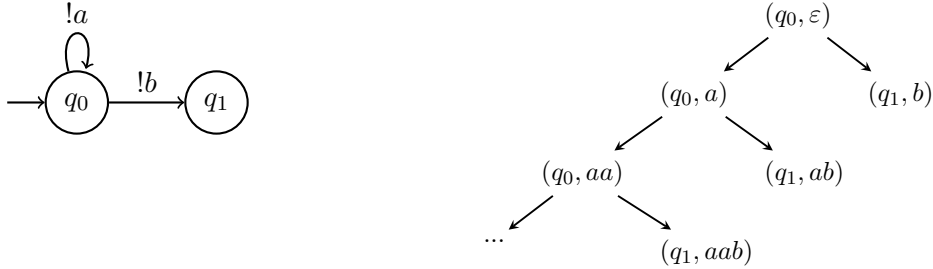
\begin{figure*}
			\begin{multicols}{2}
				\centering
				\begin{tikzpicture}[->, auto, thick]
					\node[place, initial, initial text=] (p1) {$q_0$};
					\node[place] (p2) [right= of p1]{$q_1$};
					
					\path[->]
					
					(p1) edge [loop above] node[swap] {$!a$} (p1)
					(p1) edge node[] {$!b$} (p2);
				\end{tikzpicture}\par
				\centering
				\scalebox{0.9}{
					\begin{tikzpicture}[->, auto, thick]
						\node[] (p1) {$(q_0,\varepsilon)$};
						\node[] (p2) [below right= 0.5cm and 0.25cm of p1]{$(q_1, b)$};
						\node[] (p3) [below left= 0.5cm and 0.25cm of p1]{$(q_0, a)$};
						\node[] (p4) [below left= 0.5cm and 0.25cm of p3]{$(q_0, aa)$};
						\node[] (p5) [below right= 0.5cm and 0.25cm of p3]{$(q_1, ab)$};
						\node[] (p6) [below left= 0.5cm and 0.25cm of p4]{$...$};
						\node[] (p7) [below right= 0.5cm and 0.25cm of p4]{$(q_1, aab)$};

						\path[-stealth]
						
						(p1) edge node[] {} (p2)
						(p1) edge node[] {} (p3)
						(p3) edge node[] {} (p4)
						(p3) edge node[] {} (p5)
						(p4) edge node[] {} (p6)
						(p4) edge node[] {} (p7)
						;
						
				\end{tikzpicture}}
			\end{multicols}
			\caption{The FIFO machine $\machine_1$ (left), and the corresponding (incomplete) infinite reachability tree (right) with initial state $(q_0, \varepsilon)$. We see that the induced transition system is branch-\wqo. \label{fig:fifo-machine-1}} 
		\end{figure*}

		However, unlike \wqo, the notion of branch-\wqo depends on the initial state considered. We will see below that there exists a system $\system = (X, \Sigma, \srun, \leq, x_0)$ and $x_0' \in X$  such that $\system$ is branch-\wqo but $(X, \Sigma, \srun, \leq, x_0')$ is not branch-\wqo. 
		
		\begin{figure*}[b]
			\begin{multicols}{2}
				\centering
				\begin{tikzpicture}[->, auto, thick]
					\node[place] (p1) {$q_0$};
					\node[place] (p2) [right= of p1]{$q_1$};
					\node[place] (p3) [right= of p2]{$q_2$};
					\path[->]
					
					(p1) edge [loop above] node[swap] {$!a$} (p1)
					(p1) edge node[] {$!b$} (p2)
					(p2) edge [bend left] node[] {$?c$} (p3)
					(p3) edge [bend left] node[] {$!b$} (p2)
					(p3) edge [loop right] node[swap] {$?b$} (p3)
					(p2) edge [loop above] node[swap] {$?c$} (p2)
					(p3) edge [loop above] node[swap] {$!c$} (p3);
				\end{tikzpicture}\par
				\centering
				\scalebox{0.9}{
					\begin{tikzpicture}[->, auto, thick]
						\node[] (p1) {$(q_2,\varepsilon)$};
						\node[] (p2) [below = 0.33cm of p1]{$(q_2, c)$};
						\node[] (p4) [below right= 0.5cm and 0.25cm of p2]{$(q_2, cc)$};
						\node[] (p5) [below left= 0.5cm and 0.25cm of p2]{$(q_1, cb)$};
						\node[] (p6) [below right= 0.5cm and 0.25cm of p4]{$...$};
						\node[] (p7) [below = 0.33cm of p4]{$(q_1, ccb)$};
						\node[] (p8) [below left= 0.5cm and 0.25cm of p5]{$(q_1, b)$};
						\node[] (p9) [below = 0.33cm of p5]{$(q_2, b)$};				
						\node[] (p10) [below right= 0.5cm and 0.25cm of p9]{$...$};
						\node[] (p11) [below left= 0.5cm and 0.25cm of p9]{$...$};
						\node[] (p12) [below right= 0.5cm and 0.25cm of p7]{$...$};
						\node[] (p13) [below left= 0.5cm and 0.25cm of p7]{$...$};
						\node[] (p14) [below = 0.7cm of p9]{$(q_2, \varepsilon)$};
						\node[] (p15) [below = 0.7cm of p14]{$(q_2, c)$};
						\node[] (p16) [below = 0.7cm of p15]{$(q_2, cc)$};
						\node[] (p17) [below = 0.7cm of p16]{$(q_1, ccb)$};
						\node[] (p18) [below left= 0.5cm and 0.25cm of p14]{$...$};
						\node[] (p19) [below left= 0.5cm and 0.25cm of p15]{$...$};
						\node[] (p20) [below left= 0.5cm and 0.25cm of p16]{$...$};
						\node[] (p21) [below left= 0.5cm and 0.25cm of p17]{$...$};
						\node[] (p22) [below = 0.7cm of p17]{$...$};

						\path[-stealth]
						
						(p1) edge node[] {} (p2)
						(p2) edge node[] {} (p4)
						(p2) edge node[] {} (p5)
						(p4) edge node[] {} (p6)
						(p4) edge node[] {} (p7)
						(p5) edge node[] {} (p8)
						(p5) edge node[] {} (p9)
						(p9) edge node[] {} (p10)
						(p9) edge node[] {} (p11)
						(p9) edge node[] {} (p14)
						(p7) edge node[] {} (p12)
						(p7) edge node[] {} (p13)
						(p14) edge node[] {} (p15)
						(p14) edge node[] {} (p18)
						(p15) edge node[] {} (p16)
						(p15) edge node[] {} (p19)
						(p16) edge node[] {} (p17)
						(p16) edge node[] {} (p20)
						(p17) edge node[] {} (p21)
						(p17) edge node[] {} (p22)
						;
						
				\end{tikzpicture}}
			\end{multicols}
			\caption{The FIFO machine $\machine_2$ (left) and the incomplete reachability tree from $(q_2, \varepsilon)$ (right).\label{fig:branch-init}} 
		\end{figure*}
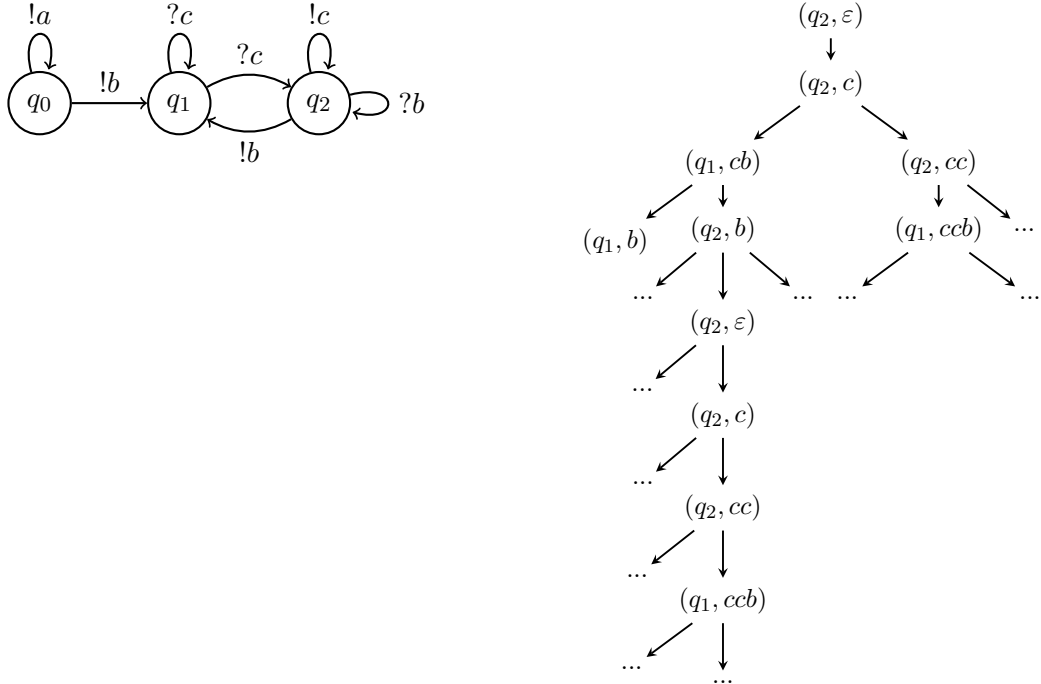
		
			\begin{exa}\label{ex:branch-init}
			Consider the FIFO machine $\machine_2$ in Figure~\ref{fig:branch-init} with one FIFO channel. If we start from $(q_0, \varepsilon)$, it behaves exactly as the FIFO machine $\machine_1$. However, if we change the initial state to $(q_2, \varepsilon)$, then it could either loop by sending $c$, or non-deterministically send a $b$ and go to control-state $q_1$. Then, if at least one $c$ has been sent, it can either loop again receiving the letters $c$, or once again non-deterministically come back to $q_2$ upon receiving $c$. 		
			There exists an infinite run $\rho$ with the prefix: $(q_2, \varepsilon) \srunp{!c} (q_2, c) \srunp{!b} (q_1, cb) \srunp{?c} (q_2, b) \srunp{?b} (q_2, \varepsilon) \srunp{!c. !c. !b} (q_1, ccb) \srun ...$ such that all the elements of the infinite set $B = \{{(q_1, cb)}, {(q_1, ccb)},\allowbreak \ldots, \allowbreak {(q_1, c^nb)}, \ldots\}$ are visited along $\rho$, i.e $B \subseteq \stateset_\rho$. Hence, we have an infinite sequence of incomparable elements along $\rho$, i.e. $\stateset_\rho$ is not \wqo, and therefore, the system is not branch-\wqo.

		\end{exa}
		
		We now show that branch-\wqos enjoy some of the good properties of \wqos.
		Let $\system_1 = (X_1, \Sigma, \srun_1, \leq_1, x_{0,1})$ and $\system_2 = (X_2, \Sigma, \srun_2, \leq_2, x_{0,2})$ be two branch-\wqos. We consider their product to be $\system = (X, \Sigma, \srun, \leq, x_0)$, where $X = X_1 \times X_2$, and $((x_1, x_2), a, (x'_1, x'_2)) \in \srun$ if $(x_1, a, x'_1) \in \srun_1$ and $(x_2, a, x'_2) \in 
				\srun_2$. Moreover, we consider the ordering to be component-wise, i.e. we have $(x_1,x_2) \leq (x'_1, x'_2)$ if $x_1 \leq_1 x'_1$ and $x_2 \leq_2 x'_2$; finally $x_0=(x_{0,1},x_{0,2})$.

		\begin{prop} 
				The product of finitely many branch-\wqos is branch-\wqo.

		\end{prop}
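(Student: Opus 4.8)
The plan is to treat the binary product first and then extend to $n$ factors by induction, using that the construction is associative up to isomorphism, so that a product of $n$ systems reduces to iterated binary products. The crux is that the shared-label definition of $\srun$ makes every run of the product project coordinatewise onto runs of the two factors. Indeed, given a run $\rho = (x_{0,1},x_{0,2}) \srunp{a_1} (x_{1,1},x_{1,2}) \srunp{a_2} \cdots$ of $\system$, the definition of the product transition relation forces $(x_{i,1}, a_{i+1}, x_{i+1,1}) \in \srun_1$ and $(x_{i,2}, a_{i+1}, x_{i+1,2}) \in \srun_2$ for every $i$. Since $x_0 = (x_{0,1}, x_{0,2})$, the induced sequences $\rho_1 = x_{0,1} \srunp{a_1} x_{1,1} \srunp{a_2} \cdots$ and $\rho_2 = x_{0,2} \srunp{a_1} x_{1,2} \srunp{a_2} \cdots$ are genuine runs of $\system_1$ and $\system_2$ from their respective initial states.

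First I would invoke the branch-\wqo hypothesis on each factor to conclude that $\Rho{\rho_1}$ and $\Rho{\rho_2}$ are \wqo w.r.t.\ $\leq_1$ and $\leq_2$. Next I would record the inclusion $\Rho{\rho} \subseteq \Rho{\rho_1} \times \Rho{\rho_2}$: every state visited along $\rho$ is a pair whose coordinates are visited along $\rho_1$ and $\rho_2$, respectively. To finish it suffices to show that $\Rho{\rho}$ is \wqo for the component-wise ordering, for which I would use two standard facts: (i) the component-wise product of two \wqos is a \wqo, and (ii) any subset of a \wqo is a \wqo under the induced ordering. Applying (i) to $(\Rho{\rho_1}, \leq_1)$ and $(\Rho{\rho_2}, \leq_2)$ and then (ii) to the subset $\Rho{\rho}$ shows that $\rho$ is branch-\wqo; as $\rho$ was arbitrary, $\system$ is branch-\wqo.

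For completeness I would inline the short proof of fact (i) via Proposition~\ref{erdos2}: from any infinite sequence $((a_j,b_j))_j$ over $\Rho{\rho_1} \times \Rho{\rho_2}$, extract an infinite $\leq_1$-increasing subsequence $a_{j_0} \leq_1 a_{j_1} \leq_1 \cdots$ (characterization~(2) applied to the \wqo $\Rho{\rho_1}$), then apply the \wqo property of $\Rho{\rho_2}$ to the sequence $(b_{j_k})_k$ to obtain $k < l$ with $b_{j_k} \leq_2 b_{j_l}$; the pair $(a_{j_k}, b_{j_k}) \leq (a_{j_l}, b_{j_l})$ is then increasing. Fact (ii) is immediate, since an infinite sequence over a subset is in particular an infinite sequence over the ambient set.

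The only point requiring care — rather than a genuine difficulty — is that one may \emph{not} directly invoke ``product of \wqos is \wqo'' on the full state spaces, because $\system_1$ and $\system_2$ are merely branch-\wqo and $(X_1, \leq_1)$, $(X_2, \leq_2)$ need not themselves be \wqo (as witnessed by Example~\ref{ex:fifo-machine-1}). The argument must localize to a single run and exploit the projection step above, which is precisely where the shared-label product and branch-\wqo, as opposed to plain \wqo, enter. I would therefore state the projection of $\rho$ onto $\rho_1$ and $\rho_2$ explicitly, since it carries the whole argument.
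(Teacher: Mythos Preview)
Your proposal is correct and follows essentially the same approach as the paper: reduce to the binary case, project the product run onto each factor, and then apply the Dickson-style ``extract an increasing subsequence in the first coordinate, then refine in the second'' argument. Your treatment is in fact more careful than the paper's, since you make the projection step explicit (the paper's justification ``since $(X_1,\leq_1)$ is branch-\wqo'' is imprecise without it), and you correctly flag that one must work with $X_{\rho_1}\times X_{\rho_2}$ rather than $X_1\times X_2$.
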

		\begin{proof}
		
				We show this for two branch-\wqos, but this proof can be extended to finitely many branch-\wqos.

	Let us consider an infinite run $\rho$ in the product branch-\wqo $\system$. If we consider the set $\stateset_\rho$ along with the order $\leq_1$, we can extract an infinite subsequence $B  = (x_n, y_n)_{n \in \nat} \subseteq \stateset_\rho$ such that $(x_n)_{n \in \nat}$ forms an  increasing sequence wrt. $\leq_1$ (since $(\stateset_1, \leq_1)$ is branch-\wqo). Now,  within this subsequence $B$, there exists an increasing subsequence $(y_{n_m})_{m \in \nat}$ (since the branch is also \wqo wrt. $\leq_2$). Hence, we have an infinitely long sequence $(x_{n_{m_k}}, y_{n_{m_k}})_{k \in \nat}$.
			
		\end{proof}
		
		We can similarly prove that the same holds true for disjoint unions and intersection of branch-\wqos.

		\subsection{Branch-monotony}\label{subsec:branch-monotony}
		Now that we have relaxed the notion of branch-\wqo, we shall look at a generalisation of strong monotony, which we will refer to as branch-monotony.

		\begin{defi}
			An \OLTS $\system = (X, \Sigma,$ $\srun, \leq, x_0)$
			is \emph{branch-monotone} if, for all $x, x' \in X$, $\sigma \in \Sigma^*$ such that $x \srunp{\sigma} x'$ and $x \leq x'$, there exists a state $y$ such that $x' \srunp{\sigma} y$ and $x' \leq y$.
		\end{defi}
		
		\begin{prop}
			Let $\system$ be a branch-monotone \OLTS and let there be states $x,x'$ such that $x \srunp{\sigma} x'$  and $x \leq x'$, with $\sigma \in \Sigma^*$.
			Then, for any $n \geq 1$, there exists $y_n \in X$ such that $x \srunp{\sigma^n} y_n$ with $x \leq y_n$.
		\end{prop}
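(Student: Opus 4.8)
The plan is to proceed by induction on $n$, but with an induction hypothesis strong enough to feed the definition of branch-monotony, which can only be applied to a $\sigma$-run whose two endpoints are themselves comparable by $\leq$. Accordingly, I would prove the stronger statement that there exists a chain $x = z_0 \srunp{\sigma} z_1 \srunp{\sigma} \cdots \srunp{\sigma} z_n$ with $z_0 \leq z_1 \leq \cdots \leq z_n$. The base case $n = 1$ is immediate from the hypotheses: take $z_0 = x$ and $z_1 = x'$, since $x \srunp{\sigma} x'$ and $x \leq x'$.

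For the inductive step, assume such a chain of length $n$ exists; in particular its last segment satisfies $z_{n-1} \srunp{\sigma} z_n$ together with $z_{n-1} \leq z_n$. Applying branch-monotony to the pair $(z_{n-1}, z_n)$ yields a state $z_{n+1}$ with $z_n \srunp{\sigma} z_{n+1}$ and $z_n \leq z_{n+1}$, which extends the chain to length $n+1$ while preserving both the consecutive transitions and the consecutive inequalities.

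Given the chain, I would set $y_n = z_n$. Concatenating the consecutive $\sigma$-transitions along $z_0, z_1, \ldots, z_n$ gives $x \srunp{\sigma^n} y_n$, and transitivity of the quasi-ordering $\leq$ applied to $z_0 \leq z_1 \leq \cdots \leq z_n$ yields $x = z_0 \leq z_n = y_n$, as required.

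The main obstacle is precisely the choice of induction hypothesis. A naive version asserting only $x \srunp{\sigma^n} y_n$ and $x \leq y_n$ is insufficient, because branch-monotony requires a pair $u \srunp{\sigma} v$ with $u \leq v$ at its endpoints, and the endpoint inequality $x \leq y_n$ does not supply such a comparison across the final $\sigma$-step. Carrying the full chain of consecutive inequalities $z_i \leq z_{i+1}$ is exactly what makes each inductive application of branch-monotony legitimate; transitivity of $\leq$ is then invoked only at the end to recover the desired relation between $x$ and $y_n$.
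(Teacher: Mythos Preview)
Your proof is correct, but it takes a slightly different route from the paper's. You strengthen the induction hypothesis to carry a full chain $z_0 \leq z_1 \leq \cdots \leq z_n$ with $z_i \srunp{\sigma} z_{i+1}$, applying branch-monotony once per step to the last link. The paper instead keeps precisely the ``naive'' hypothesis you dismiss (only $x \srunp{\sigma^n} y_n$ and $x \leq y_n$); in the inductive step it first applies branch-monotony to the original pair $(x,x')$ to produce $y$ with $x' \srunp{\sigma} y$ and $x' \leq y$, and then invokes the induction hypothesis on the \emph{shifted} pair $(x',y)$ to obtain $y'_n$ with $x' \srunp{\sigma^n} y'_n$ and $x' \leq y'_n$, whence $x \srunp{\sigma^{n+1}} y'_n$ and $x \leq x' \leq y'_n$. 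So your remark that the weak hypothesis is ``insufficient'' is not quite right: it suffices provided one re-instantiates it at the new base point $x'$ rather than trying to extend directly from $y_n$. Your chain formulation is arguably cleaner and makes the role of transitivity more explicit, while the paper's version avoids strengthening the statement at the cost of a less obvious application of the hypothesis.
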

		\begin{proof}
			We prove this by induction on $n$. Base case: Let $n=1$. From the hypothesis, we immediately have a run $x \xrightarrow{\sigma} x'$ such that $x \leq x'$. Hence, the property holds trivially for the base case, with $y_1=x'$.
			
			Let us assume that it holds for $n$. We will now show that it also holds for $n+1$. Let $x \xrightarrow{\sigma} x'$  be a finite run satisfying $x \leq x'$. Furthermore, from the induction hypothesis, there exists a state $y_n$ such that $x \xrightarrow{\sigma} x' \xrightarrow{\sigma^{n-1}}y_n$ such that $x \leq y_n$. Moreover, by the definition of branch-monotony, there exists a state $y$ such that $ x' \xrightarrow{\sigma} y$ and $x' \leq y$. Furthermore, if we consider the run $x' \srunp{\sigma} y$, again by induction hypothesis, there exists a state $y'_n$ such that $x' \srunp{\sigma^n} y'_n$ and $x \leq y'_n$. Hence, there exists a run $x\xrightarrow{\sigma} x' \srunp{\sigma^n} y'_{n}$, and $x \leq x' \leq y'_n$, so by transitivity, $x \leq y'_n$. In other words, we have $x \srunp{\sigma^{n+1}} y'_n = y_{n+1}$, and this completes our proof.
		\end{proof}

		As in the case of general monotony, \emph{strict} branch-monotony is defined using strict inequalities in both cases. 
		
		\begin{exa}
			Consider $\machine_1$ from Figure~\ref{fig:fifo-machine-1} once again. Note $\machine_1$ induces an \OLTS
			by considering the actions on the edges to be the labels. Moreover, $\machine_1$ is branch-monotone.
			For every $x \srunp{\sigma} x'$ such that $x \extpref x'$ and $\sigma \in \Sigma^*$, the only case is that $x = (q_0, a^n)$, $x' = (q_0, a^{n+k})$, for some $n,k \in \mathbb{N}$. Moreover, $\sigma \in (!a)^*$. Hence, we can always repeat $\sigma$ from $x'$ such that $x' \srunp{\sigma} y = (q_0, a^{n+k+k})$. Therefore, $x' \extpref y$. We deduce that $\machine_1$ is branch-monotone.
		\end{exa}

		\subsection{Branch-WSTS}\label{subsec:branch-wsts} We are now ready to extend the definition of WSTS.
		
		\begin{defi}
			A \emph{branch-WSTS} is an \OLTS $\system = (X, \Sigma,$ $\srun, \leq, x_0)$ that is
			finitely branching, branch-monotone, and branch-\wqo.
		\end{defi}
		
		When we say, without ambiguity, that a machine $\machine$ is branch-\wqo, WSTS, or branch-WSTS, we mean that the ordered transition system $\system_{\machine}$, induced by machine $\machine$, is branch-\wqo, WSTS, or branch-WSTS, respectively. We will explicitly define the transition system induced by FIFO machines in Section~\ref{subsec:FIFO}.

		\begin{rem}
			
			Branch-WSTS is a strict superclass of labelled WSTS.
			For example, machine $\machine_1$ (seen in Figure~\ref{fig:fifo-machine-1}) with initial state $(q_0, \varepsilon)$ is branch-WSTS for the ordering $\extpref$ but it is not WSTS for $\extpref$ since $\extpref$ is not a \wqo on $\{q_0,q_1\} \times \{a,b\}^*$ or on the subset $\{(q_1,w) \mid w \in a^*b\}$. 
		\end{rem}

		Let us recall the \emph{Reduced Reachability Tree ($\FRT$)}, which was defined as Finite Reachability Tree in \cite{finkel_reduction_1990,finkel_well-structured_2001}.
		Suppose that $\system = (X, \Sigma, \srun, \leq, x_0)$ is an \OLTS. Then, the \emph{Reduced Reachability Tree} from $x_0$, denoted by $\FRT( \system, x_0)$, is a tree with root $n_0$ labelled $x_0$ that is defined and built as follows. For every $x \in \Post(x_0)$, we add a child vertex labelled $x$ to $n_0$. The tree is then built in the following way. We pick an unmarked vertex $c$ which is labelled with $x$: \begin{itemize}
			\item if $n$ has an ancestor $n'$ labelled with $x'$ such that $x' \leq x$, we mark the vertex $n$ as \emph{dead}, and say $n'$ \emph{subsumes} $n$.
			\item otherwise, we mark $n$ as \emph{live}, and for every $y \in \Post(x)$, we add a child labelled $y$ to $n$.
		\end{itemize}

		\begin{figure}[t]
			\centering
			\begin{tikzpicture}[->, auto, thick]
				\node[] (p1) {$(q_0,\varepsilon)$};
				\node[] (p2) [below right= 0.5cm and 0.5cm of p1]{$(q_1, b)$};
				\node[] (p3) [below left= 0.5cm and 0.5cm of p1]{$(q_0, a)$};
				\node[] (p4) [below = 0.02cm of p3]{\text{dead}};

				\path[-stealth]
				
				(p1) edge node[] {} (p2)
				(p1) edge node[] {} (p3)
				;
				
			\end{tikzpicture}
			\caption{The Reduced Reachability Tree of $\machine_1$ from $(q_0, \epsilon)$. Note that $(q_0, a)$ is dead because it is subsumed by state $(q_0, \varepsilon)$. As a matter of fact, we have $(q_0, \varepsilon) \srunp{*} (q_0, a)$ and $(q_0, \varepsilon) \extpref (q_0, a)$. State $(q_1, b)$ is also dead but it is not subsumed.} \label{fig:frt-1}
		\end{figure}
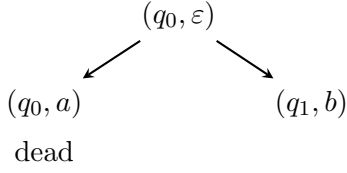

		\begin{restatable}{prop}{finite}\label{prop:finite}
			Let $\system = (X, \Sigma, \srun, \leq, x_0)$ be an \OLTS that is finitely branching and branch-\wqo.
			Then, $\FRT(\system,x_0)$ is finite.
		\end{restatable}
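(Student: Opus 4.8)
The plan is to argue by contradiction using K\"onig's lemma. First I would observe that $\FRT(\system,x_0)$ is \emph{finitely branching}: by construction a dead vertex is a leaf (it receives no children), while a live vertex labelled $x$ is given exactly one child per element of $\Post(x)$, and this set is finite because $\system$ is finitely branching. Hence every vertex of the tree has finitely many children. If $\FRT(\system,x_0)$ were infinite, then by K\"onig's lemma it would contain an infinite branch $n_0, n_1, n_2, \ldots$, where $n_0$ is the root and each $n_{k+1}$ is a child of $n_k$.

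Next I would read off from this branch an infinite run of $\system$. Writing $x_k$ for the label of $n_k$, the construction guarantees $x_k \srun x_{k+1}$ for all $k$, and $x_0$ is the initial state, so $\rho = x_0 \srun x_1 \srun x_2 \srun \cdots$ is a genuine infinite run of $\system$ from $x_0$. A key observation here is that every vertex $n_k$ on this infinite branch must be \emph{live}: a dead vertex is a leaf and therefore cannot have the successor $n_{k+1}$ lying on the branch. Since $\system$ is branch-\wqo, the run $\rho$ is branch-\wqo, i.e.\ the set $\{x_0,x_1,x_2,\ldots\}$ of states visited along $\rho$ is \wqo with respect to $\leq$.

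Now I would invoke the defining property of a \wqo applied to the infinite sequence $x_0, x_1, x_2, \ldots$ over this set: there exist indices $i < j$ with $x_i \leq x_j$. But $n_i$ is an ancestor of $n_j$ on the branch, and $x_i \leq x_j$, so at the moment $n_j$ was processed the first case of the construction applied and $n_j$ was marked dead and subsumed by $n_i$. This contradicts the fact established above that $n_j$ is live. Hence $\FRT(\system,x_0)$ cannot be infinite, which is the claim.

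I do not expect a serious obstacle in this argument, since it is the standard finiteness proof for the (Finite/Reduced) Reachability Tree adapted to the branch setting. The only point requiring care is the interplay between the two hypotheses through K\"onig's lemma: finite branching is exactly what lets an infinite tree produce an infinite \emph{branch} (rather than merely infinitely many vertices scattered across the tree), and branch-\wqo is precisely the hypothesis needed so that the \wqo property can be applied \emph{along that single branch} — which is weaker than assuming $(X,\leq)$ is globally \wqo, as Example~\ref{ex:fifo-machine-1} shows. I would make sure the write-up explicitly notes why every interior vertex of an infinite branch is live, as this is the step that links the subsumption rule to the contradiction.
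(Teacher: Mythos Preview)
Your argument is correct and follows essentially the same route as the paper: assume the tree is infinite, use finite branching (via K\"onig's lemma) to extract an infinite branch, apply branch-\wqo along that branch to obtain an increasing pair, and derive a contradiction with the subsumption/dead-marking rule. If anything, your write-up is more explicit than the paper's about why every node on the infinite branch must be live, which is the crux of the contradiction.
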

		\begin{proof}
			Let us assume to the contrary that $\FRT(\system,x_0)$ is infinite. 
			Since $\system$ is finitely branching and branch-\wqo, there is an  infinite branch in the reachability tree. Since this branch is branch-\wqo, there exists a pair of states, say $x_2$ and $x_3$ visited along the branch, such that $x_0 \srunp{*} x_1 \srunp{+} x_2$ and $x_1 \leq x_2$. However, since the run is infinite, there exists $x_3$ such that $x_2 \srunp{+} x_3$. We can represent the beginning of this branch with the finite prefix $n_0(x_0)\srunp{*} n_1(x_1) \srunp{+} n_2(x_2)  \srunp{+} n_3(x_3)$, in  $\FRT(\system,x_0)$,
			such that nodes $n_1, n_2, n_3$ are all different. However, since $x_1 \leq x_2$, the node $n_2(x_2)$ has been marked as dead, and the tree need not be explored further. Thus, there is a contradiction. Hence, $\FRT(\system,x_0)$ is finite. 
		\end{proof}
		
		\begin{restatable}{prop}{boundedFRT}\label{prop:boundedFRT}
			Let $\system = (X, \Sigma,$ $\srun, \leq, x_0)$ be a branch-WSTS, equipped with strict branch-monotony and such that $\leq$ is a partial ordering. The reachability set $\Post^*_\system(x_0)$ is infinite
			\ifff there exists a branch	
			$n_0(x_0)\srunp{*} n_1(x_1) \srunp{+} n_2(x_2)$ in $\FRT(\system,x_0)$ such that $x_1 < x_2$.
		\end{restatable}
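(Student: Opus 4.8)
The plan is to prove the two implications separately, relying on strict branch-monotony for the ``if'' direction and on the finiteness of $\FRT(\system,x_0)$ (Proposition~\ref{prop:finite}) for the ``only if'' direction; antisymmetry of the partial ordering $\leq$ is used in both.

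For the ``if'' direction, suppose there is a branch $n_0(x_0)\srunp{*} n_1(x_1)\srunp{+} n_2(x_2)$ with $x_1 < x_2$. Since tree edges correspond to transitions and the segment from $n_1$ to $n_2$ is non-empty, there is some $\sigma \in \Sigma^{+}$ with $x_1 \srunp{\sigma} x_2$, and $x_0 \srunp{*} x_1$. I would then iterate strict branch-monotony directly (the strict analogue of the iteration established for branch-monotony): from $x_1 \srunp{\sigma} x_2$ and $x_1 < x_2$ it yields $x_3$ with $x_2 \srunp{\sigma} x_3$ and $x_2 < x_3$, and repeating this indefinitely produces an infinite run $x_1 \srunp{\sigma} x_2 \srunp{\sigma} x_3 \srunp{\sigma} \cdots$ with $x_1 < x_2 < x_3 < \cdots$. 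Because $\leq$ is a partial ordering, a strictly increasing chain consists of pairwise distinct states, and all of them lie in $\Post^*_\system(x_0)$. Hence $\Post^*_\system(x_0)$ is infinite.

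For the ``only if'' direction I would argue by contraposition: assume no branch of $\FRT(\system,x_0)$ exhibits a strict increase $x_1 < x_2$, and show that $\Post^*_\system(x_0)$ is finite. By Proposition~\ref{prop:finite} the tree $\FRT(\system,x_0)$ is finite, so the set $L$ of labels occurring in it is finite, and $x_0 \in L$. The crucial step is to verify that $L$ is closed under $\Post$. Live nodes contribute all of $\Post(x)$ as their children, so $\Post(x)\subseteq L$; a dead node that is a leaf with no successors has $\Post(x)=\emptyset\subseteq L$; and a dead node subsumed by an ancestor $n_1$ with label $x_1 \leq x$ must, by the contrapositive assumption together with antisymmetry of $\leq$, satisfy $x_1 = x$, whence $\Post(x) = \Post(x_1)$ is already present as the children of the necessarily live node $n_1$. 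Thus $\Post(x)\subseteq L$ for every $x\in L$, and since $x_0 \in L$ we obtain $\Post^*_\system(x_0) \subseteq L$, so the reachability set is finite.

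The main obstacle is precisely this $\Post$-closure argument in the ``only if'' direction: one has to be sure that every reachable state is accounted for as a label of the finite tree, which hinges on the observation that a dead node subsumed \emph{with equality} produces no new successors, because its subsuming ancestor (being expanded, hence live) has already generated all of $\Post(x)$ as children. Once this closure is in place, finiteness of $L$ gives finiteness of the reachability set, and the remaining steps — iterating the monotony definition and invoking antisymmetry to pass between $x_1 \leq x_2$, $x_1 = x_2$, and $x_1 < x_2$ — are routine.
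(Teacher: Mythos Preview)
Your proof is correct. The ``if'' direction matches the paper's argument exactly: iterate strict branch-monotony to produce a strictly increasing chain, then invoke antisymmetry to conclude the states are pairwise distinct.

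For the ``only if'' direction you take a genuinely different route. The paper argues directly: from unboundedness it extracts, via K\"onig's lemma applied to the tree of loop-free run prefixes, an infinite run through pairwise distinct states; the maximal prefix of this run in $\FRT(\system,x_0)$ ends at a subsumed node $n_2(x_2)$, and since all states on the run are distinct the subsuming ancestor has label $x_1 \neq x_2$, whence $x_1 < x_2$ by antisymmetry. You instead argue by contraposition, showing that if every subsumption in $\FRT(\system,x_0)$ is an equality then the (finite) label set $L$ is closed under $\Post$ and hence contains $\Post^*_\system(x_0)$. This closure argument is clean and avoids a second appeal to K\"onig (you still use it indirectly through Proposition~\ref{prop:finite}); conversely, the paper's argument makes the witness of unboundedness --- an infinite loop-free run --- explicit, which some readers may find more concrete.

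One small remark: your case~2 (``a dead node that is a leaf with no successors has $\Post(x)=\emptyset$'') is redundant. By the construction of $\FRT$, a node is dead precisely when it is subsumed by a proper ancestor; a leaf with $\Post(x)=\emptyset$ and no subsuming ancestor is \emph{live} and already handled by your case~1. Your case~3 therefore covers all dead nodes, and the argument goes through unchanged once you drop case~2. The observation that the subsuming ancestor $n_1$ is necessarily live (since it has a descendant, hence a child, and dead nodes are never expanded) is the key point, and you identify it correctly.
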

		\begin{proof}
			The following proof is an adaptation of the proof of boundedness for WSTS in \cite{finkel_well-structured_2001} to branch-WSTS.

			First, let us assume $\system$ is unbounded, i.e. $\Post_{\system}^*(x_0)$ is infinite. We will show that there exists a branch $n_0(x_0)\srunp{*} n_1(x_1) \srunp{+} n_2(x_2)$ in $\FRT(\system,x_0)$ such that $x_1 < x_2$. Since $\system$ is unbounded, there are an infinite number of distinct states which are reachable from $x_0$. We first show that there exists a run starting from $x_0$, without any loop, i.e. where all states are distinct. We consider the finitely branching tree of all prefixes of runs, and prune this tree by removing prefixes which contain a loop. Because any reachable state can be reached without a loop, the pruned tree still contains an infinite number of prefixes. By K{\"o}nig's lemma, there exists an infinite run with no loop. Any run starting from $x_0$ has a finite prefix labelling a maximal path in $\FRT(\system,x_0)$. Hence, there must be a node $n_2(x_2)$ which is subsumed by a node $n_1(x_1)$ such that $x_1 \neq x_2$ (since we enforce that all nodes are distinct). Since we assumed $\leq$ to be a partial ordering, we deduce from 
			$x_1 \neq x_2$, and $x_1 \leq x_2$ that $x_1 < x_2$.
			
			Conversely, let us assume that there exist two vertices $n_1,n_2$ labelled by $x_1$ and $x_2$ respectively, in $\FRT(\system,x_0)$ such that 	$n_0(x_0)\srunp{*} n_1(x_1) \srunp{+} n_2(x_2)$ in $\FRT(\system,x_0)$ such that $x_1 < x_2$. Let $x_1 \srunp{a_1a_2...a_n} x_2$ and $x_1<x_2$. Hence, there exist $y_1, y_2,...,y_n \in \stateset$ such that $x_1 \srunp{a_1} y_1 \srunp{a_2} y_2 \srunp{a_3} ... \srunp{a_n} y_n =x_2$. 
			Since the system is strictly branch-monotone, we can repeat this sequence, \ie,  there exist $n$ states $u_1, u_2,...,u_n$ such that $x_2 \srunp{a_1} u_1  \srunp{a_2} u_2 ... \srunp{a_n} u_n = x_3$ with $y_n<u_n$. 
			By iterating this process, we construct an infinite sequence of states $(x_k)_{k \geq 0}$ such that for all $k \geq 1$, one has  $x_k \srunp{a_1a_2...a_n} x_{k+1}$ and $x_k < x_{k+1}$. Then, we deduce that all $x_k$ are different.
			Hence,  $\Post_{\system}^*(x_0)$ is infinite, and $\system$ is unbounded.	
		\end{proof}

		We now need a notion of effectivity adapted to branch-WSTS.
		
		\begin{defi}
			A branch-WSTS $\system= (X, \Sigma,$ $\srun, \leq, x_0)$ is \emph{branch-effective} if $\system$ is effective and $\Post_\system(x)$ is a (finite) computable set, for all $x \in X$. 
		\end{defi}
		
		\begin{thm}
			Boundedness is decidable for branch-effective branch-WSTS $\system = (X, \Sigma,$ $\srun, \leq, x_0)$ with strict branch-monotony such that $\leq$ is a partial ordering.
		\end{thm}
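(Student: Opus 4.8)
The plan is to reduce boundedness to a finite, decidable search over the Reduced Reachability Tree $\FRT(\system,x_0)$, combining Proposition~\ref{prop:finite} (finiteness of the tree) with the characterization of infiniteness of the reachability set in Proposition~\ref{prop:boundedFRT}. Note that the hypotheses of the theorem, namely strict branch-monotony together with $\leq$ being a partial ordering, are exactly those required to invoke Proposition~\ref{prop:boundedFRT}.

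First I would argue that $\FRT(\system,x_0)$ can be constructed effectively. The construction proceeds node by node: starting from the root labelled $x_0$, one repeatedly picks an unmarked vertex labelled $x$, tests whether some ancestor labelled $x'$ satisfies $x' \leq x$ (marking the vertex dead if so), and otherwise computes $\Post(x)$ and attaches a child for each successor. Every such step is computable precisely because $\system$ is branch-effective: $\Post(x)$ is a finite computable set and $\leq$ is decidable, so the subsumption test terminates. It remains to ensure the overall procedure halts. Since a branch-WSTS is finitely branching and branch-\wqo, Proposition~\ref{prop:finite} guarantees that $\FRT(\system,x_0)$ is finite; hence the construction, which only expands live vertices and never revisits a marked one, runs out of unmarked vertices after finitely many steps and returns the full finite tree.

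Second, with the finite tree in hand, I would decide unboundedness by a syntactic check. The relation $<$ is decidable, since $x < y$ holds exactly when $x \leq y$ and $y \not\leq x$, both of which are decidable. I then enumerate all ancestor-descendant pairs $(n_1,n_2)$ of the finite tree, say with $n_0(x_0) \srunp{*} n_1(x_1) \srunp{+} n_2(x_2)$, and test whether $x_1 < x_2$. By Proposition~\ref{prop:boundedFRT}, such a pair exists if and only if $\Post^*_\system(x_0)$ is infinite. The algorithm therefore reports that $\system$ is unbounded when some such pair is found and bounded otherwise, which settles the boundedness problem.

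The only delicate point is making finiteness of $\FRT(\system,x_0)$ genuinely usable as a termination guarantee: the construction has no a priori bound on the size of the tree, but because each infinite branch would eventually be cut by subsumption (this is exactly the content of Proposition~\ref{prop:finite}, which rests on the branch-\wqo hypothesis), there is no need for an explicit bound — the process simply terminates once no live, unmarked vertex remains. All remaining obligations, namely computing successors, deciding $\leq$ and $<$, and enumerating the finitely many ancestor-descendant pairs, are routine consequences of branch-effectiveness and of the finiteness of the tree.
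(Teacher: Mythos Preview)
Your proposal is correct and follows essentially the same approach as the paper: you build $\FRT(\system,x_0)$ effectively using branch-effectiveness and Proposition~\ref{prop:finite} for termination, then invoke Proposition~\ref{prop:boundedFRT} to reduce boundedness to detecting a strict-increase pair along some branch of the finite tree. Your treatment is in fact slightly more explicit than the paper's about why the construction terminates and why $<$ is decidable, but the argument is the same.
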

		
		\begin{proof} Suppose $\system = (X, \Sigma,$ $\srun, \leq, x_0)$ satisfies the above conditions.
			From Proposition~\ref{prop:finite}, we obtain that ${\FRT(\system,x_0)}$
			is finite. By hypothesis, $\system$ is finitely branching and branch-effective. In particular, for all $x$, $\Post_\system(x)$ is a finite computable set. As $\leq$ is decidable, 
			we deduce that
			$\FRT(\system,x_0)$
			is effectively computable.
			From Proposition~\ref{prop:boundedFRT}, we know that 
			$\Post^*_\system(x_0)$ is infinite \ifff there exists a finite branch $n_0(x_0)\srunp{*} n_1(x_1) \srunp{+} n_2(x_2)$ such that $x_1 < x_2$. This last property can be decided on $\FRT(\system,x_0)$, and so the boundedness property can be decided.
		\end{proof}
		
		We also generalise the decidability of non-termination for WSTS \cite{finkel_well-structured_2001} to branch-WSTS. 
		
		\begin{restatable}{prop}{termFRT}\label{prop:termFRT}
			A branch-WSTS $\system = (X, \Sigma,$ $\srun, \leq, x_0)$ does not terminate from state $x_0$ \ifff there exists a subsumed node in $\FRT(\system,x_0)$.
		\end{restatable}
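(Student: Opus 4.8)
The plan is to prove the two implications separately, using the finiteness of $\FRT(\system,x_0)$ (Proposition~\ref{prop:finite}) for the forward direction and iterated branch-monotony for the backward direction.

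For the direction ($\Leftarrow$), suppose $\FRT(\system,x_0)$ contains a subsumed node $n$ labelled $x$, subsumed by an ancestor $n'$ labelled $x'$. By the construction of the $\FRT$, $n'$ is a \emph{proper} ancestor of $n$, so the path from $n'$ to $n$ yields a run $x' \srunp{\sigma} x$ with $\sigma \in \Sigma^*$, $\sigma \neq \emptyword$, and the subsumption condition gives $x' \leq x$. I would then iterate branch-monotony: applying it to $x' \srunp{\sigma} x$ together with $x' \leq x$ produces $y_1$ with $x \srunp{\sigma} y_1$ and $x \leq y_1$; applying it again to $x \srunp{\sigma} y_1$ with $x \leq y_1$ produces $y_2$ with $y_1 \srunp{\sigma} y_2$ and $y_1 \leq y_2$; and so on. This yields an infinite run $x' \srunp{\sigma} x \srunp{\sigma} y_1 \srunp{\sigma} y_2 \srunp{\sigma} \cdots$, each segment being nonempty. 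Since $x'$ labels a node of $\FRT(\system,x_0)$ it is reachable from $x_0$, so prepending the path $x_0 \srunp{*} x'$ gives an infinite run from $x_0$, and hence $\system$ does not terminate.

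For the direction ($\Rightarrow$), suppose $\system$ does not terminate, so there is an infinite run $\rho = x_0 \srun x_1 \srun x_2 \srun \cdots$. I would follow $\rho$ down the tree $\FRT(\system,x_0)$: starting at the root, whenever the current node is \emph{live} its children realise all of $\Post$ of its label, so in particular a child carrying the next state of $\rho$ exists, and I move to it. A live node can never be a dead end along $\rho$, since $\Post$ of its label contains the successor state on $\rho$. By Proposition~\ref{prop:finite} the tree $\FRT(\system,x_0)$ is finite, so this descent cannot proceed through live nodes forever; it must reach a node that is not expanded, i.e.\ a \emph{dead} node. As dead nodes are by construction exactly the subsumed ones, a subsumed node exists.

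The substance lies in ($\Leftarrow$), where branch-monotony must be iterated to lift a single increasing pair along a branch into a genuinely infinite run; the crucial points are that the pumped factor $\sigma$ is nonempty (guaranteed because a node is subsumed only by a strict ancestor) and that each application of branch-monotony reproduces \emph{both} the label sequence $\sigma$ and the ordering $\leq$, so the construction sustains itself indefinitely. Direction ($\Rightarrow$) is essentially bookkeeping once finiteness of the $\FRT$ is available; the only care needed is to check that the descent along $\rho$ halts precisely at a subsumed node rather than at a deadlocked live leaf, which cannot happen since $\rho$ furnishes a successor at every state it visits.
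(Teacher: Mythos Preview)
Your proof is correct and follows exactly the approach the paper intends: the paper omits an explicit proof of this proposition, but your argument mirrors the structure of the proof of Proposition~\ref{prop:boundedFRT} (pumping via branch-monotony for the backward direction, and finiteness of the $\FRT$ from Proposition~\ref{prop:finite} for the forward direction), specialised to non-strict $\leq$ and mere non-termination rather than unboundedness.
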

		
		Hence, we have the following theorem:
		\begin{thm}
			Non-termination is decidable for branch-effective branch-WSTS.
		\end{thm}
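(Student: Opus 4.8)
The plan is to follow the same template as the boundedness theorem just proved, reducing non-termination to a decidable property of the Reduced Reachability Tree. First I would observe that a branch-WSTS is, by definition, finitely branching and branch-\wqo, so Proposition~\ref{prop:finite} applies and guarantees that $\FRT(\system,x_0)$ is finite. This is the crucial structural ingredient: it ensures that any procedure building the tree node-by-node eventually stops, since each live node has only finitely many children (finite branching) and there are only finitely many nodes in total.

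Next I would argue that the tree is not merely finite but \emph{effectively} computable. Since $\system$ is branch-effective, for every state $x$ the successor set $\Post_\system(x)$ is finite and computable, so the children of any live node can be enumerated. The only other operation performed during the construction is the subsumption test, namely deciding, for a freshly generated node labelled $x$, whether some ancestor is labelled with an $x'$ satisfying $x' \leq x$. Because $\leq$ is decidable (effectivity) and every node has only finitely many ancestors, this test is decidable. Hence the construction of $\FRT(\system,x_0)$ is an effective procedure which, by the finiteness established above, halts after finitely many steps and returns the full tree together with the marking of each node as live or dead, and, for dead nodes, whether they are subsumed.

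Finally I would invoke Proposition~\ref{prop:termFRT}, which states that $\system$ does not terminate from $x_0$ \ifff $\FRT(\system,x_0)$ contains a subsumed node. Since the completed tree is finite and its subsumed nodes are already marked during construction, checking for the existence of a subsumed node is a trivial finite search. Combining the three ingredients yields decidability of non-termination.

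The main obstacle is not any single calculation but ensuring that the three pieces compose soundly: finiteness (Proposition~\ref{prop:finite}) is exactly what makes the tree-building procedure terminate, branch-effectivity is what makes each individual construction step executable, and Proposition~\ref{prop:termFRT} is what turns the resulting finite object into a sound and complete witness for non-termination. In particular, one must be careful that the subsumption relation used when building the tree coincides precisely with the one appearing in Proposition~\ref{prop:termFRT}, so that a subsumed node in the constructed tree genuinely certifies an infinite run; this is immediate from the single definition of $\FRT$ given above, but it is the point on which the correctness of the reduction rests.
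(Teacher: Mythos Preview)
Your proposal is correct and follows essentially the same approach as the paper: build $\FRT(\system,x_0)$, which is finite by Proposition~\ref{prop:finite} and effectively computable by branch-effectivity, and then invoke Proposition~\ref{prop:termFRT} to reduce non-termination to the presence of a subsumed node. The paper's proof is considerably terser but relies on exactly the same three ingredients you identify.
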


		\begin{proof}
			Given a branch-WSTS $\system = (X, \Sigma,$ $\srun, \leq, x_0)$, we apply Proposition~\ref{prop:termFRT} so that it is sufficient to build $\FRT(\system,x_0)$ and check if there exists a subsumed node. Since $\system$ is branch-effective, 
			we can effectively construct $\FRT(\system,x_0)$ and verify the existence of a subsumed node.
		\end{proof}
		
		Note that the non-termination and boundedness problems for the example machine $\machine_1$ in Figure~\ref{fig:fifo-machine-1} are, therefore, decidable. Since there exist nodes $n_0(x_0)$ and $n_1(x_1)$ in the $\FRT$ such that $x_0 = (q_0, \epsilon)$ and $x_1 = (q_0, a)$ such that $x_0 <x_1$ and $x_0 \srunp{+} x_1$, the machine $\machine_1$ is unbounded. Furthermore, since all unbounded machines are non-terminating, 
		$\machine_1$ is non-terminating. 
		
		On the other hand, boundedness becomes undecidable if we relax the strict monotony condition to general monotony (even when we strengthen the order to be \wqo). This is because boundedness is undecidable for Reset Petri nets \cite{dufourd_reset_1998}. Reset Petri nets are effective WSTS $\system = (X, \Sigma,$ $\srun, \leq, x_0)$, hence branch-effective WSTS, where $\leq$ is the \wqo on vectors of integers. Hence, we deduce:
		
		\begin{prop}
			Boundedness is undecidable for branch-effective branch-WSTS $\system = (X, \Sigma, \srun, \leq, x_0)$ where $\leq$ is a \wqo.
		\end{prop}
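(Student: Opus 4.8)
The plan is to transfer the known undecidability of boundedness for reset Petri nets into our setting, by exhibiting reset nets as a class of branch-effective branch-WSTS whose order happens to be a genuine \wqo. First I would recall from \cite{dufourd_reset_1998} that the boundedness problem is undecidable for reset Petri nets. It then suffices to argue that every reset net, viewed as the \OLTS induced by its firing relation, is a branch-effective branch-WSTS in which $\leq$ is a \wqo; the undecidability descends immediately to the whole class.

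Next I would fix the standard data. The state space is of the form $Q \times \nat^k$ ordered componentwise, and $\leq$ is a \wqo by Dickson's lemma \cite{dickson_finiteness_1913}. Because $\leq$ is a \emph{full} \wqo on $X$, the set $\Rho{\rho}$ of states visited along any run $\rho$ is a subset of a \wqo and hence itself \wqo; so every run is trivially branch-\wqo, and the system is branch-\wqo. Finite branching is clear, $\leq$ and $\srun$ are decidable, and $\Post(x)$ is a finite computable set, so the system is branch-effective.

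The one point that genuinely requires an argument is branch-monotony. Here I would use the well-known fact that reset nets are \emph{strongly} monotone in the label-preserving sense: if $x \leq y$ and a single transition $t$ is enabled at $x$ with $x \srunp{t} x'$, then $t$ is also enabled at $y$, and the resulting state $y'$ satisfies $x' \leq y'$ (for ordinary transitions by additivity of the firing effect, and for reset transitions because resetting a place to $0$ preserves $\leq$ on all remaining coordinates). Given a run $x \srunp{\sigma} x'$ with $x \leq x'$, I would iterate this single-step property letter by letter along $\sigma$: firing the same sequence from the larger state $x'$, each intermediate state dominates the corresponding intermediate state of the original run, so $\sigma$ is enabled from $x'$ and reaches some $y$ with $x' \leq y$. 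This is precisely branch-monotony. I would also emphasise that reset nets do \emph{not} satisfy \emph{strict} branch-monotony—consistent with the earlier boundedness theorem requiring strict branch-monotony—which is exactly why boundedness survives as undecidable here.

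The main obstacle, and really the only nonroutine step, is this verification of branch-monotony for the reset transitions together with the observation that strictness fails; the remaining content is a repackaging of reset nets as WSTS plus the remark that a genuine \wqo makes the branch-\wqo condition automatic. Assembling these facts produces a branch-effective branch-WSTS with \wqo order and undecidable boundedness, which proves the proposition.
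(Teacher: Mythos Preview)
Your proposal is correct and follows essentially the same route as the paper: invoke the undecidability of boundedness for reset Petri nets from \cite{dufourd_reset_1998} and observe that reset nets form a class of branch-effective branch-WSTS whose order is a genuine \wqo. The paper is terser---it simply records that reset nets are effective (labelled) WSTS and appeals to the earlier remark that labelled WSTS are a subclass of branch-WSTS---whereas you spell out the branch-monotony step explicitly via the label-preserving strong monotony of reset transitions and iterate it along $\sigma$; this is a useful clarification that the paper leaves implicit. One cosmetic point: the state space of a reset net is usually taken to be $\nat^k$ rather than $Q \times \nat^k$, but this does not affect the argument.
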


		\section{Classes of branch-WSTS} \label{sec:examples}

		We now study two classes of transition systems, namely, counter machines and FIFO machines. Counter machines are \wqo for the extended natural ordering; however, due to the presence of zero tests, general monotony fails, and hence, they are not \WSTS. We look at a subclass of counter machines, called counter machines with restricted zero tests (\CMRZ), where zero tests are present, but their behaviour is controlled. 
		
		In \cite{brand_communicating_1983}, the authors showed that general FIFO machines can simulate Turing machines, and hence, it is undecidable to verify boundedness. However, it is still a widely used model to represent asynchronously communicating systems. Moreover, they are neither \wqos nor monotone for the prefix ordering. Secondly, in this section, we show that a subclass of FIFO machines, namely input-bounded FIFO machines, are branch-\WSTS under the prefix ordering.
		
		Finally, we study more general counter machines and FIFO machines, and show that we can decide non-termination for a large subclass using ideas from branch-well-structured behaviours.
		
		\subsection{Counter machines with restricted zero tests}  \label{sec:cmrz}
		
		We define \emph{counter machines with restricted zero tests (\CMRZ)} imposing the following requirement:
		Once a counter has been tested for zero, it cannot be incremented or decremented any more.
		Formally, we say that $\countermachine$ is a counter machine with restricted zero tests if for all transition sequences of the form $q_0 \srunp{op(\counter_1), Z_1} q_1 \srunp{op(\counter_2), Z_2} \ldots \srunp{op(\counter_{n-1}), Z_{n-1}} q_{n-1} \srunp{op(\counter_n), Z_n} q_n$, for every two positions $1 \leq i \leq j \leq n$, we have $\counter_j \not\in Z_i$.	
		
		In \cite{bollig_bounded_2020}, it was shown that non-termination and boundedness are decidable for this class of systems by reducing them to the (decidable) reachability problem. However, using the alternative approach of branch-WSTS, we can verify that  non-termination and boundedness are decidable for this class without reducing these problems to reachability.
		
		Given a \CMRZ $\countermachine =(Q,\counterset , T, q_0)$, we consider the associated transition system $\system_\countermachine = (X_\countermachine,\Action_\countermachine,\srun_\countermachine, x_0)$. From this system, we construct an \OLTS over the extended ordering $\leq$ such that $(q, \cscontents) \leq (q', \cscontents')$ \ifff $q = q'$ and $\cscontents \leq \cscontents'$ (component-wise). Note that $(X_\countermachine, \leq)$ is a \wqo. Moreover, this ordering is a partial ordering.
		
		We now show that \CMRZ are branch-monotone for $\leq$. We drop the subscript while denoting $X_\countermachine$ for the remainder of this section, as it is clear from context.
		
		\begin{prop}\label{prop:counterBranchComp}
			\CMRZ are branch-monotone and strictly branch-monotone for the extended ordering $\leq$.
		\end{prop}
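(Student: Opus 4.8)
The plan is to fix a run $(q,\cscontents) \srunp{\sigma} (q',\cscontents')$ with $(q,\cscontents) \leq (q',\cscontents')$ and to re-fire the very same sequence of transitions from $(q',\cscontents')$, which realizes the label word $\sigma$ again. Since $(q,\cscontents) \leq (q',\cscontents')$ forces $q = q'$ and $\cscontents \leq \cscontents'$ componentwise, the two endpoints share the control-state $q$, so $\sigma$ labels a cycle at $q$ whose net effect on the counters is exactly the offset $\delta \defeq \cscontents' - \cscontents \geq 0$. I would then establish, by induction on the prefixes of $\sigma$, the invariant that the second execution from $(q',\cscontents')$ stays enabled and that, after each prefix, every counter holds precisely its value in the first execution plus $\delta$. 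If this invariant survives all of $\sigma$, the second run ends in $(q, \cscontents' + \delta)$, and since $\delta \geq 0$ we obtain $(q',\cscontents') \leq (q, \cscontents'+\delta)$, proving branch-monotony; strict branch-monotony follows by the same construction, because $\cscontents < \cscontents'$ makes $\delta \neq 0$, whence $\cscontents' + \delta > \cscontents'$.

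Maintaining the invariant decomposes into three obligations per step. Increments are harmless and only shift both runs by $+1$. For a decrement of a counter $\counter$, the first run had value $\geq 1$ before the step, so by the invariant the second run has value $\geq 1 + \delta(\counter) \geq 1$ there, the decrement fires, and since both runs drop by one the $+\delta$ relation is preserved. The only genuinely delicate obligation is a zero-test: at a step with test set $Z$ and $\counter \in Z$, the first run has $\counter = 0$, so the second run has $\counter = \delta(\counter)$, and the test succeeds precisely when $\delta(\counter) = 0$.

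The heart of the argument, and the step I expect to be the main obstacle, is therefore the claim that $\delta(\counter) = 0$ for every counter $\counter$ tested for zero anywhere along $\sigma$. This is exactly where the restricted-zero-test hypothesis enters: by the defining property of \CMRZ, once $\counter$ has been tested for zero at some position of $\sigma$, it is neither incremented nor decremented at any later position of $\sigma$. Hence $\counter$ keeps the value $0$ it had at the test until the end of $\sigma$, so its value in $\cscontents'$ is $0$; combined with $0 \leq \cscontents(\counter) \leq \cscontents'(\counter) = 0$ this gives $\cscontents(\counter) = \cscontents'(\counter) = 0$, and thus $\delta(\counter) = 0$. (The same restriction, taken with $i=j$, ensures that the counter tested at a step is not the one modified at that step, so the step-level obligations are mutually consistent.) With this claim in hand the zero-test obligation is discharged, the invariant goes through, and both branch-monotony and its strict version follow as described.
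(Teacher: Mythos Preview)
Your argument is correct and follows the same overall plan as the paper: re-fire $\sigma$ from $x'$ and show by induction on prefixes that every step remains enabled. The organization differs in a way that is worth noting. You maintain the \emph{exact} invariant $z' = z + \delta$ (with $\delta = \cscontents' - \cscontents$) and front-load the one nontrivial fact as a global lemma: $\delta(\counter)=0$ for every counter tested along $\sigma$, obtained from the observation that after its test $\counter$ is frozen by the \CMRZ restriction, hence $\cscontents'(\counter)=0$ and then $\cscontents(\counter)\le \cscontents'(\counter)=0$. The paper instead maintains only the weaker invariant $z \le z'$ and, at each zero-test step, argues locally that the tested counters are not touched along the path from the test position in the first execution to the corresponding position in the second (using \CMRZ on that intermediate segment). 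Your version is a bit tighter: the exact-offset invariant makes the inductive step for increments, decrements, and zero tests mechanical once the global lemma is in place, and it gives the strict case for free since $\delta\neq 0$ immediately yields $\cscontents'+\delta > \cscontents'$. The paper's version, on the other hand, avoids introducing the offset vector and stays closer to the abstract formulation of monotony. Either route is fine here; your parenthetical that $i=j$ in the \CMRZ condition rules out modifying a counter at the same step it is tested is a detail the paper leaves implicit.
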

		\begin{proof}
			
			Consider the \OLTS $\system = (X,\Action,\srun, \leq, x_0)$ associated to a \CMRZ with states $x, x' \in X$ such that $x \leq x'$ (resp. $x < x'$) and $x_0 \srunp{*} x \srunp{\sigma} x'$. We need to show that there exists a state $y$ such that $x' \srunp{\sigma} y$ and $x' \leq y$ (resp. $x'<y$). 
			
			We first prove the following claim:
			
			\noindent \emph{Claim: } For states $x, x' \in X$ such that $x_0 \srunp{*} x \srunp{\sigma} x'$ where $x \leq x'$ (resp. $x < x'$) and $|\sigma| = n$, the following property holds: For all $\nu \preceq \sigma$, we have $z, z' \in X$ such that $x \srunp{\nu} z$ and $x' \srunp{\nu} z'$ and $z \leq z'$ (resp. $z < z'$). We prove this claim by induction on the length of $\nu$. 
			
			For the base case, $|\nu| = 0$. We have from the hypothesis, $x \leq x'$, hence the claim is trivially true.
			
			Let us assume that the claim holds for $|\nu| = k$. We show that it holds for $|\nu| = k+1$. From the induction hypothesis, we know that for $\nu = \nu' \cdot a$ where $a \in \Action$, there exists $z_1, z'_1$ such that $x \srunp{\nu'} z_1$ and $x' \srunp{\nu'} z'_1$ and $z_1 \leq z'_1$. Since $x \srunp{\sigma} x'$, we know that there exists $z_2 \in X$ such that $x \srunp{\nu'} z_1 \srunp{a} z_2$. We can now be in one of the following cases:
			\begin{enumerate}[ label= \textbf{Case \roman*:}]
				 \item If $a$ is of the form $\noop$ and $Z = \emptyset$, then we can trivially execute $a$ from $z'_1$ and reach $z_2'$ such that $z_2 \leq z_2'$ (resp. $z_2 < z_2'$).
				\item The action $a$ is of the form $\inc{\counter}$ or $\dec{\counter}$, and the set of counters to be tested for zero $Z = \emptyset$, i.e. $z_1 \srunp{a} z_2$ only increments/decrements one counter and leaves the others unchanged (and no counters are tested to zero). Since $z_1 \leq z'_1$ (resp. $z_1< z'_1$), we know that $z_1, z'_1$ have the same control-state. Hence, this action is enabled in $z'_1$. Moreover, because of the \CMRZ property, we know that $\counter$ is not tested to zero even once until the state $z'_1$ is reached in this run. Therefore, we can execute the increment/decrement operation on $\counter$. Furthermore, since $z_1 \leq z'_1$ (resp. $z_1< z'_1$), the value of $\counter$ in $z'_1$ is greater than or equal to  (resp. strictly greater than) the value of $\counter$ in $z_1$. Hence, we can execute $a$ from $z'_1$ and reach a state $z_2'$ such that $z_2 \leq z_2'$ (resp. $z_2 < z_2'$).  
				\item $Z \neq \emptyset$ in the transition $z_1 \srunp{a} z_2$. Hence, there are a set of counters $Z$ which are tested to zero. By the \CMRZ property, we know that all counters $\counter \in Z$, are never incremented or decremented further. Hence, during the execution $z_1 \srunp{a} z_2 \srunp{w} z'_1$ where $w \in \Action^*$, we know that none of these counters are incremented or decremented. Hence, the value of the counters in $z'_1$ is also equal to zero. Therefore, we can execute $a$ from $z'_1$ to reach $z_2'$. Moreover, since $z_1 \leq z'_1$ (resp. $z_1 < z'_1$), and none of these counters change value, we can conclude that $z_2 \leq z_2'$ (resp. $z_2 < z_2'$).  
			\end{enumerate}
			
			Hence, as a special case of the claim where $\nu = \sigma$, we prove that \CMRZ are branch-monotone (resp. strictly branch-monotone).
		\end{proof}

		\begin{prop}\label{prop:counterBranchEff}
			\CMRZs are branch-effective branch-WSTS.
		\end{prop}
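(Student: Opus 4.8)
The plan is to establish the two defining components of a branch-effective branch-WSTS for $\CMRZ$: first, that the induced $\OLTS$ is a branch-WSTS (finitely branching, branch-monotone, and branch-\wqo), and second, that it is branch-effective (effective with computable $\Post$). Branch-monotony was already secured in Proposition~\ref{prop:counterBranchComp}, so the real work splits into verifying the remaining structural conditions and the effectivity conditions.

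For the branch-WSTS conditions, I would argue as follows. Finite branching is immediate: from any state $(q,\counterval)$, the number of outgoing transitions is bounded by $|T|$, since each transition in $T$ yields at most one successor. Branch-monotony (and strict branch-monotony) is exactly Proposition~\ref{prop:counterBranchComp}, which I may invoke directly. The only genuinely new point is branch-\wqo. Here I would first note that the extended ordering $\leq$ on $X_\countermachine = Q \times \nat^\counterset$ is already a \wqo: $Q$ is finite and $(\nat^\counterset,\leq)$ is a \wqo by Dickson's lemma, and the product of a finite set (under equality) with a \wqo is a \wqo. Since the full state space is \wqo, in particular every run visits a \wqo set of states, so the system is trivially branch-\wqo. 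I would remark that, unlike the FIFO examples, for $\CMRZ$ branch-\wqo holds for the cheap reason that the underlying order is genuinely \wqo; the branch relaxation is only needed to recover monotony, not wellness.

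For branch-effectivity, I must exhibit that $\system$ is effective (i.e. $\leq$ and each $\srunp{a}$ are decidable) and that $\Post_\system(x)$ is a finite computable set for every $x$. Decidability of $\leq$ is clear since it reduces to checking $q=q'$ and finitely many component-wise comparisons of natural numbers. Each transition relation $\srunp{a}$ is decidable: given $(q,\counterval)$, one checks membership of the corresponding tuple in the finite set $T$, verifies the zero-tests on the finitely many counters in $Z$, and applies the increment/decrement. Computability of $\Post_\system(x)$ follows by enumerating the finitely many transitions of $T$ outgoing from the control-state of $x$, checking applicability (zero-tests), and computing each successor; this yields a finite, explicitly computable set.

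I do not expect a serious obstacle here, since the counter setting is concrete and all data are finite. The only subtlety worth flagging is that one should resist conflating branch-\wqo with a nontrivial condition: for $\CMRZ$ it degenerates to ordinary \wqo via Dickson's lemma, so the content of the proposition lies entirely in observing that the relaxed \emph{monotony} (already proved) plus standard finiteness/decidability facts suffice. Assembling the pieces, $\system$ satisfies all conditions of a branch-WSTS and additionally meets the branch-effectivity requirements, which completes the proof.
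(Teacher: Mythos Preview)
Your proposal is correct and follows essentially the same approach as the paper. The paper's proof is terser because the branch-WSTS ingredients were already established in the surrounding text (the \wqo property of $(X_\countermachine,\leq)$ is noted just before Proposition~\ref{prop:counterBranchComp}, and branch-monotony is Proposition~\ref{prop:counterBranchComp} itself), so its proof only spells out the branch-effectivity conditions; your version simply makes all of these pieces explicit in one place.
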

		\begin{proof}
			Given an \OLTS $\system = (X, A,$ $\srun, \leq, x_0)$ associated to a \CMRZ, for any two states $x, x' \in X$, we can decide if $x \leq x'$. Furthermore, $\srun$ is decidable from the underlying finite automaton, and $\Post_\system(x)$ is computable for all $x \in X$. Hence, it is branch-effective.
		\end{proof}
		
		Hence, we deduce:
		
		\begin{thm}\label{thm:cmrzbranch}
			Non-termination and boundedness are decidable for counter machines with restricted zero tests.
		\end{thm}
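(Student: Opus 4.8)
The plan is to obtain both decidability results by assembling the structural facts already proved for \CMRZ and feeding them into the two generic decidability theorems for branch-effective branch-WSTS established in Section~\ref{subsec:branch-wsts}. No new combinatorial argument is needed; the work lies entirely in checking that every hypothesis of those theorems is satisfied.

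First I would confirm that the \OLTS $\system_\countermachine$ induced by a \CMRZ $\countermachine$, equipped with the extended ordering $\leq$, is a branch-WSTS. It is finitely branching because each control-state has only finitely many outgoing transitions in $T$. It is branch-monotone by Proposition~\ref{prop:counterBranchComp}. For the branch-\wqo condition, I would observe that it holds for a trivial reason: $(X_\countermachine, \leq)$ is \emph{itself} a \wqo (the control-states form a finite set and the counter values range over $(\nat^\counterset, \leq)$, a \wqo by Dickson's lemma), so every run, being a subset of $X_\countermachine$, is automatically branch-\wqo. Moreover, by Proposition~\ref{prop:counterBranchEff}, the system is branch-effective. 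Hence $\system_\countermachine$ is a branch-effective branch-WSTS.

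For non-termination this is already enough: the non-termination theorem for branch-effective branch-WSTS applies directly and gives decidability. For boundedness I would additionally invoke the two remaining hypotheses of the boundedness theorem. Strict branch-monotony is supplied by Proposition~\ref{prop:counterBranchComp}. That $\leq$ is a partial ordering follows from its definition --- it pairs equality of control-states with the component-wise order on $\nat^\counterset$, both of which are anti-symmetric --- as already remarked when the ordering was introduced. With all hypotheses verified, the boundedness theorem yields decidability of boundedness.

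The only point worth emphasising, rather than a genuine obstacle, is why branch-WSTS is the appropriate tool here: \CMRZ are already \wqo for $\leq$, so the entire difficulty is that zero tests break \emph{general} monotony. The real content therefore sits in Proposition~\ref{prop:counterBranchComp}, where the \CMRZ restriction --- a counter tested for zero is never modified afterwards --- is precisely what makes branch-monotony recoverable; once that is in hand, the present theorem is a direct corollary.
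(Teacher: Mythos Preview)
Your proposal is correct and mirrors the paper's own approach exactly: the paper does not give a separate proof of this theorem but simply writes ``Hence, we deduce'' after Propositions~\ref{prop:counterBranchComp} and~\ref{prop:counterBranchEff}, relying on the generic decidability theorems for branch-effective branch-WSTS just as you do. Your write-up is in fact more explicit than the paper's, spelling out the finitely-branching and branch-\wqo checks and the partial-ordering hypothesis, but the underlying argument is identical.
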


		\subsection{Input-bounded FIFO machines}\label{subsec:FIFO}
	
		The FIFO machine $\machine_1$ from Figure~\ref{fig:fifo-machine-1} is an example of a system that is branch-WSTS but the underlying set of states is not well-quasi-ordered. We first try to generalise a class of systems which are branch-\wqo, and which includes~$\machine_1$.

		We consider a restriction that has been studied in \cite{bollig_bounded_2020}, which we go on to prove is branch-\wqo. These systems are known as input-bounded FIFO machines, which we formally define below.

		Consider a FIFO machine $\machine = (Q, \alphabetset, T, q_0)$ over a set of channels $\chanset$. For $\ch \in \chanset$, we let $\sendproj{\ch}: \{\chanset \times \{!,?\} \times \alphabetset\}^* \to \alphabetset^*$ be the homomorphism defined by $\projsend{\ch! a}{\ch} = a$ for all $a \in \alphabetset$, and $\projsend{\alpha}{\ch} = \varepsilon$ if $\alpha$ is not
		of the form $\ch!a$ for some $a \in \alphabetset$. Moreover, $\projsend{\tau_1 \cdot \tau_2}{\ch} = \projsend {\tau_1}{\ch} \cdot \projsend{\tau_2}{\ch}$. Similarly, for $\ch \in \chanset$, we let $\recproj{\ch}: \{\chanset \times \{!,?\} \times \alphabetset\}^* \to \alphabetset^*$ be the homomorphism defined by $\projrec{\ch? a}{\ch} = a$ for all $a \in \alphabetset$, and $\projrec{\alpha}{\ch} = \varepsilon$ if $\alpha$ is not
		of the form $\ch?a$ for some $a \in \alphabetset$. Moreover, $\projrec{\tau_1 \cdot \tau_2}{\ch} = \projrec {\tau_1}{\ch} \cdot \projrec{\tau_2}{\ch}$.
		
		We define the input-language of a FIFO channel $\ch$ as the set of all words that are sent into the channel, i.e. $\projsend{\Traces{\machine}}{\ch}$. We say that the machine is \emph{input-bounded} if for each $\ch \in \chanset$, there is a regular bounded language $\IBL_\ch$ (i.e. language of the form $w_1^*\ldots w_n^*$) such that $\projsend{\Traces{\machine}}{\ch}\subseteq \Pref{\IBL_\ch}$, i.e. the send-projection of every run of the FIFO machine over each channel is a prefix of an input-bounded language. We say that $\IBL_\ch$ is distinct-letter if $|w_1 \ldots w_n|_a = 1$ for all $a \in \alphabetset$.
		
		\begin{prop} \label{prop:IBwqo}
			Input-bounded FIFO machines are branch-\wqo for the prefix-ordering $\leq_{p}$.
		\end{prop}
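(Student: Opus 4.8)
The plan is to fix an arbitrary run $\rho = x_0 \srun x_1 \srun \cdots$ of (the \OLTS induced by) $\machine$ and to show that $\Rho{\rho}$ is \wqo for $\extpref$; if $\rho$ is finite this is immediate, so I would assume $\rho$ infinite. Since $\extpref$ compares two states only when they agree on the control-state and is then the component-wise prefix order on the channel contents, the ordered set $(Q \times \prod_{\ch \in \chanset} \alphabetset^*, \extpref)$ is exactly the product of the finite order $(Q,=)$ with the orders $(\alphabetset^*, \pref)$, one per channel. As the finite product of \wqos is again a \wqo (the same extraction argument as for the product of branch-\wqos) and a subset of a \wqo is a \wqo, it suffices to prove the following local statement: for every channel $\ch \in \chanset$, the set $C_\ch \defeq \{\,\chcontents_\ch \mid (q,\chcontents) \in \Rho{\rho}\,\}$ of channel-$\ch$ contents occurring along $\rho$ is \wqo for $\pref$.

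First I would record the dynamics of a single channel. Writing $s^{(\ch)}_\rho = \alpha_1\alpha_2\cdots$ for the (finite or infinite) sequence of letters sent into $\ch$ along $\rho$, the FIFO semantics give that the content of $\ch$ after any prefix of $\rho$ is a factor $\alpha_{r+1}\cdots\alpha_t$, where $t$ is the number of sends and $r \le t$ the number of receives performed so far on $\ch$; both counters are non-decreasing along $\rho$, so each content is a \emph{sliding window} over $s^{(\ch)}_\rho$. If only finitely many sends occur on $\ch$, then $s^{(\ch)}_\rho$ is a finite word, $C_\ch$ is contained in its finite set of factors, and is therefore trivially \wqo. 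The interesting case is when $s^{(\ch)}_\rho$ is infinite, and here the input-boundedness hypothesis enters: every finite prefix of $s^{(\ch)}_\rho$ is the send-projection on $\ch$ of some trace of $\machine$, hence lies in the regular, prefix-closed language $\Pref{\IBL_\ch}$. Reading $s^{(\ch)}_\rho$ in a DFA for $\Pref{\IBL_\ch}$ yields an infinite run that never enters the sink state; since the automaton is finite this run is ultimately a lasso, so $s^{(\ch)}_\rho$ is \emph{ultimately periodic}, say $s^{(\ch)}_\rho = u\,v^{\omega}$ with $u,v$ finite and $v \neq \emptyword$. I expect establishing this ultimate periodicity to be the main obstacle, since it is exactly the point where input-boundedness is used to tame the otherwise uncontrolled channel contents (for instance, $\machine_2$ of Figure~\ref{fig:branch-init} is not input-bounded and indeed produces the antichain $\{(q_1,c^n b)\}$).

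Given $s^{(\ch)}_\rho = u v^{\omega}$, I would finish by exhibiting $C_\ch$ as a finite union of chains. For a fixed value $r$ of the receive-counter, the contents of the form $\alpha_{r+1}\cdots\alpha_t$ are all prefixes of the single tail $\alpha_{r+1}\alpha_{r+2}\cdots$ and hence form a $\pref$-chain. Moreover, for $r \ge |u|$ this tail depends only on $(r-|u|) \bmod |v|$, so all starting positions beyond $|u|$ fall into at most $|v|$ classes sharing the same tail, and the contents in each class are prefixes of one common ultimately periodic word, again a chain. Thus $C_\ch$ is covered by at most $|u| + |v|$ chains. Each chain is totally ordered and, as $\pref$ is well-founded, is a well-order and in particular a \wqo; a finite union of \wqos over the same order is a \wqo, so $C_\ch$ is \wqo. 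Recombining over the finitely many channels and the finite control-state set as in the first paragraph then yields that $\Rho{\rho}$ is \wqo, completing the argument; the only non-bookkeeping ingredient is the ultimate-periodicity lemma above.
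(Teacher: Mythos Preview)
Your overall strategy—reducing to a single channel, viewing the successive contents as sliding windows over the send stream $s^{(\ch)}_\rho$, establishing that this stream is ultimately periodic, and then exhibiting $C_\ch$ as a finite union of $\pref$-chains—is sound and in fact yields a stronger conclusion than the paper's argument. The paper works directly with the decomposition $\sigma[i] = v_1^{n_{1,i}} \cdots v_m^{n_{m,i}} u_m$, locates the least index $p$ whose exponent is unbounded, argues that no $v_\ell$ with $\ell > p$ can ever be sent (otherwise the exponent of $v_p$ would be frozen), and concludes that eventually only $v_p$ is written; from this it extracts a single increasing pair along the run. Your chain decomposition instead shows that the whole set $\Rho{\rho}$ is \wqo, and the product reduction to one channel at a time is a clean simplification that the paper only gestures at.

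The one genuine gap is precisely where you flag the main obstacle. The inference ``the DFA run on $s^{(\ch)}_\rho$ is a lasso, hence $s^{(\ch)}_\rho$ is ultimately periodic'' is invalid for arbitrary DFAs: a one-state DFA accepting $\alphabetset^*$ produces a (trivially) lasso-shaped run on every infinite input, periodic or not—eventual periodicity of the state sequence says nothing about the input unless the transitions inside the recurrent part are forced. What rescues the argument here is a structural property specific to \emph{bounded} regular languages: in a DFA recognising a bounded language, every non-trivial strongly connected component is a single simple cycle (equivalently, no state admits two distinct first-return words). Since $\Pref{\IBL_\ch}$ is itself bounded, once the run is trapped in an SCC the next input letter is uniquely determined at each step, and ultimate periodicity follows. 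You should either state and justify this characterisation explicitly, or replace the DFA argument by the direct combinatorial one the paper uses: from $\sigma[i] \in \Pref{w_1^* \ldots w_n^*}$ for all $i$, take the least $p$ with $\sup_i n_{p,i} = \infty$ and observe that once any $w_\ell$ with $\ell > p$ has been sent the exponent of $w_p$ is frozen thereafter—so no such $w_\ell$ is ever sent, and $s^{(\ch)}_\rho$ ends in $w_p^\omega$.
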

		\begin{proof}
			Let us consider the transition system $\system_{\machine} = (X_{\fifo},\Sigma_{\fifo},\srun_\fifo,x_0)$ associated an input-bounded FIFO machine $\fifo$ with a single channel $\ch$, and an infinite run $ x_0 \srun x_1\srun x_2\srun ... x_i...$ with $x_i=(q_i,w_i) \in X_{\fifo}$. 
			
			The infinite run is of the form $x_0 \srunp{\sigma_1} x_1 \srunp{\sigma_2} x_{2}... x_{i-1} \srunp{\sigma_i} x_i \srunp{\sigma_{i+1}} \ldots $ and we denote
			$\sigma[i] = \projsend{\sigma_1\sigma_2\ldots \sigma_i}{\ch}$.
			It can be observed that $\sigma[i]$ is a prefix of $\sigma[i+1]$ for all $i \in \mathbb{N}$. Since $\sigma[i]$ is of the form $v_1^{n_{1,i}}...v_m^{n_{m,i}}u_m$ 
			for $u_m \preceq v_m$ and $n_{1,i}..., n_{m,i} \geq 0$ and $1 \leq m \leq k$, the infinite sequence $(\sigma[i])_{i \in \mathbb{N}}$ satisfies two possible exclusive cases:
			\begin{enumerate}[ label= \textbf{Case \roman*:}]
				
				\item There exists an $i_0$ such that $\forall i \geq i_0, \projsend{\sigma_i}{\ch}=\epsilon$ so there exists $i_1 \geq i_0$ such that for all $i \geq i_1, w_i=w_{i+1}$. Hence, because there are finitely many control-states, we deduce that there exist $i_2,i_3\geq i_1$ such that  $x_{i_2} \srunp{+} x_{i_3}$ and $x_{i_2}=x_{i_3}$ hence also in particular $x_{i_2}  \leq_{p} x_{i_3}$.
				
				\item There are infinitely many indices $i$ such that $\projsend{\sigma_i}{\ch} \neq \epsilon$, which means that the infinite sequence $(\sigma[i])_{i \in \mathbb{N}}$ is not stationary. This implies that the set $S_{\sigma}=\{(n_{1,i}, ... , n_{k,i})  \mid i \in \mathbb{N}\}$, associated with $\sigma$, is infinite. Hence, there exists a least index $p$ such that the set $\{n_{p,i}\}_{i \in \mathbb{N}}$ is infinite. Then the set $F=\{(n_{1,i}, ... , n_{p-1,i})  \mid i \in \mathbb{N} \}$ is finite.

				We claim that for all indices $\ell \geq p+1$, $n_{\ell,i}  = 0$ for all $i$. Let us assume to the contrary that there is some index $\ell \geq p+1$ and $i_0$ such that $n_{\ell, i_0} \neq 0$. This means that the word $v_{\ell}$ is in the channel in state $x_{i_0}$, which means that the word $v_{\ell}$ was sent to the channel before (or at) the step $i_0$, i.e, $\sigma[i_0] = v_1^{n_{1,{i_0}}}...v_p^{n_{p,{i_0}}}...v_m^{n_{m,{i_0}}}u_m$ for some $u_m \preceq v_m$ and $n_{\ell,{i_0}} > 0$ and $1 \leq m \leq k$. So, in particular, word $v_p$ cannot be sent after $i_0$, hence, $n_{p,i} = n_{p,{i_0}}$ $ \forall i>i_0.$ Hence,  $\{n_{p,i}\}_{i \in \mathbb{N}}$ is finite which is a contradiction to our assumption that  $\{n_{p,i}\}_{i \in \mathbb{N}}$ is infinite. 
				
				This means that after some state $x_t$, we only write word $v_p$ to the channel. Since, the set $F=\{(n_{1,j}, ... , n_{p-1,j})  \mid j \in \mathbb{N} \}$ is finite, we can extract an infinite subsequence $ (q,w_i)_{i \in K \subseteq \mathbb{N}}$ where
				$w_i = uv_p^{n_{p,i}}$ with $u \in F$ and $(n_{p,i})_{i \in K}$ is non-decreasing. Hence, there exist two indices $a,b > 0$ such that $w_a = u.v_p^{n_{p,a}}$ and $w_{a+b} = u.v_p^{n_{p,a+b}}$ and $n_{p,a} \leq n_{p,a+b}$ hence $w_{a+b}=w_a.v_p^{n_{p,a+b}-n_{p,a}}$  hence $w_{a} \preceq w_{a+b}$. So we have found two states $x_a, x_{a+b}$ such that $x_a \leq_{p} x_{a+b}$. Hence, the machine is branch-\wqo for the prefix ordering.
				
			\end{enumerate}
			
			Using the same argument for each channel, we can conclude that input-bounded FIFO machines with multiple channels are branch-\wqo for the extended prefix-ordering $\leq_{p}$.
		\end{proof}
		It is clear that $\machine_1$ from Figure~\ref{fig:fifo-machine-1} belongs to this class of FIFO machines. But, we see below that the class of input-bounded FIFO machines is not branch-WSTS.
		\begin{exa}\label{ex:input-fifo-1} Consider the FIFO machine $\machine_3$ with a single channel $\ch$ in Figure~\ref{fig:input-fifo-1} that is (distinct-letter) input-bounded for $\IBL_\ch = (ab)^*$. We have 
			$(q_0, \epsilon) \srunp{\sigma} (q_0,b)$, where $\sigma = !a!b?a$. Moreover, $(q_0, \epsilon) \extpref (q_0, b)$. However, we cannot repeat $\sigma$ from $(q_0, b)$, as it is not possible to execute $(q_2, bab) \srunp{?a}$. Hence, the machine is not branch-monotone for the prefix-ordering.
		\end{exa}

		\begin{figure}[!ht]
			\centering
				\begin{tikzpicture}[->, auto, thick]

					\node[place, initial, initial text=] (p2) [right= of p1]{$q_0$};
					\node[place] (p3) [right= of p2] {$q_1$};
					\node[place] (p4) [right= of p3] {$q_2$};

					\path[->]
					(p2) edge node[] {$!a$} (p3)
					
					(p4) edge [bend right=40] node[swap] {$?a$} (p2)
					(p3) edge node[] {$!b$} (p4)
					;
					
				\end{tikzpicture}
			\caption{The FIFO machine $\machine_3$\label{fig:input-fifo-1}}
			
		\end{figure}
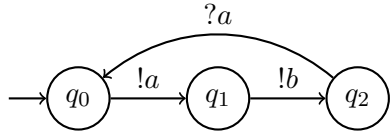

		From the above counter-example, we see that the class of distinct-letter input-bounded FIFO machines are not branch-WSTS. Hence, we impose another restriction on such systems. 
		
		Consider an input-bounded FIFO machine $\hat \machine = (\hat Q, \alphabetset, \hat T, q_0)$ (over a set of channels $\chanset$) with a distinct-letter bounded input-language $\IBL = (\IBL_\ch)_{\ch \in \chanset}$. We first consider a deterministic
		 finite automaton $\A = (Q_\A,\fifoAlpha{\fifo},\fifotransitions_\A,q^0_\A,F_\A)$,
		with set of final states $F_\A \subseteq Q_\A$, whose language is $\EL(\A) =  \sendL \mathrel{\cap} \Pref{\recL}$, where $\sendL = \{\sigma \mid \projsend{\sigma}{\ch} \in \IBL_\ch \text{ for all } \ch \in \chanset\}$ and $\recL = \{\sigma \mid \projrec{\sigma}{\ch} \in \IBL_\ch\text{ for all } \ch \in \chanset\}$. 
		
		With this, we define $ \bar\fifo_{\IBL} =  (Q, A, T, q_0)$ as the product of the FIFO machine $\hat \fifo$ and $\A$ in the expected manner.
		In particular, the set of control-states of $\bar\fifo_{\IBL}$ is $Q =  \hat Q \times Q_\A$, and its initial state
		is the pair $q_0 = (\hat{q_0},q^0_\A)$.
		We assume we are given an input-bounded FIFO machine $\hat\fifo$ and its input language $\IBL$. Let $\bar\fifo_{\IBL}$ be the FIFO machine constructed as above. Then:
		
		\begin{prop}
			The machine $\bar\fifo_{\IBL}$ is branch-monotone.
		\end{prop}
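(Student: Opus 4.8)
The plan is to verify the branch-monotony condition directly: given $x \srunp{\sigma} x'$ with $x \extpref x'$, I must exhibit $y$ with $x' \srunp{\sigma} y$ and $x' \extpref y$. Since $\extpref$ forces equality of the \emph{full} control-state, $x$ and $x'$ agree both on the $\hat\fifo$-component $\hat q$ and on the $\A$-component $q_\A$, and differ only in that each channel content grows: $\chcontents_\ch \pref \chcontentsp_\ch$ for every $\ch \in \chanset$. I would first isolate the two obstructions to replaying $\sigma$ from $x'$. Sends are always enabled, so the only issues are the $\A$-transitions and the receive actions. The $\A$-part is free: $\A$ is deterministic and $\sigma$ leads $q_\A$ back to $q_\A$ (this being exactly the $\A$-component of both $x$ and $x'$), so replaying $\sigma$ from $x'$ reuses the same $\A$-transitions and again returns to $q_\A$. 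The whole difficulty is therefore to show that every receive of $\sigma$ stays enabled when fired from $x'$ and that contents only grow. It suffices to treat a single channel, the multi-channel case following componentwise as in Proposition~\ref{prop:IBwqo}.

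The heart of the argument is a \emph{phase} lemma. Because $x$ and $x'$ reach the same state $q_\A$ of $\A$, they have the same residual language with respect to $\EL(\A) = \sendL \cap \Pref{\recL}$. I would show that, for a bounded \emph{distinct-letter} language $\IBL_\ch = w_1^* \cdots w_n^*$, this residual pins down, for the channel, the position reached inside $\IBL_\ch$ by both the send-projection and the receive-projection of the run leading to the state, i.e. the block index together with the offset within that block; call this pair the send-phase, resp.\ receive-phase. The distinct-letter hypothesis is precisely what makes a prefix of $\IBL_\ch$ parse unambiguously, so that equal residual forces equal phase. As a consequence, since $x \srunp{\sigma} x'$ returns both phases to their original values, and since blocks of $\IBL_\ch$ cannot be revisited once left, the projections of $\sigma$ on that channel are whole cyclic powers of a single block: writing $w_i = pq$ for the send-block with offset $p$, one gets $\projsend{\sigma}{\ch} = (qp)^{c}$ for some $c \ge 0$, and symmetrically for $\projrec{\sigma}{\ch}$. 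In particular every reachable channel content is a factor of the periodic word spelled by $\IBL_\ch$, read from the receive-phase up to the send-phase.

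With this normal form in hand the replay becomes a prefix computation. Reading $\chcontentsp_\ch$ from its head is exactly following $\IBL_\ch$ onward from the common receive-phase, and the letters that $\sigma$ tries to receive, $\projrec{\sigma}{\ch}$, are — by the original run from $x$ — this same continuation; appending the sent word $\projsend{\sigma}{\ch}$ prolongs the content along the very same periodic word. Hence at every prefix of $\sigma$ the word received so far is a prefix of $\chcontentsp_\ch$ followed by the sends performed so far, the required length bound following from validity of the original run together with $|\chcontents_\ch| \le |\chcontentsp_\ch|$. Thus each receive is enabled and $x' \srunp{\sigma} y$ for some $y$. Finally, because $\projrec{\sigma}{\ch}$ is a whole number of cyclic periods it returns the receive-phase to itself, so the content of $y$ is again the $\IBL_\ch$-continuation from that phase, now of length $|\chcontentsp_\ch| + |\projsend{\sigma}{\ch}| - |\projrec{\sigma}{\ch}| \ge |\chcontentsp_\ch|$; therefore $\chcontentsp_\ch$ is a prefix of the content of $y$, i.e.\ $x' \extpref y$.

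The main obstacle is making the phase lemma precise and robust: proving that reaching a common $\A$-state forces equality of block index and offset for both projections in $\IBL_\ch$, and that distinct-letteredness is exactly the hypothesis under which this holds and under which $\projsend{\sigma}{\ch}$ and $\projrec{\sigma}{\ch}$ take the cyclic-power form. Example~\ref{ex:input-fifo-1} shows why this cannot be dispensed with: there the offending pair of states is no longer $\extpref$-comparable in the product, precisely because replaying would drive the receive-projection $aa$ out of $\Pref{(ab)^*}$ and hence to a different $\A$-state. I would therefore state the phase correspondence as a separate lemma about $\A$ and distinct-letter bounded languages, after which enabledness and the content comparison reduce to elementary identities on powers of a single block $w_i = pq$.
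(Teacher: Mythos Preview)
Your route via a phase lemma is a different organisation from the paper's. The paper argues by case analysis on the channel content $w$ at $x$: when $\projrec{\sigma}{\ch}\neq\varepsilon$ and $w\neq\varepsilon$, it uses $w\pref w'$ together with distinct-letteredness to show that $w$ lies entirely inside a \emph{single} block $w_i$, so that both $\projsend{\sigma}{\ch}$ and $\projrec{\sigma}{\ch}$ use only letters of that one $w_i$; the replay and the inequality $w'\pref w''$ then reduce to rotations of $w_i$. The $\A$-component is invoked only in the residual subcase $w=\varepsilon$, through the loop $q_\A\srunp{\sigma}q_\A$ which witnesses $\tau\sigma\sigma\in\Pref{\recL}$. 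You instead let the $\A$-state do all the structural work: equal $\A$-states force equal send- and receive-phases, yielding the cyclic-power form of both projections uniformly and explaining in one stroke why the product construction removes the obstruction of Example~\ref{ex:input-fifo-1}. That is a cleaner decomposition, and the phase lemma is correct (equal state implies equal residual, and for distinct-letter $\IBL_\ch$ the residual of $\sendL\cap\Pref{\recL}$ does determine both phases, since send and receive actions are disjoint).

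There is, however, a real gap in your last step. The phase lemma tells you that $\projsend{\sigma}{\ch}$ is a cyclic power of the send-block $w_i$ and $\projrec{\sigma}{\ch}$ a cyclic power of the receive-block $w_k$, but nothing so far forces $k=i$. Your conclusion that $\chcontentsp_\ch$ is a prefix of the content of $y$ because both are ``the $\IBL_\ch$-continuation from that phase'' and one is longer is invalid when $k<i$: two words starting at the same receive-phase but with different block exponents (e.g.\ $a^2b$ versus $a^3$ in $a^*b^*$) are not prefix-comparable, and there is no single ``periodic word spelled by $\IBL_\ch$'' once $n>1$. What is missing is exactly the fact the paper isolates in its Case~ii(a): the hypothesis $w_\ch\pref w'_\ch$, combined with distinct letters, forces either $\projrec{\sigma}{\ch}=\varepsilon$ or $k=i$. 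Alternatively you can close the gap without that case split: writing $w'_\ch=w_\ch\cdot v_\ch$, both $v_\ch$ and $\projsend{\sigma}{\ch}$ are powers of the same cyclic conjugate $qp$ of $w_i$ (since $w_\ch$ and $w'_\ch$ share the send-phase), hence commute; combining this with the FIFO identity $w_\ch\cdot\projsend{\sigma}{\ch}=\projrec{\sigma}{\ch}\cdot w'_\ch$ gives $w'_\ch\cdot\projsend{\sigma}{\ch}=\projrec{\sigma}{\ch}\cdot w'_\ch\cdot v_\ch$, i.e.\ $y_\ch=w'_\ch\cdot v_\ch$, which is the desired $x'\extpref y$.
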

		
		\begin{proof}
			We show this proof for a FIFO machine with a single channel $\ch$, but the same argument can be extended to FIFO machines with multiple channels.
			
			Let $\mathcal{L} = w_1^*\ldots w_n^*$. Let $(q_0, \varepsilon) \srunp{\tau} (q, w) \srunp{\sigma} (q,w')$ such that $w \preceq w'$. To prove branch-monotony, we need to show that there exists $w''$ such that $(q, w') \srunp{\sigma} (q, w'')$ and $w' \preceq w''$.
			
		 Since $\IBL$ is a bounded language, we know that $\projsend{\tau}{\ch} = w_1^{n_1}\ldots w_i^{n_i}. u_i$ where $u_i \preceq w_i$ and $1 \leq i \leq n$ and $n_p\in \mathbb{N}$ for all $1 \leq p \leq i$. Moreover, $\projsend{\sigma}{\ch} \in \Pref{u'_i \cdot w_i^{n'_i} \ldots w_j^{n_j}}$ where $u_i. u'_i = w_i$ and $1 \leq i \leq j \leq n$. Let us consider the channel content $w$ now. From the characterisation of $\tau$ above, we can express $w = v_\ell \cdot w_{\ell}^{n_\ell} \ldots w_i^{n_i} . u_i$, where $1 \leq \ell \leq i$ and $v_\ell \in \Suf{w_\ell}$. Now, let us analyse the cases based on the value of $\projrec{\sigma}{\ch}$: 
			\begin{enumerate}[ label= \textbf{Case \roman*:}]
				\item $\projrec{\sigma}{\ch} = \varepsilon$. In other words, this means that there are only send actions in $\sigma$. Hence, it is possible to repeat the same sequence of moves as we are in the same control-state $q$. Therefore, $(q, w') \srunp{\sigma} (q,w'')$ for some value of $w''$.  Furthermore, since $\sigma$ has only send actions, $w' = w.v$ for some $v \in A^*$. Therefore, after we repeat $\sigma$ once again from $(q,w')$, we reach $(q, w'')$ such that $w'' = w' . v = w. v. v$. Therefore, $(q, w') \leq_{p} (q, w'')$ and we are done with this case. 
				
				\item $\projrec{\sigma}{\ch} \neq \varepsilon$.
				\begin{enumerate}[ label= \textbf{Case \alph*:}]
				 \item Let us first consider that  $w \neq \varepsilon$. 
				 
				 Let us first assume $\exists p_1, p_2 \cdot~ 1 \leq p_1 < p_2 \leq i$ such that $w = v_{p_1} . v . u_{p_2}$ where $v_{p_1} \in \Suf{w_{p_{1}}}$, $u_{p_2} \in \Pref{w_{p_2}}$, and $v_{p_1}, u_{p_2} \neq \varepsilon$ and $v \in \alphabetset^*$. In other words, there is a non-empty part of at least two distinct words in the channel contents $w$. Since the FIFO machine is input-bounded, we can conclude that $\projsend{\sigma}{\ch}$ does not contain any occurrences of alphabets from the word $w_{p_1}$. Therefore, in order for the condition $w \preceq w'$ to be satisfied, it is necessary that $\projrec{\sigma}{\ch} = \varepsilon$, which is a contradiction to our assumption. 
				
					 Therefore, if $w \neq \varepsilon$, then the only possibility is that $w = v_i. w_i^{n_i}. u_i$ such that $v_i \in \Suf{w_i}$. Therefore, $\projrec{\sigma}{\ch}$ only consists of letters from words $w_j$ such that $j \geq i$. However, since $w \preceq w'$, we can be certain that it only consists of letters from the word $w_i$ (if the head of the channel contains letters from a word $w_j$ where $j>i$ from the channel, then there can be no occurrence of the word $w_i$ in the channel). Therefore, $\projrec{\sigma}{\ch}$ consists of only letters belonging to $w_i$. Moreover, since $\projsend{\sigma}{\ch}$ is non-empty, there is at least one letter that is read from $w$. Therefore, the first letter that is sent in the sequence $\sigma$ belongs to the word $w_i$ (to ensure $w \preceq w'$). 
					
					Let us consider this subsequence $\sigma'$ from $(q,w)$ to the first send action. Let us say we have $(q,w) \srunp{\sigma'} (q',v')$. Now, since the subsequence $\sigma'$ only consists of receptions from $(q,w)$, along with the first send action, this subsequence is also possible from $(q, w.v)$ for all $v \in \Sigma^*$. Therefore, we can execute the same sequence from $(q, w')$. Hence, $\projsend{\tau . \sigma . \sigma'}{\ch} \in \IBL$. Therefore, since $\mathit{Alph}(\projsend{\sigma'}{\ch}) \in \mathit{Alph}{(w_i)}$, we can be sure that $\mathit{Alph}{(\projsend{\sigma}{\ch})} \in \mathit{Alph}{(w_i)}$. Therefore, $\sigma$ only sends and receives letters from a single word $w_i$. 
					
					Moreover, since the system is input-bounded, and the first send action in $\sigma'$ matches the first send action in $\sigma$, we see that $w' = v_i . w_i^{n_i}. u_i . (v'_i . w_i^{n'_i}. u_i)$ ${ = w. (v'_i . w_i^{n'_i}. u_i)}$ such that $u_i.v'_i = w_i$. Therefore, we can repeat this sequence from $(q, w')$ and reach a state $(q, w'')$ such that $w' \preceq w''$, and hence, it is branch-monotone for this case.
					
					\item The final case we need to consider is $w = \varepsilon$. In this case, it is clear that $\sigma$ consists of at least one send action before the first reception. Therefore, because of the input-bounded property and the fact that this action can be executed at $(q, w')$, we can once again see that $\projsend{\sigma}{\ch}$ consists only sending only letters from a single word. Moreover, since the same action can be executed, once again we see that $\projsend{\sigma}{\ch} = v_j . w_j^{n_j} . u_j$ such that $u_j. v_j = w_j$. Therefore, $\projrec{\sigma}{\ch} \in \Pref{v_j . w_j^{n_j} . u_j}$. 
					
					Now let us consider the run $\tau.\sigma$ in the automaton $\A$ that we constructed. Since $\tau.\sigma$ is a run in $\bar\fifo_{\IBL}$, there is also a run in $\A$ such  that $q^0_\A \srunp{\tau} q_s \srunp{\sigma} q_s$. Moreover, we can also repeat $\sigma$ to obtain $q^0_\A \srunp{\tau} q_s \srunp{\sigma} q_s \srunp{\sigma} q_s$. Therefore, $\tau.\sigma.\sigma \in \Pref{\recL}$. Moreover, since $\projrec{\sigma}{\ch} \neq \varepsilon$, $\projrec{\sigma}{\ch} = u'_j .w_j^{n'_j}.v'_j$ such that $v'_j.u'_j = w_j$. Therefore, we can repeat $\sigma$ from $(q,w')$ in $\bar\fifo_{\IBL}$, and we reach a state $(q,w'')$ such that $w' \preceq w''$.

				\end{enumerate}
				
			\end{enumerate}
			
			Hence, we see that for all cases, if $(q_0, \varepsilon) \srunp{\tau} (q, w) \srunp{\sigma} (q,w')$ such that $w \preceq w'$, then there exists $w''$ such that $(q, w') \srunp{\sigma} (q, w'')$ and $w' \preceq w''$. 
			
			If we have more than one channel, we can extend this argument to consider each channel, and hence, $\bar\fifo_{\IBL}$ is branch-monotone.
		\end{proof}
		\begin{exa} Consider the FIFO machine $\fifo_4$ as in Figure~\ref{fig:input-fifo-2} that is input-bounded for $\IBL = (ab)^*$. As we saw in Example~\ref{ex:input-fifo-1}, it is not branch-monotone. However, let us consider the product $\bar\fifo_{4,\IBL}$ of $\fifo_4$ along with the finite automaton $\A$ that recognises $\sendL \cap \Pref{\recL}$. We depict only
			accessible states of $\fifo_{4,\IBL}$, from which we can still complete the word read so far to a word belonging to $\sendL \cap \Pref{\recL}$. Here, we see that the run we had in counter-example previously no longer violates branch-monotony. In fact, the loop that could not be realised has now been unfolded, and we can see that the FIFO machine $\fifo_4$ has only a finite run. Trivially, due to the absence of loops, we see that it is now branch-monotone for the prefix-ordering.
		\end{exa}
		
		\newcommand{\sname}[3]{\begin{array}{c}#1\\[-0.8ex]{\scalebox{0.7}{#2,#3}}\end{array}}

		\begin{figure}[!ht]
			\centering
			\scalebox{0.9}{
				\begin{tikzpicture}[->, auto, thick]
						\begin{scope}
						\node[place, initial, initial text=] (p2){$q_0$};
						\node[place] (p3) [right= of p2] {$q_1$};
						\node[place] (p4) [right= of p3] {$q_2$};

						\path[->]
						(p2) edge node[] {$!a$} (p3)
						
						(p4) edge [bend right=40] node[swap] {$?a$} (p2)
						(p3) edge node[] {$!b$} (p4)
						;
						\node[][below= 0.6cm of p3]{ $\fifo_4$ };
					\end{scope}
					\begin{scope}[shift={(8,2)}]
						\node[place, initial, initial text=,accepting] (q0) {$s_{0}$};
						\node[place] (q1) [right= of q0] {$s_1$};
						\node[place,initial, initial text=,accepting] (q2) [below= 2.5cm of q0] {$r_0$};
						\node[place,accepting] (q3) [right= of q2] {$r_1$};
						\path[->]
						(q0) edge[bend left] node[] {$!a$} (q1)
						(q1) edge[bend left] node[] (l2){$!b$} (q0)
						(q2) edge[bend left] node[] {$?a$} (q3)
						(q3) edge[bend left] node[] (l1) {$?b$} (q2)
						;
						\node[][below= 0.2cm of l2]{ $L(\mathcal{A}_s)=\sendL$ };
						\node[][below= 0.2cm of l1]{ $L(\mathcal{A}_r)=\Pref{\recL}$ };
					\end{scope}

					\begin{scope}[shift={(0,-5)}]
						\node[place, initial, initial text=, text width=0.7cm, minimum size=1.4cm, align=center] (r0){$q_0$, $s_0, r_0$};
						\node[place, text width=0.7cm, minimum size=1.4cm, align=center] (r1) [right= of r0, minimum size=1.4cm] {$q_1$, $s_1, r_0$};
						\node[place, text width=0.7cm, minimum size=1.4cm, align=center] (r2) [right= of r1, minimum size=1.2cm] {$q_2$, $s_0, r_0$};
						\node[place, text width=0.7cm, minimum size=1.4cm, align=center] (r3) [right= of r2,minimum size=1.2cm] {$q_0$, $s_0, r_1$};
						\node[place, text width=0.7cm, minimum size=1.4cm, align=center] (r4) [right= of r3,  minimum size=1.2cm] {$q_1$, $s_1, r_1$};
						\node[place, text width=0.7cm, minimum size=1.4cm, align=center] (r5) [right= of r4,  minimum size=1.2cm] {$q_1$, $s_1, r_1$};
						\path[->]
						(r0) edge node[] {$!a$} (r1)
						
						(r1) edge node[] {$!b$} (r2)
						(r2) edge node[] {$?a$} (r3)
						(r3) edge node[] {$!a$} (r4)
						(r4) edge node[] {$!b$} (r5)
						;
						\node[][below= 0.5cm of r2]{ $\bar\fifo_{4,\IBL}$ };
					\end{scope}
					
				\end{tikzpicture}}
			\caption{The FIFO machine $\machine_4$ (right) and the automata $\mathcal{A}_s$ and $\mathcal{A}_r$ (left) that recognize $\sendL$ and $\Pref{\recL}$ respectively intersect to give $\bar\fifo_{4,\IBL}$(below) which is branch-monotone. \label{fig:input-fifo-2}}
			
		\end{figure}
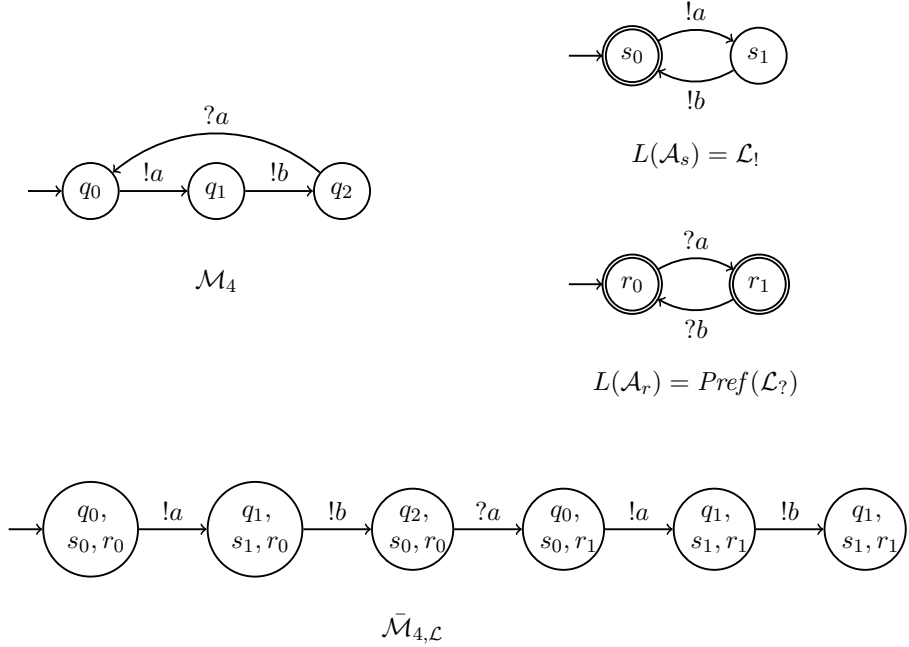

		We remark here that in \cite{bollig_bounded_2020}, when the authors show that input-bounded rational-reachability is decidable, they construct a ``normalised" FIFO machine, from the given FIFO machine and bounded language. Using the same principles, we can modify every input-bounded FIFO machine into one that is distinct-letter, and using the product construction from above, we have the following proposition:
		
		\begin{prop}\label{prop:ibFifoBranchEff}
			Normalised input-bounded FIFO machines are branch-effective branch-WSTS (for the prefix ordering).
		\end{prop}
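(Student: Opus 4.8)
The plan is to verify the three defining conditions of a branch-WSTS---finite branching, branch-\wqo, and branch-monotony---for the normalized machine $\bar\fifo_{\IBL}$, and then to discharge the two effectivity requirements, namely decidability of $\extpref$ and of the labelled transition relations together with computability of $\Post$. The genuinely hard ingredient, branch-monotony, has already been settled by the preceding proposition, which shows precisely that the product machine $\bar\fifo_{\IBL}$ is branch-monotone; so the remaining task is to assemble the other properties around it.

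First I would observe that every FIFO machine is finitely branching: from any state $(q,\chcontents)$ only the finitely many transitions leaving the control-state $q$ can apply, so $\Post_\system((q,\chcontents))$ is finite, and it is moreover computable, since applying a send or a matching receive to a given channel content is an effective operation. This already supplies the computability of $\Post$ demanded by branch-effectiveness.

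The step I expect to require the most care is establishing branch-\wqo, since the construction passes through a normalization and a product that could in principle disturb the hypotheses of Proposition~\ref{prop:IBwqo}. I would argue that $\bar\fifo_{\IBL}$ is itself an input-bounded FIFO machine: the normalization of \cite{bollig_bounded_2020} replaces the message alphabet by a distinct-letter one while keeping each channel's send-projection inside $\Pref{\IBL_\ch}$, and forming the product with $\A$ only intersects the set of traces, so that $\projsend{\Traces{\bar\fifo_{\IBL}}}{\ch} \subseteq \Pref{\IBL_\ch}$ continues to hold for every channel $\ch$. The extra control-states $Q_\A$ are harmless, as they merely refine the finite control. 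Once input-boundedness is in place, Proposition~\ref{prop:IBwqo} applies verbatim and yields that $\bar\fifo_{\IBL}$ is branch-\wqo for $\extpref$. Alternatively, one may present $\bar\fifo_{\IBL}$ as the synchronized product of the input-bounded machine $\hat\fifo$---branch-\wqo by Proposition~\ref{prop:IBwqo}---with the finite, hence branch-\wqo, automaton $\A$, and invoke our closure of branch-\wqos under finite products.

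It then remains to record effectivity. The extended prefix ordering $\extpref$ is decidable, as one compares the two control-states and, channel by channel, tests the prefix relation on the (finite) channel contents; and each labelled relation $\srunp{a}$ is decidable since it is read off the finite control together with an effective channel operation. Combined with the computability of $\Post$ noted above, this shows that $\bar\fifo_{\IBL}$ is branch-effective. Putting finite branching, branch-\wqo, branch-monotony, and effectivity together, we conclude that normalized input-bounded FIFO machines are branch-effective branch-WSTS for the prefix ordering. The only real obstacle is transferring branch-\wqo across the normalization-and-product construction; everything else is routine bookkeeping.
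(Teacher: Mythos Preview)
Your proposal is correct and follows the same overall approach as the paper: the branch-\wqo and branch-monotony properties are inherited from Proposition~\ref{prop:IBwqo} and the preceding proposition on $\bar\fifo_{\IBL}$, and the effectivity conditions are checked directly. The paper's own proof is in fact considerably terser---it only spells out the effectivity part (decidability of $\extpref$ and $\srun$, computability of $\Post$) and leaves the assembly of the branch-WSTS conditions implicit---so your more explicit verification, including the observation that the normalized product machine remains input-bounded, is a welcome elaboration rather than a departure.
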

		\begin{proof}
			Given an input-bounded FIFO machine $\system_\machine$, for any two states $x, x' \in X$, we can decide if $x \extpref x'$. Furthermore, $\srun$ is decidable, and $\Post_\system(x)$ is computable for all $x \in X$. Hence, it is branch-effective. 
		\end{proof}
		
		Moreover, the extended prefix ordering is a partial ordering. Hence, we deduce:
		\begin{thm}\label{thm:fifobranch}
			Non-termination and boundedness are decidable for input-bounded FIFO machines.
		\end{thm}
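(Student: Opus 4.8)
The plan is to reduce non-termination and boundedness for an input-bounded FIFO machine to the decidability results for branch-effective branch-WSTS established in Section~\ref{sec:termbound}, via the product construction $\bar\fifo_{\IBL}$ that was just shown to be branch-monotone. Let $\machine$ be the given input-bounded FIFO machine with input language $\IBL$; applying the normalization of \cite{bollig_bounded_2020} I may assume $\IBL$ is distinct-letter, and I would first argue that this normalization preserves the non-termination and boundedness status of $\machine$. I would then form $\bar\fifo_{\IBL}$, the product of the normalized machine with the deterministic automaton $\A$ recognizing $\sendL \cap \Pref{\recL}$. The crucial observation is that this product faithfully simulates \emph{all} runs of $\machine$: for any trace $\sigma$ of $\machine$ the send-projection $\projsend{\sigma}{\ch}$ lies in $\Pref{\IBL_\ch}$ by input-boundedness, and can be completed to a full word of $\IBL_\ch$ by appending only send actions; since in a FIFO channel the receive-projection is always a prefix of the send-projection, such a completion witnesses $\sigma \in \Pref{\sendL \cap \Pref{\recL}}$, so no trace of $\machine$ is pruned. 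Hence every run of $\machine$ lifts to a unique run of $\bar\fifo_{\IBL}$ (the automaton component being a deterministic function of the trace), and conversely every run of $\bar\fifo_{\IBL}$ projects to a run of $\machine$.

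This behavioural correspondence lets me transfer both problems. For non-termination, an infinite run exists in $\machine$ iff one exists in $\bar\fifo_{\IBL}$. For boundedness, the reachability set of $\bar\fifo_{\IBL}$ maps onto that of $\machine$ by forgetting the automaton component, and this map is finite-to-one (at most $|Q_\A|$ preimages), so $\Post^*(x_0)$ is finite for $\machine$ iff it is finite for $\bar\fifo_{\IBL}$. By Proposition~\ref{prop:ibFifoBranchEff}, together with the branch-monotonicity of $\bar\fifo_{\IBL}$ and the branch-\wqo property of Proposition~\ref{prop:IBwqo}, $\bar\fifo_{\IBL}$ is a branch-effective branch-WSTS for the extended prefix ordering $\extpref$, which is a partial ordering. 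Non-termination for $\bar\fifo_{\IBL}$ is therefore decidable by the non-termination theorem of Section~\ref{sec:termbound}, which only requires a branch-effective branch-WSTS, and the correspondence transfers the answer back to $\machine$.

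For boundedness I still need \emph{strict} branch-monotony, whereas only plain branch-monotony was proved for $\bar\fifo_{\IBL}$. The plan is to obtain strictness by a length count: if $(q,w) \srunp{\sigma} (q,w')$ with $w \prec w'$, then the net channel growth of $\sigma$ equals $|w'| - |w| > 0$. Branch-monotony already provides $y = (q,w'')$ with $(q,w') \srunp{\sigma} (q,w'')$ and $w' \preceq w''$, and since $\sigma$ is enabled along this run without underflow, $|w''| = |w'| + (|w'| - |w|) > |w'|$, which forces $w' \prec w''$. Thus strict branch-monotony follows from branch-monotony under the prefix ordering, and the boundedness theorem for branch-effective branch-WSTS with strict branch-monotony and a partial ordering applies to $\bar\fifo_{\IBL}$; the correspondence above then decides boundedness of $\machine$.

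The main obstacle I anticipate is twofold. First, one must justify that the product construction preserves the \emph{exact} answers to non-termination and boundedness rather than merely the set of reachable control-states; this rests on the automaton component being a deterministic, finite-range function of the run, and on the completion argument showing that no run of $\machine$ is blocked by $\A$. Second, one must establish strict branch-monotony, since the boundedness theorem requires it but only plain branch-monotony was proved; the length-count argument above is what makes this upgrade valid for the prefix ordering (it would not be automatic for an arbitrary order).
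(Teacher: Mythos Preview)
Your proposal is correct and follows the same route as the paper: establish that the normalized product $\bar\fifo_{\IBL}$ is a branch-effective branch-\WSTS under $\extpref$ and invoke the decidability theorems of Section~\ref{sec:termbound}. The paper treats Theorem~\ref{thm:fifobranch} as an immediate corollary of Proposition~\ref{prop:ibFifoBranchEff} together with the remark that $\extpref$ is a partial ordering, without spelling out two obligations that you do address: (i) that passing from $\machine$ to $\bar\fifo_{\IBL}$ preserves the answers to non-termination and boundedness, which you justify via trace preservation (no run is pruned by $\A$) and the finite-to-one projection on control states; and (ii) that the plain branch-monotony established for $\bar\fifo_{\IBL}$ can be upgraded to the \emph{strict} branch-monotony required by the boundedness theorem, which you obtain by the channel-length count. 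Both steps are genuine and your arguments for them are sound (for the length count, note that in the multi-channel case one applies it per channel: the net growth $|\projsend{\sigma}{\ch}|-|\projrec{\sigma}{\ch}|$ is the same in both executions of $\sigma$, so any channel witnessing $w_\ch \prec w'_\ch$ also witnesses $w'_\ch \prec w''_\ch$). In short, your proof is the paper's proof with the implicit gaps made explicit.
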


	\subsection{Verifying non-termination for general counter machines}
We saw in Section~\ref{sec:cmrz} that non-termination and boundedness are decidable for \CMRZ. However, this class of counter machines is weak as every run in such a machine has to maintain the restriction. In this section, we show that we can study non-termination for a larger class of counter machines, where the branch-well-structured behaviour is only maintained for a single run.

We begin by constructing the labelled-$\FRT$ (denoted by $\lRRT$), which is a refinement of the $\FRT$, where we add an additional label, which we will call \emph{iterable}. We mark a dead vertex $n'$ (labelled by $x'$) as \emph{iterable} if \begin{itemize}
	\item it is subsumed by a node $n$ labelled by $x$, i.e. $x \leq x'$ and $\exists \sigma. x \srunp{\sigma} x'$, and
	\item there exists a state $x'' \in X$, such that, $x' \srunp{\sigma} x''$ and $x' \leq x''$.
	
\end{itemize}

For counter machines, we show that the presence of an iterable node in the $\lRRT$ is a sufficient condition for non-termination.

\begin{prop}
	A counter machine  	$\system = (X,\Action,\srun, \leq, x_0)$ is non-terminating if \linebreak $\lRRT(\system, x_0)$ has at least one iterable node.
\end{prop}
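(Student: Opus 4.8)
The plan is to exploit the arithmetic structure of counter machines to show that the single observed iteration guaranteed by an iterable node can be repeated forever. Fix an iterable node $n'$ labelled $x' = (q,\mathbf{v}')$, subsumed by an ancestor $n$ labelled $x = (q,\mathbf{v})$: by definition $x \srunp{\sigma} x'$ with $x \leq x'$, and moreover there is a state $x'' = (q,\mathbf{v}'')$ with $x' \srunp{\sigma} x''$ and $x' \leq x''$. Since $n$ lies in $\lRRT(\system,x_0)$ we have $x_0 \srunp{*} x$, so it suffices to build an infinite run from $x$; and because a node is subsumed only by a \emph{proper} ancestor, $\sigma \neq \varepsilon$, so an infinite iteration of $\sigma$ is genuinely non-terminating.

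First I would record that a fixed action word $\sigma$ induces a fixed counter displacement $\Delta \in \mathbb{Z}^{\counterset}$, independent of the starting configuration, whenever it is fireable (the displacement depends only on the $\inc{\counter}/\dec{\counter}/\noop$ parts, not on the zero-test sets, which affect only enabledness). Hence $\mathbf{v}' = \mathbf{v} + \Delta$ and $\mathbf{v}'' = \mathbf{v}' + \Delta$, and from $x \leq x'$ I obtain $\Delta \geq \mathbf{0}$ component-wise. The heart of the argument is the following claim about zero tests occurring in $\sigma$: for every counter $\counter$ tested for zero at some position of $\sigma$, we have $\Delta_\counter = 0$. Indeed, if $\delta$ denotes the displacement on $\counter$ of the prefix of $\sigma$ up to that position, then firing $\sigma$ from $x$ forces $\mathbf{v}_\counter + \delta = 0$ (the test passes); firing the same prefix from $x'$ gives value $\mathbf{v}'_\counter + \delta = \mathbf{v}_\counter + \Delta_\counter + \delta = \Delta_\counter$, and since we are \emph{given} $x' \srunp{\sigma} x''$, that zero test must also pass from $x'$, forcing $\Delta_\counter = 0$.

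With the claim established, I would prove by induction that $\sigma$ is fireable from every configuration $x^{(k)}$ of the chain defined by $\mathbf{v}^{(k)} = \mathbf{v} + k\Delta$ (so $x^{(0)} = x$, $x^{(1)} = x'$). Firing $\sigma$ from $x^{(k)}$: at a zero-test position on a counter $\counter$ the value is $\mathbf{v}_\counter + k\Delta_\counter + \delta = k\Delta_\counter = 0$ by the claim, so the test passes; at a decrement the value is $\mathbf{v}_\counter + k\Delta_\counter + \delta \geq \mathbf{v}_\counter + \delta \geq 1$ (using $\Delta \geq \mathbf{0}$ and that the decrement was already enabled when firing from $x$), so no decrement fails. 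Thus $x^{(k)} \srunp{\sigma} x^{(k+1)}$ with $x^{(k)} \leq x^{(k+1)}$, and iterating yields the infinite run $x \srunp{\sigma} x' \srunp{\sigma} x'' \srunp{\sigma} \cdots$, witnessing non-termination.

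I expect the main obstacle to be precisely the zero-test claim, which is where both the counter-machine structure and the extra iteration built into the definition of \emph{iterable} are essential. A bare subsumption $x \leq x'$ does not preclude a zero-tested counter with $\Delta_\counter > 0$, which would block even the second firing of $\sigma$; the additional requirement $x' \srunp{\sigma} x''$ is exactly what certifies $\Delta_\counter = 0$ and thereby upgrades one observed iteration into infinitely many.
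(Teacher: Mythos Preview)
Your proof is correct and follows essentially the same line as the paper's: both identify a non-negative displacement vector $\Delta$ (the paper writes it component-wise as $k_\counter$), use the second successful firing of $\sigma$ to force $\Delta_\counter = 0$ for every counter that is zero-tested along $\sigma$, and conclude that $\sigma$ can then be iterated forever because zero tests keep passing and decrements remain enabled. Your presentation is a bit tighter---you make the uniform-displacement observation explicit and you spell out why $\sigma \neq \varepsilon$ and why decrements cannot fail---but the argument is the same.
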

\begin{proof}
	Let $\system = (X,\Action,\srun, \leq, x_0)$  be the \OLTS associated  to a counter machine, and $\lRRT(\system, x_0)$ its labelled reduced reachability tree. We show that if there exists an iterable node in $\lRRT(\system, x_0)$, then $\system$ is non-terminating. Let us assume there is an iterable node labelled by $x_1$ in $\lRRT(\system, x_0)$. By definition, this means that there exists a run $x_0 \srunp{*} x_1\srunp{\sigma} x_2 \srunp{\sigma} x_3$. 
	
	Let $x_i = (q, \cscon_{i})$ for $i \in \{1, 2, 3\}$, where $\counterval_i = (\counterval_{i,\counter})_{\counter \in \counterset}$
	represents the counter values. 
	Let $\sigma = a_1 a_2 \ldots a_{n}$. Then we have, $x_1 \srunp{a_1} u_1 \srunp{a_2} \ldots \srunp{a_{n-1}} u_{n-1} \srunp{a_{n}} x_2$ and $x_2 \srunp{a_1} u_1' \srunp{a_2} \ldots \srunp{a_{n-1}} u_{n-1}' \srunp{a_{n}} x_3$. Let $u_i = (q_i, \csconp_{i})$, and $u_i' = (q_i, \csconp'_{i})$ for $1 \leq i < n$,  where $\csconp_{i} = (\csconp_{i,\counter})_{\counter \in \counterset}$ and $\csconp'_{i} = (\csconp'_{i,\counter})_{\counter \in \counterset}$
	represent the counter values. 
		
	 Since $x_1 \leq x_2$, for each counter $\counter \in \counterset$, the following holds: $\counterval_{2,\counter} = \counterval_{1,\counter} + k_\counter$, where $k \geq 0$. Moreover, because both runs perform the same sequence of actions, we also have the following property: for all $1 \leq i \leq n$,  $\csconp'_{i,\counter} = \csconp_{i,\counter} + k_\counter$. 
	 
	 Let us assume that there exists a transition $u_{i-1} \srunp{a_{i}} u_{i}$, such that $a_i = \op(\counter_i), Z_i$, 
	 for some $\counter_i \in \counterset$, and where $Z_i \neq \emptyset$, then we also have $u'_{i-1} \srunp{a_{i}} u'_{i}$. Therefore, the set of counters $Z_i$ to be tested to zero in $u_{i-1}$ is the same as the set of counters to be tested to zero in $u'_{i-1}$. And the action $a_i$ is feasible from both $u_i$ and $u_i'$. Therefore, for each counter $\counter \in Z_i$, we have $\csconp_{i,\counter} = \csconp'_{i,\counter}$, i.e. $k_\counter = 0$. In other words, for each counter $\counter \in \counterset$ that is tested to zero along $\sigma$, $k_\counter = 0$, i.e. the values of those counters are identical along both runs. 
	 
	 Let us now only consider the set of counters tested for zero along $\sigma$, and call this set of counters $Z_\sigma$. The following property holds: $\counterval_{1,\counter} = \counterval_{2,\counter}$ for all $\counter \in Z_\sigma$. Now, from $(q, \counterval_2)$, upon executing the sequence of transitions $\sigma$, for every counter in $Z_\sigma$, once again the values do not change. Therefore, we have the following property: $\counterval_{2,\counter} = \counterval_{3,\counter}$ for all $\counter \in Z_\sigma$. For counters
	 $\counter \in \counterset \setminus Z_\sigma$, $\counterval_{2,\counter} = \counterval_{1,\counter} + k_\counter$. However, since there are no zero tests for these counters along $\sigma$, we can repeat the same sequence of actions and obtain: $\counterval_{3,\counter} = \counterval_{2,\counter} + k_\counter$.
	 
	 Therefore, from $x_3 = (q, \cscon_3)$, once again we can repeat $\sigma$ in order to reach a state $x_4 = (q, \cscon_4)$ such that for all $\counter \in Z_\sigma$, $\counterval_{3,\counter} = \counterval_{4,\counter}$ and for all counters $m \in \counterset \setminus Z_\sigma$, $\counterval_{4,m} = \counterval_{3,m} + k_m$. Therefore, we can repeat $\sigma$ infinitely many times. Hence, $\system$ is non-terminating.
\end{proof}
		
		\subsection{Verifying non-termination for FIFO machines}
		
		Now, we extend the same idea as above for branch-\wqo FIFO machines.

		Let us assume that given a FIFO machine, we have: $(q,u) \srunp{\sigma} (q, u.v)$, and  $\exists w \in \labelset^*$ such that $(q,u.v) \srunp{\sigma} (q,u.v.w)$. Then, for all $\ch \in \chanset$:
		
		\begin{align}
			u_\ch. \projsend{\sigma}{\ch} = \projrec{\sigma}{\ch}.u_\ch.v_\ch\\
			u_\ch.v_\ch.\projsend{\sigma}{\ch} = \projrec{\sigma}{\ch}.u_\ch.v_\ch.w_\ch
		\end{align} 
		From the above two equations, we have
		\begin{equation}
			u_\ch.v_\ch.\projsend{\sigma}{\ch} = u_\ch.\projsend{\sigma}{\ch}.w_\ch
		\end{equation}
		Hence, we have:
		\begin{equation}
			v_\ch.\projsend{\sigma}{\ch} = \projsend{\sigma}{\ch}.w_\ch
		\end{equation}
		Moreover,
		\begin{align}
			|\projsend{\sigma}{\ch}| = |\projrec{\sigma}{\ch}| + |v_\ch|\\
			|\projsend{\sigma}{\ch}| = |\projrec{\sigma}{\ch}| + |w_\ch|
		\end{align} 
		Hence, the length of $\projsend{\sigma}{\ch}$ is at least as much as that of $v_\ch$ and $w_\ch$. Moreover, 
		\begin{equation}
			|v_\ch| = |w_\ch|
		\end{equation}
		Moreover, from Equation 4.1, we see that $v_\ch$ is a suffix of $\projsend{\sigma}{\ch}$. Similarly, from Equation 4.2, we see that $w_\ch$ is a suffix of $\projsend{\sigma}{\ch}$. Since $|v_\ch| = |w_\ch|$, we have $v_\ch = w_\ch$. We can now rewrite Equation 4.4 as:
		\begin{equation}
			v_\ch.\projsend{\sigma}{\ch} = \projsend{\sigma}{\ch}.v_\ch
		\end{equation}
		
		The well-known Levi’s Lemma says that the words $u, v \in \labelset^*$
		that are solutions of the equation $uv = vu$ satisfy $u, v \in z^*$ where $z \in \labelset^*$ is a primitive word.

	 For FIFO machines also, we show that an iterable node in the $\lRRT$ can indicate the presence of a non-terminating run. Before we prove that, we first restate a result from \cite{finkel_verification_2020} that we will use.

		\begin{propC}[\cite{finkel_verification_2020}]\label{prop:fp19}
			Given a FIFO machine, the loop labelled by $\sigma$ is infinitely iterable from a state $(q, \chcontents)$ \ifff for every channel $\ch \in \chanset$, either $\projrec{\sigma}{\ch} = \varepsilon$, or the following three conditions are true: $\sigma$ is fireable at least once from $(q, \chcontents)$, $|\projrec{\sigma}{\ch}| \leq |\projsend{\sigma}{\ch}|$ and $\chcontents_\ch\cdot(\projsend{\sigma}{\ch})^\omega = (\projrec{\sigma}{\ch})^\omega$.
			\end{propC}
			
			Now, we will use the above result to provide a \emph{sufficient} condition for non-termination in FIFO machines.

\begin{prop}
	\sloppy A FIFO machine  	$\system = (X,\labelset,\srun, \leq_p, x_0)$ is non-terminating if ${\lRRT}{(\system, x_0)}$ has at least one iterable node.
\end{prop}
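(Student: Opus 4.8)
The plan is to produce an infinite run by promoting the finitely-iterated loop at the iterable node to an \emph{infinitely} iterable loop, and then to invoke Proposition~\ref{prop:fp19}. By definition, an iterable node supplies a reachable state $(q,u)$ together with $(q,u) \srunp{\sigma} (q,u.v)$, $(q,u) \leq_p (q,u.v)$, and a continuation $(q,u.v) \srunp{\sigma} (q,u.v.w)$ with $(q,u.v) \leq_p (q,u.v.w)$ --- exactly the hypotheses under which Equations~(4.1)--(4.8) were derived. Hence, for every channel $\ch$, I may assume $v_\ch = w_\ch$ and $v_\ch \cdot \projsend{\sigma}{\ch} = \projsend{\sigma}{\ch} \cdot v_\ch$, so that Levi's Lemma gives a primitive word $z_\ch$ with $\projsend{\sigma}{\ch} = z_\ch^{\,k}$ and $v_\ch = z_\ch^{\,l}$ for some $k \geq l \geq 0$. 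It then suffices to verify, channel by channel, the three conditions of Proposition~\ref{prop:fp19} at the state $(q,u)$ for the loop $\sigma$.

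Two of those conditions are immediate. Fireability of $\sigma$ from $(q,u)$ is witnessed by the run $(q,u) \srunp{\sigma} (q,u.v)$ itself, and the length inequality follows from Equation~(4.5), which gives $|\projsend{\sigma}{\ch}| = |\projrec{\sigma}{\ch}| + |v_\ch| \geq |\projrec{\sigma}{\ch}|$. If $\projrec{\sigma}{\ch} = \varepsilon$, the channel satisfies the first disjunct of Proposition~\ref{prop:fp19} and needs no further argument, so I may assume $\projrec{\sigma}{\ch} \neq \varepsilon$; then $|\projrec{\sigma}{\ch}| = |\projsend{\sigma}{\ch}| - |v_\ch| = (k-l)\,|z_\ch| > 0$ forces $k > l$.

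The remaining condition, the $\omega$-word identity $u_\ch \cdot (\projsend{\sigma}{\ch})^\omega = (\projrec{\sigma}{\ch})^\omega$, is the step I expect to be the main obstacle, as it passes from equalities of finite words to an equality of infinite words. Starting from the conservation equation $u_\ch \cdot \projsend{\sigma}{\ch} = \projrec{\sigma}{\ch} \cdot u_\ch v_\ch$ (Equation~(4.1)) and inserting the Levi decomposition, I obtain $u_\ch\, z_\ch^{\,k} = \projrec{\sigma}{\ch}\, u_\ch\, z_\ch^{\,l}$; right-cancelling $z_\ch^{\,l}$ yields $\projrec{\sigma}{\ch}\, u_\ch = u_\ch\, z_\ch^{\,m}$ with $m := k-l \geq 1$. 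A routine induction then gives $(\projrec{\sigma}{\ch})^n\, u_\ch = u_\ch\, z_\ch^{\,nm}$ for all $n \geq 0$. Since $m \geq 1$, the word $(\projrec{\sigma}{\ch})^n$ is a common prefix of $(\projrec{\sigma}{\ch})^\omega$ and of $u_\ch\, z_\ch^{\,nm}$, and its length $n\,|\projrec{\sigma}{\ch}|$ tends to infinity; hence the infinite words $(\projrec{\sigma}{\ch})^\omega$ and $u_\ch\, z_\ch^{\,\omega}$ share arbitrarily long prefixes and therefore coincide. As $z_\ch^{\,\omega} = (z_\ch^{\,k})^\omega = (\projsend{\sigma}{\ch})^\omega$, this is exactly the desired identity.

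Having verified all three conditions for every channel, Proposition~\ref{prop:fp19} shows that $\sigma$ is infinitely iterable from $(q,u)$; prefixing by a run $x_0 \srunp{*} (q,u)$ then gives an infinite run of $\system$, so $\system$ is non-terminating. In writing this up the points demanding care are the right-cancellation and the prefix/limit argument of the third paragraph, and ensuring the disjunction of Proposition~\ref{prop:fp19} (``$\projrec{\sigma}{\ch} = \varepsilon$, or the three conditions'') is discharged uniformly over all channels.
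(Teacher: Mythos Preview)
Your proof is correct and follows the same overall strategy as the paper: reduce to Proposition~\ref{prop:fp19} and verify, channel by channel, the fireability, length, and $\omega$-word conditions. The only difference is in how the identity $u_\ch\,(\projsend{\sigma}{\ch})^\omega = (\projrec{\sigma}{\ch})^\omega$ is established: the paper iteratively unrolls the $\omega$-word using Equations~(4.1), (4.2), and (4.8) to peel off successive factors of $\projrec{\sigma}{\ch}$, whereas you first apply Levi's Lemma (which the paper mentions but does not actually invoke in its proof) to obtain the primitive decomposition $\projsend{\sigma}{\ch}=z_\ch^{\,k}$, $v_\ch=z_\ch^{\,l}$, and then argue via the clean identity $(\projrec{\sigma}{\ch})^n u_\ch = u_\ch z_\ch^{\,n(k-l)}$ and a prefix/limit step; both routes are short and essentially equivalent.
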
	
\begin{proof}
	Let $\system = (X,\labelset,\srun, \leq_p x_0)$  be a FIFO machine, and $\lRRT(\system, x_0)$ its labelled reachability tree.	Let us assume that there exists an iterable node $n'$ labelled by $x'$ in $\lRRT(\system, x_0)$. Then, by definition, we have a run $x_0 \srunp{*} x \srunp{\sigma} x' \srunp{\sigma} x''$ such that $x \leq_p x' \leq_p x''$. Let $x = (q, \chcontents)$, $x' = (q, \chcontents')$ and $x'' = (q, \chcontents'')$. From Equations 4.1 through 4.7, we know that for all $\ch \in \chanset$, if $\chcontents_\ch = u_\ch$, then $\chcontents'_\ch = u_\ch \cdot v_\ch$, and $\chcontents''_\ch = u_\ch \cdot v_\ch \cdot v_\ch$, for some words $u_\ch, v_\ch \in \labelset^*$.
	
	In order to show that $\system$ is non-terminating, we will try to demonstrate the criteria in Proposition~\ref{prop:fp19}. We will show that $\sigma$ is infinitely iterable from $x = (q, \chcontents)$. For each $\ch \in \chanset$ such that $\projrec{\sigma}{\ch} = \varepsilon$, the condition in the proposition is true trivially. Hence, we only need to show that in cases when $\projrec{\sigma}{\ch} \neq \varepsilon$, the three conditions stated are met. We will consider a single channel, but the idea can be extended to all channels.
	
	Since we have a run $x_0 \srunp{*} x \srunp{\sigma} x'$, the sequence of actions $\sigma$ is fireable at least once, so the first condition holds. Moreover, since $(q, \chcontents) \srunp{\sigma} (q, \chcontents')$ and for every $\ch \in \chanset$, $\chcontents_\ch \pref \chcontents'_\ch$, the net effect of the loop is not decreasing in the length of the channel contents, i.e. $|\projrec{\sigma}{\ch}| \leq |\projsend{\sigma}{\ch}|$. Hence, the second condition also holds true.
	
	We only now need to show that for all $\ch \in \chanset$, the following holds: $u_\ch\cdot(\projsend{\sigma}{\ch})^\omega = (\projrec{\sigma}{\ch})^\omega$. From Equation~4.1, we have: \begin{equation*}
		u_\ch. \projsend{\sigma}{\ch} = \projrec{\sigma}{\ch}.u_\ch.v_\ch
	\end{equation*}
	Concatenating both sides with the infinite word $\projsend{\sigma}{\ch})^\omega$, we have: \begin{align}
u_\ch. (\projsend{\sigma}{\ch})^\omega &= \projrec{\sigma}{\ch}.u_\ch.v_\ch. (\projsend{\sigma}{\ch})^\omega \\
 &= \projrec{\sigma}{\ch}.u_\ch.v_\ch. \projsend{\sigma}{\ch}.(\projsend{\sigma}{\ch})^\omega
	\end{align}
	Moreover, from Equation~4.2, we now have:
	\begin{align}
		u_\ch. (\projsend{\sigma}{\ch})^\omega &= \projrec{\sigma}{\ch}.\projrec{\sigma}{\ch}.u_\ch.v_\ch.w_\ch. (\projsend{\sigma}{\ch})^\omega \\
		&= (\projrec{\sigma}{\ch})^2.u_\ch.v_\ch.v_\ch. (\projsend{\sigma}{\ch})^\omega
	\end{align} 
	Which we can rewrite using Equation~4.8 as:
	\begin{align}
		u_\ch. (\projsend{\sigma}{\ch})^\omega &= (\projrec{\sigma}{\ch})^2.u_\ch.v_\ch.\projsend{\sigma}{\ch}.v_\ch.(\projsend{\sigma}{\ch})^\omega
	\end{align}
	And now, we can repeat the process, and use Equation~4.2 once again to obtain:
	\begin{align}
u_\ch. (\projsend{\sigma}{\ch})^\omega &= (\projrec{\sigma}{\ch})^2.\projrec{\sigma}{\ch}.u_\ch.v_\ch.w_\ch.v_\ch.(\projsend{\sigma}{\ch})^\omega
	\end{align}
	Repeating this process, we have: $u_\ch\cdot(\projsend{\sigma}{\ch})^\omega = (\projrec{\sigma}{\ch})^\omega$. Hence, the presence of an iterable node in the $\lRRT$ implies infinite iterability, or in other words, non-termination.
	\end{proof}
	
		As for the case of counter machines, the $\lRRT$ of branch-\wqo FIFO machines is finite. Hence, for branch-\wqo FIFO machines, it is decidable to verify if there is an iterable node in the $\lRRT$. 
		
		\section{Decidability of Coverability} \label{sec: coverability}
		
		\noindent\textbf{Coverability algorithms for branch-WSTS.}
		We show that the two existing coverability algorithms (the forward and backward algorithms) for WSTS (see Section~\ref{subsec:decprobs} to recall) do not allow one to decide coverability for branch-WSTS. We remark that, contrary to WSTS, ${\Pre^*(\upclose x)}$ is not necessarily upward-closed for branch-WSTS. In fact, even with a single zero test, this property may not satisfied. 
		
		\begin{figure}[t]
			\centering
			\begin{tikzpicture}[->, auto, thick]
				\node[place, initial, initial text=] (p1) {$q_0$};
				\node[place] (p2) [right= of p1]{$q_1$};

				\path[->]
				(p1) edge node[] {$\counterc\ztest$} (p2)
				;		
			\end{tikzpicture}
			\caption{System $\machine_6$ is branch-WSTS.} \label{fig:counter_cov1}
		\end{figure}

		In Figure~\ref{fig:counter_cov1}, let us consider the counter machine $\machine_6$ with a single counter~$c$. Let $x = (q_1,0)$. We see that ${\Pre^*(\upclose x) }= \{(q_1, n) \mid n\geq 0\} \cup \{(q_0, 0)\}$. However, ${\upclose \Pre^*(\upclose x)} = {\Pre^*(\upclose x)} \cup \{(q_0, n) \mid n\geq 1\}$. Hence, 
			given a branch-effective branch-WSTS $\system = (X, \Sigma, \srun, \leq, x_0)$ and a state $x \in X$, the set $\Pre^*(\upclose x)$ is not necessarily upward-closed. Hence, we cannot use the backward algorithm to verify the coverability problem for branch-\WSTS correctly.

		Let us consider using the forward algorithm instead, where the first procedure (\cf Procedure~\ref{proc:cov}) checks for coverability, and the second procedure (\cf Procedure~\ref{proc:noncov}) looks for a witness of non-coverability. 
		
		The second procedure computes all sets $X$ which satisfy the property $\downclose \Post^*(X) \subseteq X$. This is because, for WSTS, the set $\downclose \Post^*(x)$ satisfies this property. However, we now show a counter-example of a branch-WSTS which does not satisfy this property.
		
		\begin{figure}[ht]
			\centering
			\begin{tikzpicture}[->, auto, thick]
				\node[initial left, initial text=, state] (a0) {$q_0$};
				\node[state] (a1) [right= of a0]{$q_1$};
				\node[state] (a2) [right= of a1]{$q_2$};
			
				\path[->]
				(a0) edge node[] {$\counterc\incr$} (a1)
				(a1) edge node[] {$\counterc\ztest$} (a2)
			
				;
			\end{tikzpicture}
			\caption{System $\machine_7$ is branch-WSTS.}
			\label{fig:branwsts2}
		\end{figure}

		Consider the counter machine  $\machine_7$ from Figure~\ref{fig:branwsts2}, with $x_0 = (q_0,0)$. We compute $Y = \downclose \Post^*(x_0)$. We see that $\Post^*(x_0) = \{ (q_0, 0), (q_1,1)\}$, hence, $Y = \downclose \Post^*(x_0) = \{ (q_0, 0), (q_1,1), (q_1, 0)\}$. However,  
		as $\downclose \Post^*(Y) =  \{ (q_0, 0), (q_1,1), (q_1, 0), (q_2, 0) \}$ is strictly larger than $Y$, we have $\downclose \Post^*(Y) \not\subseteq Y$.
			Therefore, for branch-effective, branch-WSTS $\system = (X, \Sigma, \srun, \leq, x_0)$ such that $\downclose \Post(\downclose x)$ is computable for all $x \in X$, the set $Y = \downclose \Post^*(x_0)$ does not necessarily 
			satisfy the property  $\downclose \Post^*(Y) \subseteq Y$. Hence, it is not possible to guarantee the termination of the forward coverability algorithm.

		We can deduce:
		
		\begin{prop}
			For branch-WSTS, both the backward coverability algorithm and the forward coverability algorithm do not terminate in general.
		\end{prop}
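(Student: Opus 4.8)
The plan is to trace each classical coverability algorithm through a concrete branch-WSTS and show that the relevant loop never exits. The guiding observation is that, for ordinary \WSTS, termination of both algorithms is bought entirely by the \wqo hypothesis: the backward algorithm halts because the ascending chain $\upclose I_0 \subseteq \upclose I_1 \subseteq \cdots$ stabilizes, which is exactly the ascending chain condition guaranteed by the finite-basis characterization of Proposition~\ref{erdos2}; and the forward algorithm's non-coverability procedure (Procedure~\ref{proc:noncov}) halts because, by Proposition~\ref{erdos}, every downward-closed set---in particular the certificate $\downclose\Post^*(x_0)$---decomposes into finitely many ideals, so it is eventually enumerated. Branch-WSTS abandon the \wqo assumption (Example~\ref{ex:fifo-machine-1} already exhibits the infinite antichain $\{(q_1,a^n b) \mid n \in \nat\}$ for $\machine_1$), so both guarantees are in jeopardy, and the task reduces to converting each failure into an explicit infinite run of the corresponding procedure.

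For the forward algorithm I would reuse $\machine_1$ together with a non-coverable target, say $y = (q_1, bb)$. First, $(q_1,bb) \notin \downclose\Post^*(x_0)$, because every reachable state in $q_1$ has channel content $a^n b$ and $bb$ is not a prefix of any such word; hence the expanding set $D$ computed by Procedure~\ref{proc:cov} never contains $y$, and that procedure loops forever. Second, any downward-closed set $D$ with $x_0 \in D$ and $\downclose\Post(D) \subseteq D$ must contain $\Post^*(x_0)$, hence the whole infinite antichain $\{(q_1,a^n b)\}$; by Proposition~\ref{erdos} such a $D$ has no finite ideal decomposition, so it is never produced by Procedure~\ref{proc:noncov}. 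In fact no finitely decomposable inductive invariant containing $x_0$ exists at all, so the second procedure also loops forever and the forward algorithm does not terminate. The machine $\machine_7$ of Figure~\ref{fig:branwsts2} isolates the monotony-side cause already recorded above: $Y = \downclose\Post^*(x_0)$ fails $\downclose\Post^*(Y) \subseteq Y$, so the canonical \WSTS certificate is not even inductive.

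For the backward algorithm I would argue dually. The loop test $\upclose I_{n+1} = \upclose I_n$ succeeds only if the chain stabilizes, and by Proposition~\ref{erdos2} this stabilization over all upward-closed sets is equivalent to $(X,\leq)$ being a \wqo. Since the ordering of a branch-WSTS need not be \wqo, I would design a FIFO branch-WSTS and a target $y$ for which the operator $I \mapsto \upclose\Pre(\upclose I)$ keeps uncovering pairwise-incomparable new minimal elements, so that $\upclose\Pre^*(\upclose y)$ admits no finite basis and the sequence $\upclose I_0 \subsetneq \upclose I_1 \subsetneq \cdots$ is strictly increasing and never stabilizes. The companion pathology is supplied by $\machine_6$ (Figure~\ref{fig:counter_cov1}): there $\Pre^*(\upclose x)$ is not upward-closed, so even when the chain happens to stabilize the backward fixpoint computes $\upclose\Pre^*(\upclose x) \neq \Pre^*(\upclose x)$ and is therefore unsound for coverability.

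The main obstacle is locating the failure correctly. The minimal-looking examples $\machine_6$ and $\machine_7$ are counter machines whose state ordering on $Q \times \nat^{\counterset}$ is a \wqo; consequently both algorithms actually \emph{terminate} on them (the backward one by the ascending chain condition, the forward one because its certificate is finitely decomposable), merely returning an unsound answer or resting on a broken inductiveness argument. Thus these examples establish only the collapse of soundness and of the inductiveness of $\downclose\Post^*(x_0)$, not non-termination. Genuine non-termination must therefore be exhibited on a non-\wqo (FIFO) system, and the delicate step is to verify that the offending infinite antichain is unavoidably forced into every up- or downward-closed set the algorithm manipulates, so that the finite-basis and finite-ideal-decomposition guarantees really do break along the run.
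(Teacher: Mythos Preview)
Your approach is both different from and more careful than the paper's. The paper does not give a separate proof of this proposition; it simply writes ``We can deduce:'' after the discussion of $\machine_6$ and $\machine_7$, so the intended argument is precisely the one you criticize in your last paragraph. You are right that $\machine_6$ and $\machine_7$ carry a genuine \wqo on $Q\times\nat^{\counterset}$, so on those systems the backward chain $\upclose I_0\subseteq\upclose I_1\subseteq\cdots$ stabilizes and Procedure~\ref{proc:noncov} eventually enumerates a finite ideal decomposition; what those examples really show is unsoundness (for $\machine_6$) and the loss of the inductiveness of $\downclose\Post^*(x_0)$ (for $\machine_7$), not non-termination in the literal sense. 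The paper conflates the two readings; your analysis via the non-\wqo system $\machine_1$ is the one that matches the stated proposition.

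Your forward-algorithm argument is essentially correct. For Procedure~\ref{proc:cov}, one small point: the iterate is $\downclose(D\cup\Post(D))$, not $\Post^*(x_0)$, so you should check (as a quick induction does) that every $q_1$-state appearing in $D_k$ has channel content a prefix of some $a^n b$, hence never $bb$. For Procedure~\ref{proc:noncov}, your appeal to Proposition~\ref{erdos} is slightly indirect; it is cleaner to observe that, for the prefix order, every ideal is a chain, so a downward-closed set containing the infinite antichain $\{(q_1,a^nb)\}$ cannot be a finite union of ideals and is therefore never enumerated.

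The genuine gap is the backward case: you state that you ``would design a FIFO branch-WSTS'' on which the chain is strictly increasing, but you do not exhibit one, and $\machine_1$ itself does not obviously work (for the natural targets the backward iteration stabilizes quickly). A concrete witness is easy once you look for it: take a single control-state $q_0$ with a self-loop $?a$ over alphabet $\{a,b\}$, initial state $(q_0,\varepsilon)$ (so $\Post^*(x_0)=\{(q_0,\varepsilon)\}$ and the system is vacuously branch-\WSTS), and target $y=(q_0,b)$. Then $I_n=\{(q_0,a^k b)\mid 0\le k\le n\}$ is an antichain, $\upclose I_n\subsetneq\upclose I_{n+1}$ for all $n$, and the backward loop never exits. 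Adding such an example would close the argument; as written, your backward half is a plan rather than a proof.
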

		
		Not only the two coverability algorithms do not terminate but we may prove that coverability is undecidable.
		
		\begin{thm}
			The coverability problem is undecidable for branch-effective branch-WSTS $\system = (X, \Sigma, \srun, \leq, x_0)$ (even if 
			$\system$ is strongly monotone and $\leq$ is \wqo). 
		\end{thm}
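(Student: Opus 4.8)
The plan is to prove undecidability by a reduction from the halting problem for Turing machines, which is undecidable (indeed r.e.-complete). Fixing a Turing machine $M$ (say on empty input), I would build an \OLTS $\system_M$ in which coverability of a distinguished target encodes whether $M$ halts, and then verify that $\system_M$ is a branch-effective branch-\WSTS that is moreover strongly monotone with $\leq$ a \wqo. Since the map $M \mapsto \system_M$ will be computable and halting is undecidable, coverability must be undecidable as well.

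Concretely, I would take the state space $X = \nat \sqcup \{\mathsf{goal}\}$, with the quasi-ordering $\leq$ that is the usual order on $\nat$, makes $\mathsf{goal}$ incomparable to every natural number, and is reflexive on $\mathsf{goal}$. The transitions are: a timer step $t \srun t+1$ for every $t \in \nat$; a guarded step $t \srun \mathsf{goal}$ whenever $M$ halts within $t$ steps; and no outgoing transition from $\mathsf{goal}$ (it is a sink). The initial state is $x_0 = 0$ and the target to be covered is $\mathsf{goal}$. Because $\mathsf{goal}$ is incomparable to all naturals, $\downclose \mathsf{goal} = \{\mathsf{goal}\}$, so $\mathsf{goal}$ is coverable from $0$ if and only if it is reachable, i.e. if and only if some timer value $t$ from which $M$ is seen to halt within $t$ steps is reached, i.e. if and only if $M$ halts. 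This yields the required correspondence.

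It remains to check the structural conditions, and this is where the one genuinely delicate point lies. The order is a total order on $\nat$ together with a single incomparable element, so any antichain has at most two elements and well-foundedness is immediate; hence $\leq$ is a decidable \wqo. The system is finitely branching with $\Post(t) \subseteq \{t+1, \mathsf{goal}\}$, and since the predicate that $M$ halts within $t$ steps is decidable by bounded simulation, both $\srun$ and $\Post$ are computable, so $\system_M$ is branch-effective. The crucial step is strong monotony. For a timer step $t \srun t+1$ and any $y \geq t$ we necessarily have $y = m \in \nat$ with $m \geq t$, and then $m \srun m+1$ with $t+1 \leq m+1$. For a guarded step $t \srun \mathsf{goal}$, the hypothesis that $M$ halts within $t$ steps together with $m \geq t$ gives that $M$ halts within $m$ steps as well; hence $m \srun \mathsf{goal}$ and $\mathsf{goal} \leq \mathsf{goal}$. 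The monotonicity in $t$ of the bounded-halting predicate is exactly what makes the guarded transition strongly monotone: this is the main obstacle, since a faithful but exact zero-test-style guard would break monotony, and it is precisely resolved by replacing the exact test with a monotone time-bounded one.

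Finally, I would observe that a strongly monotone \wqo system is a \WSTS and hence a branch-\WSTS (a \wqo is in particular branch-\wqo, and strong monotony implies branch-monotony). Thus $\system_M$ is a branch-effective branch-\WSTS, strongly monotone, with $\leq$ a \wqo, for which coverability is undecidable. As a consistency check with Theorem~\ref{thm:finitebasiscov}, its hypothesis must fail here, and indeed it does: computing a finite basis $\pb$ of $\upclose\Pre(\upclose \{\mathsf{goal}\})$ would require deciding whether, and from which $t$ onward, $M$ halts, so $\pb$ is not computable. This pinpoints that branch-effectivity alone does not suffice for decidable coverability, even for a genuine \WSTS.
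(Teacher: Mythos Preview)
Your argument is correct. Both your proof and the paper's are reductions from the halting problem that hinge on the same key insight: the predicate ``$M$ halts within $t$ steps'' is decidable and monotone in $t$, so using it as a guard yields a strongly monotone, branch-effective system whose coverability encodes halting.

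The routes differ in the concrete construction. The paper invokes a family of systems from~\cite{finkel_well-structured_2004}: the state space is $\mathbb{N}^2$ with the product order, and two deterministic functions $g(n,k)=(n{+}1,k)$ and $f_j(n,k)$ (which jumps the second coordinate to $n{+}k$ unless $k=0$ and $TM_j$ runs more than $n$ steps) generate the transitions; one then asks whether $(1,1)$ is coverable from $(0,0)$. Your construction is more elementary and self-contained: you work on $\mathbb{N}\sqcup\{\mathsf{goal}\}$ with $\mathsf{goal}$ isolated in the order, a single timer increment, and a guarded jump to $\mathsf{goal}$. What your approach buys is simplicity and independence from an external reference; what the paper's approach buys is that it stays within the ``canonical'' \wqo $(\mathbb{N}^2,\leq)$ with no artificially incomparable element, and reuses a construction already in the literature. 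Your closing remark explaining why the effective-pred-basis hypothesis of Theorem~\ref{thm:finitebasiscov} must fail for $\system_M$ is a nice sanity check that the paper does not make explicit.
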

		
		\begin{proof}
			We use the family of systems given in the proof of Theorem 4.3 \cite{finkel_well-structured_2004}. Let us denote by $\mathit{TM}_j$ the $j^{th}$ Turing Machine in some enumeration. Consider the family of functions $f_j :  \mathbb{N}^2 \srun  \mathbb{N}^2$ defined by $f_j(n,k)=(n,0) $ if $k=0$ and $TM_j$ runs for more than $n$ steps, else  $f_j(n,k)=(n,n+k)$.

			Let $g :  \mathbb{N}^2 \srun  \mathbb{N}^2$ be the function defined by $g(n,k)=(n+1,k)$. The transition system $\system_j$ induced by the two functions $f_j$ and $g$ is strongly monotone hence it is also branch-monotone. Moreover, system $\system_j$ is branch-effective and we observe that $\Post$ is computable and $\leq$ is \wqo. Now, we have $(1,1)$ is coverable from $(0,0)$ in $\system_j$ \ifff $\mathit{TM}_j$ halts. This proves that coverability is undecidable.
		\end{proof}
		
		\smallskip
		\noindent\textbf{Decidability of coverability.}
		We show that coverability is decidable for a class of systems with a \wqo but with a restricted notion of monotony.
		As for example in \cite{blondin_handlin_2018}, we define the $\Cover$ of a system $\system$ from a state $x$ by: $\Cover_\system(x)=\downclose \Post_\system^*(x)$. Let us consider the following monotony condition.

		\begin{defi} 
			Let  $\system = (X, \Sigma, \srun, \leq, x_0)$ be a system. We say that $\system$ is \emph{cover-monotone} (resp. strongly cover-monotone) if, for all $y_1 \in \Cover_\system(x_0)$ and for all $x_1,x_2 \in X$ such that $x_1 \leq y_1$ and $x_1 \srun x_2 $, there exists a state $y_2 \in X$ such that $y_1 \srunp{*} y_2$ (resp. $y_1 \srun y_2$) and $x_2 \leq y_2$. 
			
		\end{defi}
		
		Let us emphasise that cover-monotony of a system $\system = (X, \Sigma, \srun, \leq, x_0)$ is a property that depends on the initial state $x_0$ while the usual monotony does not depend on any initial state (see Figure~\ref{fig:covx-notcovy}).

		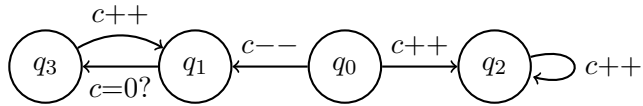
\begin{figure}[b]
			\centering
			\begin{tikzpicture}[->, auto, thick]
				\node[state] (a0) []{$q_0$};
				\node[state] (a1) [left= of a0]{$q_1$};
				\node[state] (a2) [right= of a0]{$q_2$};
				\node[state] (a3) [left= of a1]{$q_3$};
				
				\path[->]
				(a1) edge node[] {$\counterc\ztest$} (a3)
				(a3) edge[bend left] node[] {$\counterc\incr$} (a1)
				(a0) edge node[swap] {$\counterc\decr$} (a1)
				(a0) edge node[] {$\counterc\incr$} (a2)
				(a2) edge [loop right] node[swap] {$\counterc\incr$} (a2)
				;
			\end{tikzpicture}
			\caption{Machine $\machine_8$ is cover-monotone. However, if we modify the system such that the initial state $(q_0, 1)$, then it is not cover-monotone.}
			\label{fig:covx-notcovy}
		\end{figure}

	\begin{rem}
		The strong cover-monotony property is not trivially decidable for general models while (usual) strong-monotony is decidable for many powerful models like FIFO machines and counter machines. However, this notion is still of theoretical interest, as it shows that we can relax the general monotony condition.
	\end{rem}

	However, there is a link between general monotony and cover-monotony.
	
	\begin{restatable}{prop}{CoverxComp} \label{prop:CoverxComp}
		A system $\system = (X, \Sigma, \srun, \leq)$ is monotone \ifff for all $x_0\in X$,  \break $(X, \Sigma, \srun, \leq, x_0)$ is cover-monotone.
	\end{restatable}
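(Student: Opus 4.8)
The plan is to prove the two implications separately. Both amount to unfolding the definitions of monotony and cover-monotony, and the only genuine idea is a careful choice of initial state in the right-to-left direction; I expect the rest to be routine.

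For the direction from monotony to cover-monotony, I would fix an arbitrary $x_0 \in X$ and observe that cover-monotony is simply a restriction of general monotony to a subset of states. Concretely, suppose $y_1 \in Cover_\system(x_0)$ and $x_1, x_2 \in X$ satisfy $x_1 \leq y_1$ and $x_1 \srun x_2$. These are a special case of the hypotheses of general monotony (which quantifies over \emph{all} $y_1 \in X$, not only those in $Cover_\system(x_0)$). Applying monotony with the instantiation $x := x_1$, $y := y_1$, $x' := x_2$ immediately yields a state $y_2$ with $x_2 \leq y_2$ and $y_1 \srunp{*} y_2$, which is exactly the conclusion required for cover-monotony. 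Since $x_0$ was arbitrary, $(X, \Sigma, \srun, \leq, x_0)$ is cover-monotone for every $x_0$.

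The converse is the interesting direction, and here the key step is to exploit the freedom to pick the initial state. Given arbitrary $x, y, x' \in X$ with $x \leq y$ and $x \srun x'$, I must produce $y'$ with $x' \leq y'$ and $y \srunp{*} y'$. The trick is to instantiate the cover-monotony hypothesis at the initial state $x_0 := y$. Because $\srunp{*}$ is reflexive, $y \in \Post_\system^*(y)$, and hence $y \in \downclose \Post_\system^*(y) = Cover_\system(y)$. Now I apply cover-monotony of the system $(X, \Sigma, \srun, \leq, y)$ with $y_1 := y$, $x_1 := x$, $x_2 := x'$: the hypotheses $x \leq y$ (i.e.\ $x_1 \leq y_1$) and $x \srun x'$ (i.e.\ $x_1 \srun x_2$) hold, so there exists $y_2$ with $y \srunp{*} y_2$ and $x' \leq y_2$. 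Setting $y' := y_2$ gives precisely the conclusion of general monotony. The main (and only) obstacle is to recognize that the natural choice $x_0 = y$ places $y$ in its own cover set by reflexivity of $\srunp{*}$; once this observation is made, both directions are direct unfoldings of the definitions, with no quantifier or effectivity complications, since cover-monotony is assumed for every initial state.
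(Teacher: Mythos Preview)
Your proposal is correct and follows essentially the same approach as the paper: the forward direction is immediate, and for the converse both you and the paper instantiate the initial state at $y$ (the paper writes $y_1$) and then apply cover-monotony. Your write-up is in fact slightly cleaner, since you correctly identify that the relevant membership is $y \in Cover_\system(y)$ (by reflexivity of $\srunp{*}$), whereas the paper somewhat misleadingly emphasizes $x_1 \in Cover_\system(y_1)$; but the argument is the same.
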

	\begin{proof}
		Every monotone system is trivially cover-monotone for all $x_0 \in X$. 
		Conversely, consider a system $\system = (X, \Sigma, \srun, \leq)$ such that $(X, \Sigma, \srun, \leq, x_0)$ is cover-monotone for all $x_0\in X$. Let us consider $x_1,y_1,x_2 \in X$ such that $x_1 \leq y_1$ and $x_1 \srun x_2$. In order to show that $\system$ is monotone, we need to prove that there exists $y_2 $ such that $y_1 \srunp{*} y_2$ and $x_2 \leq y_2$. 

		Since $x_1 \leq y_1$ (by hypothesis), $x_1 \in \Cover(y_1)$.  By the hypothesis,
		$(X, \Sigma, \srun, \leq, y_1)$  is cover-monotone, hence
		there exists $y_2$ such that $y_1 \srunp{*} y_2$ with $x_2 \leq y_2$. Hence, $\system$ is monotone.
	\end{proof}
	
	We may now define cover-WSTS as follows.
	
	\begin{defi}
		A \emph{cover-WSTS} is a finitely branching cover-monotone system $\system = (X, \Sigma,$ $\srun, \leq, x_0)$ such that $(X,\leq)$ is \wqo.
	\end{defi}

	For cover-WSTS, the backward algorithm fails. This is once again because the presence of a single zero test removes the property of the set being upward-closed. But we will now show that the forward coverability approach is possible.

	\begin{restatable}{prop}{invariantPost} \label{prop:invariant-post}
		Given a system $\system = (X, \Sigma, \srun{}, \leq, x_0)$ and a downward-closed set $D \subseteq X$ such that $\downclose \Post(D) \subseteq D$, then we have the inclusion $\downclose \Post^*(D) \subseteq D$.
	\end{restatable}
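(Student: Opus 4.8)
The plan is to reduce the statement to a routine induction on the number of transition steps, exploiting two elementary facts: that $\Post$ is monotone with respect to set inclusion (if $A \subseteq B$ then $\Post(A) \subseteq \Post(B)$, immediate from $\Post(A) = \bigcup_{x \in A}\Post(x)$), and that $\Post^*$ decomposes as the union of its finite iterates. Concretely, setting $\Post^0(D) \defeq D$ and $\Post^{n+1}(D) \defeq \Post(\Post^n(D))$, we have $\Post^*(D) = \bigcup_{n \geq 0}\Post^n(D)$, so it suffices to bound each iterate inside $D$.

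First I would prove, by induction on $n$, that $\Post^n(D) \subseteq D$ for every $n \geq 0$. The base case $n=0$ is immediate since $\Post^0(D) = D$. For the inductive step, the induction hypothesis gives $\Post^n(D) \subseteq D$, and monotonicity of $\Post$ then yields $\Post^{n+1}(D) = \Post(\Post^n(D)) \subseteq \Post(D)$. Since $\Post(D) \subseteq \downclose \Post(D) \subseteq D$ by the hypothesis of the proposition, we obtain $\Post^{n+1}(D) \subseteq D$, closing the induction. Taking the union over $n$ gives $\Post^*(D) = \bigcup_n \Post^n(D) \subseteq D$, and finally, since $\Post^*(D) \subseteq D$ and $D$ is downward-closed, $\downclose \Post^*(D) \subseteq \downclose D = D$, which is the desired inclusion.

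There is essentially no obstacle here; the only point worth flagging is the correct choice of inductive invariant. One might be tempted to carry the stronger statement $\downclose \Post^n(D) \subseteq D$ through the induction, but the single-step hypothesis $\downclose \Post(D) \subseteq D$ bounds only \emph{one} application of $\Post$ to $D$ itself, not to an arbitrary subset, so that formulation would not obviously survive the inductive step. Propagating the plain inclusion $\Post^n(D) \subseteq D$ and invoking monotonicity of $\Post$ at each step is what makes the argument go through, with the downward closure applied only once, at the very end. Note also that this proof uses neither the ordering being a \wqo nor any monotony assumption on $\system$; it is a purely set-theoretic consequence of $D$ being an invariant under $\downclose \Post(\cdot)$, which is exactly why it can later serve as the termination certificate for the forward coverability procedure on cover-WSTS.
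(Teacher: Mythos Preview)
Your proof is correct and follows essentially the same approach as the paper: induction on the number of steps to establish $\Post^n(D) \subseteq D$, then take the union over $n$ and apply the downward closure once at the end using that $D$ is downward-closed. As a minor aside, the paper actually carries the invariant $\downclose \Post^k(D) \subseteq D$ that you caution against in your third paragraph, and it survives the inductive step just fine (since $Y \subseteq D$ implies $\downclose \Post(Y) \subseteq \downclose \Post(D) \subseteq D$ by monotonicity of $\Post$); the two formulations are interchangeable here.
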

	\begin{proof}
		We prove the following claim by induction first.
		\smallskip
		\noindent \emph{Claim:~ Given a system $\system = (X, \Sigma, \srun, \leq, x_0)$, for every downward-closed set $D \subseteq X$ such that $\downclose \Post(D) \subseteq D$, the following inclusion holds: $\downclose \Post^k(D) \subseteq D$, for all $k \geq 1$.}
		\smallskip
		\noindent Base case: For $k=1$, this is the hypothesis. 
		\smallskip
		\noindent Inductive step: The inductive hypothesis asserts that: Given a system $\system =  (X, \Sigma, \srun, \leq, x_0)$, for every downward-closed set $D \subseteq X$ such that $\downclose \Post(D) \subseteq D$, the following inclusion holds: $\downclose \Post^k(D) \subseteq D$. 
		\smallskip
		\noindent We now prove that it is also true for $k+1$.
		
		\noindent Let $ Z = \Post^k(D)$. Since $Z \subseteq \downclose \Post^k(D)$, we know that $Z \subseteq D$ (by hypothesis). Furthermore, for any subset $Y \subseteq D$, $\downclose \Post(Y)$ is also a subset of $D$. Therefore, $\downclose \Post(Z) \subseteq D$. Hence, we deduce $\downclose \Post( \Post^k(D)) \subseteq D$, i.e. $\downclose \Post^{k+1}(D) \subseteq D$. Hence, we have proved the claim by induction.
		
		From the above claim, we know that $\downclose \Post^k(D) \subseteq D$ for all $D \subseteq X$.
		Note also that $\Post^*(D) = D \cup \bigcup\limits_{k \geq 1} \Post^k(D)$. Therefore, $\Post^*(D) \subseteq D$, and finally since $D$ is  \break downward-closed,
		$\downclose \Post^*(D) \subseteq D$.
	\end{proof}

	Let us define a particular instance of the coverability problem in which we verify if a state is coverable from the initial state.

	\begin{defi}
		Given a system $\system= (X, \Sigma, \srun, \leq, x_0)$. \emph{The $x_0$-coverability problem} is: Given a state $y \in X$, do we have $y \in \downclose \Post_\system^*(x_0)$\,?
	
	\end{defi}
	
	We show that $x_0$-coverability is decidable for cover-WSTS:
	
	\begin{thm}
	
		Let  $\system = (X, \Sigma, \srun, \leq, x_0)$ be an ideally effective cover-WSTS such that $\Post$ is computable. Then, the $x_0$-coverability problem is decidable.
	
	\end{thm}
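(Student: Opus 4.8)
The plan is to decide $x_0$-coverability by the \emph{forward} method, running the two semi-decision procedures of Section~\ref{subsec:decprobs} (Procedure~\ref{proc:cov} for coverability and Procedure~\ref{proc:noncov} for non-coverability) in parallel. Since $(X,\leq)$ is a \wqo it has no infinite antichain, so by Proposition~\ref{erdos} every downward-closed subset of $X$ is a finite union of ideals; ideal effectiveness then lets me represent each manipulated downward-closed set by such a decomposition and carry out, effectively, the operations used by both procedures: membership of $\downclose{x_0}$ and $\downclose{y}$, inclusion of downward-closed sets (via decidable ideal inclusion), and the image $\downclose \Post(I)$ as a finite union of ideals from an ideal $I$ (together with computability of $\Post$). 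Thus the two procedures are effective; it remains to argue that together they are correct, i.e.\ that exactly one of them halts, with the correct answer.

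The key step -- and the only place where cover-monotony is used -- is to show that $Cover_\system(x_0) = \downclose \Post_\system^*(x_0)$ is an \emph{inductive invariant}, namely $\downclose \Post(Cover_\system(x_0)) \subseteq Cover_\system(x_0)$. To see this, take $x_1 \in Cover_\system(x_0)$ and $x_1 \srun x_2$; by definition there is $y_1 \in \Post_\system^*(x_0)$ with $x_1 \leq y_1$, and this $y_1$ lies in $Cover_\system(x_0)$. Cover-monotony applied to $y_1$ yields a state $y_2$ with $y_1 \srunp{*} y_2$ and $x_2 \leq y_2$; since $y_1$ is reachable from $x_0$, so is $y_2$, whence $x_2 \in \downclose \Post_\system^*(x_0) = Cover_\system(x_0)$. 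As $Cover_\system(x_0)$ is downward-closed, this gives the claimed inclusion.

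With this invariant in hand the correctness of both procedures follows. Procedure~\ref{proc:cov} computes an increasing sequence whose union is $\downclose \Post_\system^*(x_0)$: the invariant keeps every set it builds inside $Cover_\system(x_0)$, while the reachability of $x_0$'s iterated successors shows the union reaches all of $Cover_\system(x_0)$, so the procedure halts exactly when $y$ is coverable. For Procedure~\ref{proc:noncov}, soundness of a discovered certificate $D_i$ (downward-closed, with $\downclose \Post(D_i) \subseteq D_i$, $x_0 \in D_i$ and $y \notin D_i$) follows from Proposition~\ref{prop:invariant-post}: it yields $\downclose \Post^*(D_i) \subseteq D_i$, hence $\downclose \Post_\system^*(x_0) \subseteq D_i$, so $y$ is not coverable; completeness follows because, when $y$ is not coverable, $D = Cover_\system(x_0)$ is itself such a certificate by the invariant just proved, and is eventually enumerated. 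Since coverability and non-coverability are mutually exclusive and each procedure is complete for its case, running both in parallel decides $x_0$-coverability.

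The main obstacle is precisely the invariant lemma. Under full monotony the inclusion $\downclose \Post(\downclose \Post_\system^*(x_0)) \subseteq \downclose \Post_\system^*(x_0)$ is immediate, but under the weaker cover-monotony one must observe that it suffices to apply the chase property only from the \emph{reachable} witness $y_1 \in \Post_\system^*(x_0)$ dominating $x_1$, rather than from an arbitrary dominating state; the counterexamples $\machine_6$ and $\machine_7$ show that no stronger closure is available in general, which is exactly why the backward approach and the naive forward approach fail. Establishing this invariant, and checking that the reasoning is compatible with the ideal-based effective manipulation of downward-closed sets, is the crux; everything else is bookkeeping already supplied by ideal effectiveness and by Propositions~\ref{erdos} and~\ref{prop:invariant-post}.
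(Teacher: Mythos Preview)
Your proof is correct and follows essentially the same architecture as the paper: run a positive and a negative semi-decision procedure in parallel, use Proposition~\ref{prop:invariant-post} for soundness of the non-coverability certificate, and prove the key invariant $\downclose \Post(Cover_\system(x_0)) \subseteq Cover_\system(x_0)$ from cover-monotony for completeness of Procedure~\ref{proc:noncov}. Your invariant argument is the same as the paper's.

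There is one genuine difference worth noting. For the positive side, the paper explicitly avoids Procedure~\ref{proc:cov}, remarking that without full monotony one may have $\downclose \Post^*(x_0) \neq \downclose \Post^*(\downclose x_0)$, and instead falls back on the cruder enumeration of $\Post^k(x_0)$ for $k=0,1,2,\ldots$, which is effective because $\system$ is finitely branching and $\Post$ is computable. You instead keep Procedure~\ref{proc:cov} as written and justify it via the invariant: since each iterate $D$ satisfies $D \subseteq Cover_\system(x_0)$ (by induction, using $\downclose \Post(Cover_\system(x_0)) \subseteq Cover_\system(x_0)$) and conversely every reachable state eventually appears in some $D$, the union of the $D$'s is exactly $Cover_\system(x_0)$. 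This is valid, and in fact shows that for cover-WSTS the equality $\downclose \Post^*(x_0) = \downclose \Post^*(\downclose x_0)$ \emph{does} hold; the paper's caveat applies to branch-WSTS (e.g.\ $\machine_7$, which is not cover-monotone) rather than to cover-WSTS. Your route is slightly more uniform (both procedures taken off the shelf), while the paper's route needs the invariant only once and uses a more elementary positive search; either is fine.
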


	\begin{proof}
		Consider a system $\system = (X, \Sigma, \srun, \leq, x_0)$ that is cover-WSTS, and let us consider a state $y \in X$. 
		To find a certificate of coverability (if it exists),
		we cannot use Procedure~\ref{proc:cov} since general monotony is not satisfied and then, in general, $\downclose \Post^*(x_0) \neq \downclose \Post^*(\downclose x_0)$ but we can use a variation of Procedure~\ref{proc:cov}, where we iteratively compute $x_0$, $\Post(x_0)$, $\Post(\Post(x_0))$, and so on, and at each step check if $y \leq x$ for some $x$ in the computed set. This can be done because $\system$ is finitely branching and the sets $\Post^k(x_0)$ are computable for all $k \geq 0$. Hence, if there exists a state that can cover $y$ reachable from $x_0$, it will eventually be found.

		Now, let us prove that Procedure~\ref{proc:noncov} terminates for input $y$ \ifff $y$ is not coverable from $x_0$.
	
		If Procedure~\ref{proc:noncov} terminates, then at some point, the \texttt{while} condition is not satisfied and there exists a set $D$ such that $y \notin D$ and $x_0 \in D$ and $\downclose \Post(D) \subseteq D$. Moreover, $\downclose \Post^*(I) \subseteq I$ for every inductive invariant $I$ (see Proposition~\ref{prop:invariant-post}). Hence, $\Cover_\system(x_0) \subseteq D$, therefore, since $y \notin D$, we deduce that $y \not\in \Cover_\system(x_0)$ and then $y$ is not coverable from $x_0$.
		
		Note that every downward-closed subset of $X$ decomposes into finitely many
		ideals since $(X, \leq)$ is \wqo. Moreover, since $\system$ is ideally effective, ideals of $X$ may be effectively enumerated. By \cite{blondin_handlin_2018} and \cite{blondin_well_2017}, for ideally effective systems, testing of inclusion of downward-closed sets, and checking the membership of a state in a downward-closed set, are both decidable.

		To show the opposite direction, let us prove that if $y$ is not coverable from $x_0$, the procedure terminates. It suffices to prove that $\Cover_\system(x_0)$ is an inductive invariant. Indeed, this implies that $\Cover_\system(x_0)$ is eventually computed by Procedure~\ref{proc:noncov} when $y$ is not coverable from $x_0$. 
		
		Let us show $\downclose \Post(\Cover_\system(x_0)) \subseteq \Cover_\system(x_0)$. Let $b \in \downclose \Post(\Cover_\system(x_0))$. Then, there exists $a, a', b'$ such that $x_0 \srunp{*} a'$, $a' \geq a$, $a \srun b'$ and $b' \geq b$. Furthermore, $a', a \in \Cover(x_0)$. Hence, by cover-monotony, there exists $b'' \geq b'$ such that $a' \srunp{*} b''$. 
		Therefore, $x_0 \srunp{*} b''$ and $b'' \geq b' \geq b$, hence, $b \in \Cover_\system(x_0)$. 
		Hence, the $x_0$-coverability problem is decidable.
	\end{proof}
	On the other hand, we cannot decide the (general) coverability problem for this class of systems:

	\begin{thm} 
		The coverability problem is undecidable for cover-WSTS.
	\end{thm}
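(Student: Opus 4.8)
The plan is to reduce from the control-state reachability problem for Minsky (two-counter) machines, which is undecidable. The crucial observation is that cover-monotony is a property relative to the designated initial state $x_0$, whereas the coverability problem takes the source state as part of its input and so may query coverability from a state $x$ that is \emph{not} $x_0$. It is exactly on this gap that the undecidability will live: from a source other than $x_0$, a cover-WSTS carries no monotony guarantee at all.

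Given a Minsky machine $M$ with control-states $Q$, initial control-state $q_{\mathit{init}}$ and a distinguished target control-state $q_f$, I would take the induced counter-machine transition system on the state space $(Q \cup \{q_\bot\}) \times \nat^2$, where $q_\bot$ is a fresh control-state with no outgoing transitions, ordered by $(q,\counterval) \leq (q',\counterval')$ iff $q = q'$ and $\counterval \leq \counterval'$ component-wise. This order is a \wqo by Dickson's lemma \cite{dickson_finiteness_1913} together with the closure of \wqos under finite disjoint unions; the system is finitely branching, and, being a counter machine, it is ideally effective with $\Post$ computable. I would then fix the designated initial state to be $x_0 = (q_\bot, (0,0))$.

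The main point is to verify cover-monotony with respect to this $x_0$. Since $q_\bot$ has no successors, $\Post^*(x_0) = \{x_0\}$, and hence $Cover_\system(x_0) = \downclose\{x_0\} = \{x_0\}$, because $(0,0)$ is the least counter valuation and $q_\bot$ is a separate control-state. Consequently the only $y_1 \in Cover_\system(x_0)$ is $x_0$ itself, and the only $x_1 \leq y_1$ is $x_1 = x_0$; as $x_0$ has no outgoing transition, the cover-monotony requirement holds vacuously. Thus $\system$ is an ideally effective cover-WSTS \emph{even though} the embedded Minsky machine grossly violates ordinary monotony through its zero-tests — those zero-test configurations simply never enter $Cover_\system(x_0)$, and so are exempt from the monotony condition. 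I expect this to be the only genuine obstacle: one must let Turing-complete dynamics coexist with cover-monotony, and the construction resolves this by keeping the designated $x_0$ isolated and minimal so that $Cover_\system(x_0)$ is trivial.

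Finally, I would observe that $(q_f,(0,0))$ is coverable from $(q_{\mathit{init}},(0,0))$ iff some reachable configuration has the form $(q_f,\counterval)$ with $\counterval \geq (0,0)$ (which is automatic), i.e. iff $M$ ever reaches control-state $q_f$. Since $(q_{\mathit{init}},(0,0)) \neq x_0$, this is a coverability query from a source distinct from the designated initial state, and deciding it would decide control-state reachability for $M$, a contradiction. As a by-product, because the constructed systems are ideally effective with computable $\Post$, this sharpens the contrast with the preceding theorem: the decidability of $x_0$-coverability genuinely relies on querying from $x_0$, the one state where cover-monotony is guaranteed, rather than on any additional effectivity hypothesis.
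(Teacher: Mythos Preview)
Your proposal is correct and follows essentially the same approach as the paper: both add a fresh isolated sink control-state, declare it the designated initial state $x_0$ so that $Cover_\system(x_0)$ is trivial and cover-monotony holds vacuously, and then observe that coverability from the \emph{original} counter-machine initial state (which is not $x_0$) is as hard as control-state reachability in a Minsky machine. Your write-up is in fact slightly more explicit than the paper's in verifying the \wqo, finitely-branching, and effectivity hypotheses.
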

	
	\begin{proof}
		Given any counter machine $\countermachine =(Q,\counterset , T, q_0)$, let $\system_\countermachine = (X,A_\countermachine,\srun,\leq,  x_0)$ be its transition system equipped with the natural order on counters. We can construct a system $\system' = (X', A_\countermachine, \srun', \leq, x'_0)$ such that $\system'$ is  cover-monotone, and any state $x \in X$ is coverable \ifff it is also coverable in $X'$. The construction is as follows. We add a new control-state $q$ from the initial state in the counter machine ($q_0$) reachable via an empty transition, therefore, $X' = X \cup \{(q,0) \}$. This new control-state is a sink state, i.e. there are no transitions from $q$ to any other control-state (except itself). Moreover, we let $x'_0 = (q, 0)$. Note that $\system'$ is cover-monotone, because there is no state reachable from $x'_0$, hence, the property is vacuously satisfied. However, for all other states, as we leave the system unchanged, we see that a state $x$ is coverable in $\system$ by a state $y$ \ifff it is coverable in $\system'$. Hence, coverability for counter machines reduces to the coverability problem for cover-WSTS, and coverability is therefore undecidable for cover-WSTS.
	\end{proof}
	
	\section{Conclusion}\label{sec:concl}
	
	We have tried to relax the notions of monotony and of the wellness of the quasi-ordering which were traditionally used to define a WSTS. We observed that we do not need the wellness of the quasi-ordering or monotony between all states. By relaxing the conditions to only states reachable from one another, thus defining what we call \emph{branch-WSTS}, we are still able to decide non-termination and boundedness. Moreover, some systems that have been studied recently have been shown to belong to this class, which adds interest to this relaxation. Furthermore, we have used the ideas of branch-monotony to hold along a single run, in order to provide an algorithm that can decide non-termination for a larger class of counter machines and branch-\wqo FIFO machines.

	However, as coverability is undecidable for branch-WSTS, the notion of coverability seems to require a stricter condition than what we define for branch-WSTS. This leads us to introduce a different class of systems, incomparable to branch-WSTS, which we call \emph{cover-WSTS}. These systems relax the condition of monotony to only states within the coverability set, while still retaining the decidability of a restricted form of coverability.
	
	As future work, other systems that belong to these classes can be studied. Moreover, it would be interesting to precisely characterise the systems for which we can decide non-termination and boundedness along a single run. It would also be interesting to see if the branch-WSTS relaxation translates to better hope for usability of WSTS as a verification technique.
	
	\subsubsection*{Acknowledgements} We thank the reviewers for the detailed comments and suggestions. The work reported was carried out in the framework of ReLaX, UMI2000 (ENS Paris-Saclay, CNRS, Univ. Bordeaux, CMI, IMSc). It is partly supported by ANR FREDDA (ANR-17-CE40-0013) and ANR BRAVAS (ANR-17-CE40-0028). It is also partly supported by EPSRC STARDUST (EP/T014709/2).

	\bibliographystyle{alphaurl}
	\bibliography{bibliography}
\end{document}